\newcolumntype{C}{>{\centering\arraybackslash}X}
\DeclareMathOperator{\spn}{span}
\DeclareMathOperator{\Sym}{Sym}
\newtheorem{theorem}{Theorem}
\newtheorem*{theorem*}{Theorem}
\newtheorem{corollary}[theorem]{Corollary}
\newtheorem{lemma}[theorem]{Lemma}
\newtheorem{definition}[theorem]{Definition}
\newtheorem{proposition}[theorem]{Proposition}
\newcommand{\red}{\color{red}}
\begin{document}

\title{Generalized quantum Stein's lemma: Redeeming second law of resource theories}

\author{Hayata Yamasaki}
\email{hayata.yamasaki@gmail.com}
\affiliation{Department of Physics, Graduate School of Science, The Univerisity of Tokyo, 7-3-1 Hongo, Bunkyo-ku, Tokyo, 113-0033, Japan}
\author{Kohdai Kuroiwa}
\email{kkuroiwa@uwaterloo.ca}
\affiliation{Institute for Quantum Computing and Department of Combinatorics and Optimization, University of Waterloo, Ontario, Canada, N2L 3G1}
\affiliation{Perimeter Institute for Theoretical Physics, Ontario, Canada, N2L 2Y5}

\begin{abstract}
\textbf{{\red [Note: After the first version of this manuscript was uploaded, the authors of Ref.~\cite{berta2023gap} pointed out an issue about a part of the claims in the previous version of Refs.~\cite{10206734,10129917} used in our analysis. Due to this issue, the analysis in the previous version of this manuscript can no longer be considered complete proof of the generalized quantum Stein's lemma. This version is a temporal update to add this note. We are planning to update the manuscript further to explain the issue and what conditions we will additionally need to complete the proof of the generalized quantum Stein's lemma.]}}

\vspace{1cm}

The second law lies at the heart of thermodynamics, characterizing the convertibility of thermodynamic states by a single quantity, the entropy. A fundamental question in quantum information theory is whether one can formulate an analogous second law characterizing the convertibility of resources for quantum information processing. In 2008, a promising formulation was proposed, where quantum-resource convertibility is characterized by the optimal performance of a variant of another fundamental task in quantum information processing, quantum hypothesis testing. The core of this formulation was to prove a lemma that identifies a quantity indicating the optimal performance of this task---the generalized quantum Stein's lemma---to seek out a counterpart of the thermodynamic entropy in quantum information processing. However, in 2023, a logical gap was found in the existing proof of the generalized quantum Stein's lemma, throwing into question once again whether such a formulation is possible at all. In this work, we construct a proof of the generalized quantum Stein's lemma by developing alternative techniques to circumvent the logical gap of the existing analysis. With our proof, we redeem the formulation of quantum resource theories equipped with the second law as desired. These results affirmatively settle the fundamental question about the possibility of bridging the analogy between thermodynamics and quantum information theory.
\end{abstract}

\maketitle

\paragraph*{Introduction.}
The advantages of quantum information processing over conventional classical information processing arise from the intrinsic properties of quantum mechanics, such as entanglement and coherence~\cite{N4}.
Quantum resource theories (QRTs) emerge as an operational framework for studying the manipulation and quantification of these various types of quantum properties as resources for quantum information processing~\cite{Chitambar2018,Kuroiwa2020}.
A QRT is formulated by considering a restricted class of quantum operations called free operations.
Quantum states that can be obtained from any initial state by free operations are called free states.
Non-free states are considered to be resources to overcome the restriction on operations.

The establishment of such an operational framework has been a traditional and remarkably successful approach in physics.
Ever since the invention of the steam engine ignited the Industrial Revolution, energy resources have driven human evolution, and technological innovations have improved the efficiency of energy conversion for useful applications; however, physicists have known a universal theory that never changes no matter how further the technologies may advance---thermodynamics~\cite{carnot1890reflections,doi:10.1080/14786445608642141,Thomson_1853,LIEB19991,Lieb2004,10.1063/1.883034}.
Thermodynamics reveals fundamental limits on the efficiency of using energy resources under any possible operations within a few axioms that prohibit unphysical state conversions.
It is applicable not only to the steam engine as originally motivated but, for example, to the feasibility of chemical reactions~\cite{lewis1923thermodynamics,guggenheim1933modern} and the energy consumption of computation~\cite{5392446,meier2023energyconsumption}.
The universal applicability of thermodynamics stems from its axiomatic formulation, which holds regardless of the details of the physical realization of the operations.
At the heart of thermodynamics is the existence of a single real-valued function of the state, the entropy, which quantitatively indicates whether a state can be converted into another state under any possible adiabatic operations, as postulated by the \textit{second law of thermodynamics}~\cite{LIEB19991,Lieb2004,10.1063/1.883034}.
In the same way, the key goal of QRTs is to establish a universal framework to clarify quantitative understandings and fundamental limitations of the manipulation of quantum resources.

\begin{figure}
    \centering
    \includegraphics[width=3.4in]{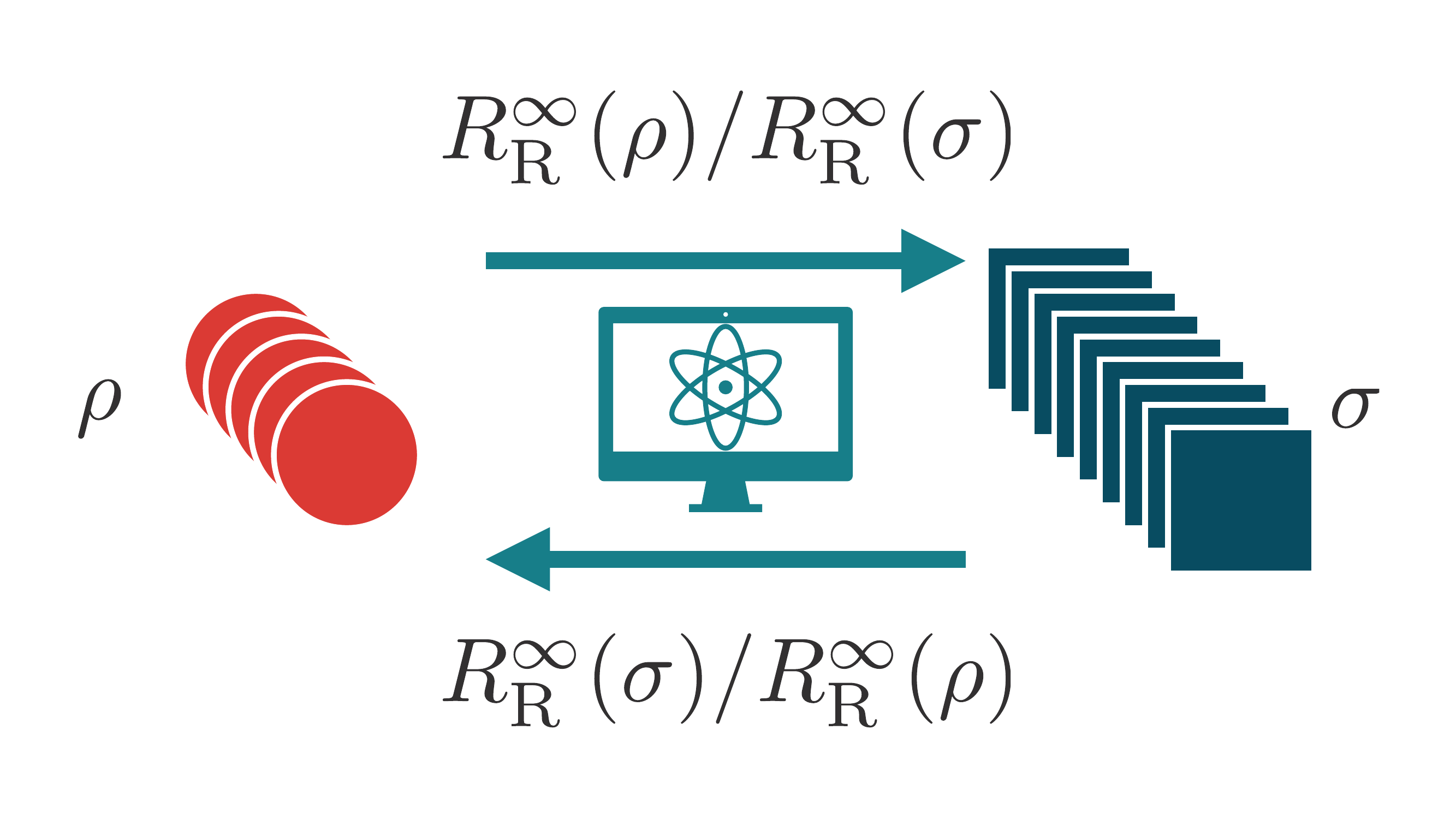}
    \caption{The second law of quantum resource theories (QRTs). Our main result, the proof of the generalized quantum Stein's lemma, leads to an axiomatic formulation of QRTs equipped with the second law in analogy with the second law of thermodynamics, where a single function, i.e., the regularized relative entropy of resource $R_\mathrm{R}^\infty$, characterizes the necessary and sufficient condition on the asymptotic convertibility between many copies of quantum resources, i.e., states $\rho$ (red circles) and $\sigma$ (blue squares), at the optimal rate as shown in the figure, as with the entropy in thermodynamics.}
    \label{fig:second_law}
\end{figure}

A fundamental task in QRTs is an asymptotic conversion of quantum states~\cite{Chitambar2018,Kuroiwa2020}.
In a spirit of information theory~\cite{6773024,cover2012elements},
the goal of this task is to convert many independently and identically distributed (IID) copies of state $\rho$ into as many copies of state $\sigma$ as possible, within a vanishing error, under a restricted class of operations considered for manipulating resources.
The maximum number $r(\rho\to\sigma)$ of copies of $\sigma$ obtained per $\rho$ is called the rate of the asymptotic conversion.
In view of the success of thermodynamics, it is also crucial to seek a universal formulation of QRTs equipped with an analogous \textit{second law of QRTs} (Fig.~\ref{fig:second_law}), i.e., with a single function of the quantum state offering a necessary and sufficient condition on the feasibility of asymptotic conversion at rate $r(\rho\to\sigma)$.
Despite thermodynamics and QRTs both taking operational approaches, the establishment of the analogous second law of QRTs is indeed challenging in general.
For example, a conventional way of introducing free operations in the theory of entanglement is based on a bottom-up approach, where a class of operations is defined by specifying what we can realize in the laboratories; in the bottom-up approach, local operations and classical communication (LOCC) are one of the conventional choices of free operations~\cite{Horodecki2009,chitambar2014everything}.
But indeed, it is known that the second law does not hold for entanglement theory under LOCC~\cite{PhysRevLett.86.5803}; as suggested by this example, it is by no means straightforward to formulate QRTs with the second law as long as one takes this bottom-up approach.

By contrast, as in the second law of thermodynamics formulated for the axiomatically defined class of adiabatic operations, it should also be promising to pursue a more universal formulation of QRTs with the second law from the first principles based on the axiomatic approach, where the restriction on the operations is specified by what cannot be done rather than what can be done.
In 2008, Refs.~\cite{Brand_o_2008,brandao2010reversible,Brandao2010} made substantial progress toward establishing such a second law in entanglement theory, which was later extended to more general QRTs in Ref.~\cite{Brandao2015}.
Their crucial observation was that the second law of QRTs may hold under an axiomatically defined class of quantum operations that should not generate resources from free states up to an asymptotically vanishing amount. 
Under this class of operations, a nontrivial bridge was built between the asymptotic conversion of quantum states and a variant of another fundamental task in quantum information theory, quantum hypothesis testing~\cite{hiai1991proper,887855}.
The goal of this task is to distinguish $N$ IID copies of quantum state $\rho^{\otimes N}$ from any state $\sigma_N$ in the set of free states that may be in a non-IID form.
A substantial challenge arises from the non-IIDness of $\sigma_N$, but Ref.~\cite{Brandao2010} tried to develop a technique for addressing this non-IIDness, to establish a lemma characterizing the optimal performance of this task, which is called the generalized quantum Stein's lemma.
If the generalized quantum Stein's lemma holds true, then QRTs equipped with the second law can be formulated~\cite{Brand_o_2008,brandao2010reversible,Brandao2010,Brandao2015}.
However, in 2023, a logical gap was found in the original analysis of the generalized quantum Stein's lemma in Ref.~\cite{Brandao2010}, as described in Refs.~\cite{berta2023gap,Berta2023,fang2022ultimate}.
As a result, the consequences of the generalized quantum Stein's lemma became no longer considered valid; most critically, the first---and so far the only known---general formulation of QRTs equipped with the second law has lost its validity, reopening the big question in quantum information theory~\cite{krueger2005open} regarding whether such a framework can be constructed at all~\cite{berta2023gap,Berta2023}.

In this work, we construct a proof of the generalized quantum Stein's lemma and thus redeem its consequences including those on the establishment of the second law of QRTs\@.
Our proof circumvents the gap of the existing analysis of the generalized quantum Stein's lemma in Ref.~\cite{Brandao2010} by developing alternative techniques for addressing the non-IIDness using a new continuity bound of the quantum relative entropy recently developed by Refs.~\cite{10206734,10129917}.
Below, we will introduce the framework of QRTs and present our main results on the generalized quantum Stein's lemma in this framework.
We will also discuss the consequences of these results to redeem the axiomatic formulation of QRTs equipped with the second law.

\paragraph*{Framework of QRTs.}
We present the framework of QRTs studied in this work based on the general framework in Ref.~\cite{Kuroiwa2020}.
We represent a quantum system by a finite-dimensional complex Hilbert space $\mathcal{H}=\mathbb{C}^d$ for some finite $d$.
The set of quantum states, i.e., positive semidefinite operators with unit trace, of the system $\mathcal{H}$ is denoted by $\mathcal{D}(\mathcal{H})$.
The identity operator is denoted by $\mathds{1}$.
A composite system is represented in terms of the tensor product; that is, for any $N\in\mathbb{N}$ with $\mathbb{N}\coloneqq\{1,2,\ldots\}$, a system composed of $N$ subsystems $\mathcal{H}$ is represented by $\mathcal{H}^{\otimes N}$.
For a state $\rho\in\mathcal{D}\qty(\mathcal{H}^{\otimes N})$ and any $n\in\{1,\ldots,N\}$, the partial trace over the $n$th subsystem is denoted by $\Tr_n\qty[\rho]$.
The set of quantum operations from an input system $\mathcal{H}_\mathrm{in}$ to an output system $\mathcal{H}_\mathrm{out}$ is represented by that of completely positive and trace-preserving (CPTP) linear maps, denoted by $\mathcal{C}\qty(\mathcal{H}_\mathrm{in}\to\mathcal{H}_\mathrm{out})$.

A QRT is specified by choosing a set of free operations as a subset of quantum operations for each pair of input system $\mathcal{H}_\mathrm{in}$ and output system $\mathcal{H}_\mathrm{out}$~\cite{Chitambar2018,Kuroiwa2020}, which we write as $\mathcal{O}\qty(\mathcal{H}_\mathrm{in}\to\mathcal{H}_\mathrm{out})\subseteq\mathcal{C}\qty(\mathcal{H}_\mathrm{in}\to\mathcal{H}_\mathrm{out})$.
A free state is defined as a state that can be generated from any initial state by a free operation~\cite{Kuroiwa2020}; more formally, for any system $\mathcal{H}$, we write the set of free states of $\mathcal{H}$ as $\mathcal{F}(\mathcal{H})\coloneqq\{\sigma\in\mathcal{D}\qty(\mathcal{H}):\forall\mathcal{H}^\prime,\forall\rho\in\mathcal{D}\qty(\mathcal{H}^\prime),\exists\mathcal{E}\in\mathcal{O}\qty(\mathcal{H}^\prime\to\mathcal{H})~\text{such that}~\sigma=\mathcal{E}\qty(\rho)\}$.
Following Refs.~\cite{Brand_o_2008,brandao2010reversible,Brandao2010,Brandao2015}, we work on QRTs with their sets of free states satisfying the following properties.
\begin{enumerate}
    \item \label{p1:main}For each finite-dimensional system $\mathcal{H}$, $\mathcal{F}(\mathcal{H})$ is closed and convex.
    \item \label{p2:main}For each finite-dimensional system $\mathcal{H}$, $\mathcal{F}(\mathcal{H})$ contains a full-rank free state $\sigma_{\mathrm{full}}>0$.
    \item \label{p3:main}For each finite-dimensional system $\mathcal{H}$ and any $N\in\mathbb{N}$, if $\rho\in\mathcal{F}\qty(\mathcal{H}^{\otimes N+1})$, then $\Tr_n\qty[\rho]\in\mathcal{F}\qty(\mathcal{H}^{\otimes N})$ for every $n\in\{1,\ldots,N+1\}$.
    \item \label{p4:main}For each finite-dimensional system $\mathcal{H}$ and any $N\in\mathbb{N}$, if $\rho\in\mathcal{F}\qty(\mathcal{H})$, then $\rho^{\otimes N}\in\mathcal{F}\qty(\mathcal{H}^{\otimes N})$.
    \item \label{p5:main}For each finite-dimensional system $\mathcal{H}$ and any $N\in\mathbb{N}$, if $\rho\in\mathcal{F}\qty(\mathcal{H}^{\otimes N})$, then $U_\pi \rho U_\pi^\dag\in\mathcal{F}\qty(\mathcal{H}^{\otimes N})$ for any permutation $\pi\in S_N$, where $S_N$ is the symmetric group of degree $N$, and $U_\pi$ for $\pi\in S_N$ is the unitary operator representing the permutation of the state of the $N$ subsystems of $\mathcal{H}^{\otimes N}$ according to $\pi$.
\end{enumerate}

Representative examples of QRTs, such as those of entanglement, coherence, and magic states, satisfy these properties~\cite{Chitambar2018}, and Refs.~\cite{Brand_o_2008,brandao2010reversible} worked on the theory of entanglement with these properties.
The original analysis of the generalized quantum Stein's lemma in Ref.~\cite{Brandao2010} assumes a stronger version of Property~\ref{p4:main} by requiring that for any free states $\rho$ and $\sigma$, the product state $\rho\otimes\sigma$ should be free; however, our version of Property~\ref{p4:main} suffices for the proof, leading to broader applicability (see Methods for details).
Property~\ref{p2:main} is not explicitly mentioned in Ref.~\cite{Brandao2015} but is necessary for proving and using the generalized quantum Stein's lemma.

A non-free state $\rho\in\mathcal{D}(\mathcal{H})\setminus\mathcal{F}(\mathcal{H})$ is called a resource state.
To quantify the amount of resource of a state, we consider a family of real functions $R_\mathcal{H}$ of the state of every system $\mathcal{H}$.
For brevity, we may omit the subscript $\mathcal{H}$ of $R_\mathcal{H}$ to write $R$ if obvious from the context.
The function $R$ is called a resource measure if $R$ quantifies the amount of resource without contradicting an intuition that free operations cannot increase resources, which is a property called monotonicity; i.e., for any free operation $\mathcal{E}$ and any state $\rho$, it should hold that $R(\rho)\geq R(\mathcal{E}(\rho))$.
Various types of resource measures have been proposed. 
For example, the relative entropy of resource is defined as~\cite{Chitambar2018,Kuroiwa2020}
\begin{equation}
    R_\mathrm{R}\qty(\rho)\coloneqq\min_{\sigma\in\mathcal{F}\qty(\mathcal{H})}\qty{D\left(\rho\middle|\middle|\sigma\right)},
\end{equation}
where $D\left(\rho\middle|\middle|\sigma\right)\coloneqq\Tr[\rho\log_2\rho]-\Tr[\rho\log_2\sigma]$ is the quantum relative entropy.
Its variant, the regularized relative entropy of resources, is defined as~\cite{Chitambar2018,Kuroiwa2020}
\begin{align}
\label{eq:regularized_relative_entropy_of_resource}
    R_\mathrm{R}^\infty\qty(\rho)&\coloneqq\lim_{N\to\infty}\frac{R_\mathrm{R}\qty(\rho^{\otimes N})}{N}\\
    &=\lim_{N\to\infty}\min_{\sigma\in\mathcal{F}\qty(\mathcal{H}^{\otimes N})}\qty{\frac{D\left(\rho^{\otimes N}\middle|\middle|\sigma\right)}{N}}.
\end{align}
Another example is the generalized robustness of resource (also known as global robustness) defined as~\cite{Chitambar2018}
\begin{equation}
\label{eq:generalized_robustness}
    R_\mathrm{G}\qty(\rho)\coloneqq\min_{\tau\in\mathcal{D}\qty(\mathcal{H})}\qty{s\geq 0:\frac{\rho+s\tau}{1+s}\eqqcolon\sigma\in\mathcal{F}(\mathcal{H})}.
\end{equation}
All these functions serve as resource measures, satisfying the monotonicity as required~\cite{Chitambar2018}.

A fundamental task in QRTs is the asymptotic conversion of quantum states.
For two quantum systems $\mathcal{H}_\mathrm{in}$ and $\mathcal{H}_\mathrm{out}$, and two states $\rho\in\mathcal{D}(\mathcal{H}_\mathrm{in})$ and $\sigma\in\mathcal{D}(\mathcal{H}_\mathrm{out})$, 
the asymptotic conversion from $\rho$ to $\sigma$ is a task of converting many copies of $\rho$ into as many copies of $\sigma$ as possible by a sequence of operations $\mathcal{E}_1,\mathcal{E}_2,\ldots$ within an asymptotically vanishing error.
Under a class $\mathcal{O}$ of operations, the conversion rate is defined as~\cite{Kuroiwa2020}
\begin{align}
\label{eq:conversion_rate}
    &r_\mathcal{O}\qty(\rho\to\sigma)\coloneqq\nonumber\\
    &\sup\left\{r\geq 0:\exists\qty{\mathcal{E}_N\in\mathcal{O}\qty(\mathcal{H}_\mathrm{in}^{\otimes N}\to\mathcal{H}_\mathrm{out}^{\otimes \lceil rN\rceil})}_{N\in\mathbb{N}},\right.\nonumber\\
    &\qquad\left.\liminf_{N\to\infty}\left\|\mathcal{E}_N\qty(\rho^{\otimes N})-\sigma^{\otimes \lceil rN\rceil}\right\|_1=0\right\},
\end{align}
where $\lceil{}\cdots{}\rceil$ is the ceiling function, $\|\cdots\|_1$ is the trace norm, and $\sigma^{\otimes 0}=1$.
A fundamental question in QRTs is whether one can establish a general formulation with an appropriate choice of the class $\mathcal{O}$ of operations and a single resource measure $R$ so that the resource measures $R(\rho)$ and $R(\sigma)$ should characterize the convertibility at rate $r_\mathcal{O}\qty(\rho\to\sigma)$, which is the second law of QRTs\@.

\paragraph*{Main result: Generalized quantum Stein's lemma.}---
Our main result is the proof of the generalized quantum Stein's lemma characterizing the optimal performance of a variant of quantum hypothesis testing to be performed in the above framework of QRTs\@.
In this variant of quantum hypothesis testing, as introduced in Ref.~\cite{Brandao2010}, we are initially given a parameter $N\in\mathbb{N}$, a classical description of a state $\rho\in\mathcal{D}\qty(\mathcal{H})$, and an unknown quantum state of the system $\mathcal{H}^{\otimes N}$.
The task is to perform a two-outcome measurement by a positive operator-valued measure (POVM) $\{E_N,\mathds{1}-E_N\}$ on $\mathcal{H}^{\otimes N}$ ($0\leq E_N\leq\mathds{1}$) to distinguish the following two cases.
\begin{itemize}
    \item Null hypothesis: The given state is $N$ IID copies $\rho^{\otimes N}$ of $\rho$.
    \item Alternative hypothesis: The given state is some free state $\sigma\in\mathcal{F}\qty(\mathcal{H}^{\otimes N})$ in the set satisfying Properties~\ref{p1:main}--\ref{p5:main}, where $\sigma$ may be in a non-IID form over $\mathcal{H}^{\otimes N}$.
\end{itemize}
If the measurement outcome is $E_N$, we will conclude that the given state was $\rho^{\otimes N}$, and if $\mathds{1}-E_N$, then was some free state $\sigma$ in the set $\mathcal{F}\qty(\mathcal{H}^{\otimes N})$ of free states.
For this hypothesis testing, we define the following two types of errors.
\begin{itemize}
    \item Type I error: The mistaken conclusion that the given state was some free state $\sigma\in\mathcal{F}\qty(\mathcal{H}^{\otimes N})$ when it was $\rho^{\otimes N}$, which happens with probability $\alpha_N\coloneqq\Tr\qty[\qty(\mathds{1}-E_N)\rho^{\otimes N})]$.
    \item Type II error: The mistaken conclusion that the given state was $\rho^{\otimes N}$ when it was some free state $\sigma_N\in\mathcal{F}\qty(\mathcal{H}^{\otimes N})$, which happens with probability $\beta_N\coloneqq\max_{\sigma\in\mathcal{F}(\mathcal{H}^{\otimes N})}\qty{\Tr\qty[E_N\sigma]}$ in the worst case.
\end{itemize}

By choosing appropriate POVMs, we can suppress the type II error exponentially in $N$  while keeping the type I error vanishingly small. 
The generalized quantum Stein's lemma characterizes the optimal exponent of the type II error when we require that the type I error should asymptotically vanish as $N$ increases.
Note that by choosing $\mathcal{F}(\mathcal{H}^{\otimes N})=\{\sigma^{\otimes N}\}$, the generalized quantum Stein's lemma reduces to quantum Stein's lemma for quantum hypothesis testing in the conventional setting~\cite{hiai1991proper,887855,Brandao2010}.
\begin{theorem}[\label{thm:main_generalized_steins_lemma}Generalized quantum Stein's lemma]
Given any family $\mathcal{F}$ of sets of free states satisfying Properties~\ref{p1:main}--\ref{p5:main} and any state $\rho\in\mathcal{D}(\mathcal{H})$, for any $\epsilon>0$,
there exists a sequence $\qty{\qty{E_N,\mathds{1}-E_N}}_{N\in\mathbb{N}}$ of POVMs
achieving
\begin{equation}
    \lim_{N\to\infty}\Tr\qty[\qty(\mathds{1}-E_N)\rho^{\otimes N}]=0
\end{equation}
and for each $N\in\mathbb{N}$ 
\begin{equation}
    -\frac{\log_2\qty(\max_{\sigma\in\mathcal{F}\qty(\mathcal{H}^{\otimes N})}\qty{\Tr\qty[E_N\sigma]})}{N}\geq R_\mathrm{R}^\infty(\rho)-\epsilon.
\end{equation}
Conversely, for any sequence $\qty{\qty{E_N,\mathds{1}-E_N}}_{N\in\mathbb{N}}$ of POVMs\@,
if there exists $\epsilon>0$ such that
\begin{equation}
    \liminf_{N\to\infty}-\frac{\log_2\qty(\max_{\sigma\in\mathcal{F}\qty(\mathcal{H}^{\otimes N})}\qty{\Tr\qty[E_N\sigma]})}{N}\geq R_\mathrm{R}^\infty(\rho)+\epsilon,
\end{equation}
then it holds that
\begin{equation}
    \lim_{N\to\infty}\Tr\qty[\qty(\mathds{1}-E_N)\rho^{\otimes N}]=1.
\end{equation}
\end{theorem}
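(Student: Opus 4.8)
The plan is to prove the two halves of the statement separately --- the achievability of the exponent $R_\mathrm{R}^\infty(\rho)-\epsilon$ with vanishing type I error (the direct part) and the strong converse (if the exponent exceeds $R_\mathrm{R}^\infty(\rho)+\epsilon$ then the type I error tends to $1$) --- while sharing one initial reduction. Because $\Tr[E_N\sigma]$ is bilinear in the test $E_N$ and the free state $\sigma$, and both the set of admissible tests (with a fixed type I budget $\delta$) and $\mathcal{F}(\mathcal{H}^{\otimes N})$ are convex and compact (Property~\ref{p1:main}), Sion's minimax theorem lets me exchange the optimizations,
\[
\min_{E_N:\,\alpha_N\le\delta}\ \max_{\sigma\in\mathcal{F}(\mathcal{H}^{\otimes N})}\Tr[E_N\sigma]=\max_{\sigma\in\mathcal{F}(\mathcal{H}^{\otimes N})}\ \min_{E_N:\,\alpha_N\le\delta}\Tr[E_N\sigma],
\]
so the optimal worst-case type II error equals $\max_{\sigma}2^{-D_H^{\delta}(\rho^{\otimes N}\|\sigma)}$, where $D_H^{\delta}$ is the hypothesis-testing relative entropy. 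Data processing for a single pair gives $D_H^{\delta}(\rho^{\otimes N}\|\sigma)\le(1-\delta)^{-1}\big(D(\rho^{\otimes N}\|\sigma)+1\big)$, and minimizing over $\sigma$ already yields the weak converse $\tfrac1N\min_\sigma D_H^\delta\le(1-\delta)^{-1}\big(\tfrac1N R_\mathrm{R}(\rho^{\otimes N})+\tfrac1N\big)\to(1-\delta)^{-1}R_\mathrm{R}^\infty(\rho)$; the real work is the matching lower bound and the upgrade of this weak converse to the strong one.

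For the direct part I would fix a block length $m$ and a free state $\sigma_m\in\mathcal{F}(\mathcal{H}^{\otimes m})$ attaining $\tfrac1m D(\rho^{\otimes m}\|\sigma_m)=\tfrac1m R_\mathrm{R}(\rho^{\otimes m})\ge R_\mathrm{R}^\infty(\rho)$. By Property~\ref{p4:main}, $\sigma_m^{\otimes k}\in\mathcal{F}(\mathcal{H}^{\otimes mk})$, so the Hiai--Petz IID Stein test for $\rho^{\otimes m}$ versus $\sigma_m$, viewing $\mathcal{H}^{\otimes m}$ as one system and taking $k=N/m$ copies, defeats this particular product free state with per-letter exponent $\tfrac1m R_\mathrm{R}(\rho^{\otimes m})$. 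The substance is to make one test work uniformly against every, possibly non-IID, $\sigma\in\mathcal{F}(\mathcal{H}^{\otimes N})$: I would symmetrize the test using Property~\ref{p5:main} so that only permutation-invariant free states must be defeated, and then apply a de Finetti--type argument showing that such states are dominated, up to a $\mathrm{poly}(N)$ factor that is negligible in the exponent, by mixtures of IID free states whose single-block marginals remain free by Property~\ref{p3:main}. Taking $m$ large then absorbs the finite-size corrections into the slack $\epsilon$ while preserving $\Tr[(\mathds{1}-E_N)\rho^{\otimes N}]\to0$.

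For the strong converse I would pass to sandwiched R\'enyi divergences $\tilde D_\alpha$ with $\alpha>1$. For any test the Ogawa--Nagaoka operator inequality gives a bound of the form $\Tr[E_N\rho^{\otimes N}]\le\big(\max_{\sigma}\Tr[E_N\sigma]\big)^{1-1/\alpha}\,2^{\frac{\alpha-1}{\alpha}\min_{\sigma}\tilde D_\alpha(\rho^{\otimes N}\|\sigma)}$, where taking the R\'enyi-optimal free state makes the estimate uniform. Under the hypothesis $\max_\sigma\Tr[E_N\sigma]\le2^{-N(R_\mathrm{R}^\infty(\rho)+\epsilon)}$, the exponent becomes $N\tfrac{\alpha-1}{\alpha}\big(\tfrac1N\min_\sigma\tilde D_\alpha(\rho^{\otimes N}\|\sigma)-R_\mathrm{R}^\infty(\rho)-\epsilon\big)$. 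It therefore suffices to show that the regularized R\'enyi divergence of resource obeys $\lim_{\alpha\to1^+}\lim_{N\to\infty}\tfrac1N\min_\sigma\tilde D_\alpha(\rho^{\otimes N}\|\sigma)=R_\mathrm{R}^\infty(\rho)$; choosing $\alpha$ close enough to $1$ makes the bracket strictly negative, so $\Tr[E_N\rho^{\otimes N}]\to0$, i.e.\ $\Tr[(\mathds{1}-E_N)\rho^{\otimes N}]\to1$.

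The main obstacle in both halves is the non-IIDness of the alternative, which is precisely where the earlier analysis had its gap: the naive single-letterization fails, and one must justify replacing the genuinely optimal free state by a permutation-symmetric or IID-approximated surrogate (direct part) and interchanging the regularization limit $N\to\infty$ with the limit $\alpha\to1^+$ (converse part). My plan is to control the error of every such substitution with the recently developed continuity bound for the quantum relative entropy, which supplies an explicit modulus of continuity in both arguments and hence bounds the incurred change in $\min_\sigma D(\rho^{\otimes N}\|\sigma)$ by an $o(N)$ term that vanishes after dividing by $N$. Making this limit-interchange rigorous --- rather than assuming the additivity or monotonicity of the regularized divergence that was previously taken for granted --- is where I expect essentially all of the difficulty to lie.
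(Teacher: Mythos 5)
Your strong-converse half is essentially sound and matches the part of the argument that the paper simply reuses from the earlier literature: additivity of the sandwiched R\'enyi divergence on tensor products together with Property~\ref{p4:main} gives $\lim_N \frac1N\min_\sigma \tilde D_\alpha(\rho^{\otimes N}\|\sigma)\le \frac1m \tilde D_\alpha(\rho^{\otimes m}\|\sigma_m)$ for every $m$, so the $\alpha\to1^+$ interchange goes through, and the Ogawa--Nagaoka bound does the rest. The gap is in your direct part, and it sits exactly at the step you describe as ``a de Finetti--type argument showing that such [permutation-invariant free] states are dominated, up to a $\mathrm{poly}(N)$ factor \ldots by mixtures of IID free states.'' The operator-domination de Finetti bound one actually has is $\omega_N \le \binom{N+d^2-1}{N}\int \sigma_\phi^{\otimes N}\,d\phi$ for any permutation-invariant $\omega_N$, where the IID components $\sigma_\phi$ range over \emph{all} of $\mathcal{D}(\mathcal{H})$; there is no version in which the mixture can be restricted to $\mathcal{F}(\mathcal{H})$ (or, after blocking, to $\mathcal{F}(\mathcal{H}^{\otimes m})$) with only polynomial overhead, and Property~\ref{p3:main} gives you freeness of marginals, not freeness of the de Finetti components. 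If such a domination held, the worst-case type II error of a symmetrized test would be controlled by its behaviour on IID free alternatives alone, and the problem would collapse to the composite IID Stein's lemma --- precisely the reduction that the paper points out is \emph{not} general enough to cover entanglement, where $\mathcal{F}(\mathcal{H}^{\otimes N})$ contains separable states that are strongly correlated across the $N$ copies. So the single step carrying all the difficulty is asserted rather than proved, and I do not believe it is true as stated.

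The paper avoids this step by pointing the de Finetti machinery at the other hypothesis. It first reduces Theorem~\ref{thm:main_generalized_steins_lemma} to a statement about $\min_\sigma \Tr[(\rho^{\otimes N}-2^{yN}\sigma)_+]$ and proves the direct part by contraposition: from the optimal free state $\sigma_N$ it extracts a permutation-invariant $\rho_N$ with $F(\rho_N,\rho^{\otimes N})\ge\mu_N$ and $\rho_N\le 2^{yN}\sigma_N/\mu_N$, and then applies the exponential de Finetti / almost-power-state analysis to $\rho_N$ --- a state that is already close to the IID state $\rho^{\otimes N}$ --- to obtain a single operator inequality $\rho^{\otimes N-o(N)}\le 2^{yN+o(N)}\tilde\sigma_{N-o(N)}$, where $\tilde\sigma_{N-o(N)}$ is a small perturbation of a marginal of $\sigma_N$ (an ``asymptotically free'' state). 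No decomposition of the free state into IID pieces is ever needed. The continuity bound of the relative entropy then enters only at the very end, to show that minimizing $D(\rho^{\otimes N}\|\cdot)/N$ over states within trace distance $\epsilon_N=O(2^{-yN})$ of $\mathcal{F}(\mathcal{H}^{\otimes N})$ gives the same regularized limit as minimizing over $\mathcal{F}(\mathcal{H}^{\otimes N})$ itself --- a much more limited role than the one you assign it. To repair your proposal you would have to replace the ``domination by IID free states'' step with an argument of this kind that keeps the genuinely non-IID free state intact and instead re-IID-izes the null hypothesis.
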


The most formidable challenge in the proof of Theorem~\ref{thm:main_generalized_steins_lemma} appears in proving the former part of Theorem~\ref{thm:main_generalized_steins_lemma} on achievability (which is also called the direct part).
Indeed, most of the pages of Ref.~\cite{Brandao2010} were devoted to the analysis of the direct part while a relatively shorter proof of the latter converse part of Theorem~\ref{thm:main_generalized_steins_lemma} is validly given in Ref.~\cite{Brandao2010}.
In particular, the analysis of the direct part aims to bound the optimal achievable performance of distinguishing $\rho^{\otimes N}$ from a worst-case choice of free state $\sigma_N\in\mathcal{F}(\mathcal{H}^{\otimes N})$ in terms of the regularized relative entropy of resource $R_\mathrm{R}^\infty(\rho)$ in~\eqref{eq:regularized_relative_entropy_of_resource}.
The challenge arises from the non-IIDness of $\sigma_N$.
To address the non-IIDness, Ref.~\cite{Brandao2010} considered using an idea that symmetry of this task under permutation of the $N$ subsystems implies that almost all states of the $N$ subsystems are virtually identical and independent of each other~\cite{renner2006security,renner2007symmetry}, approximately recovering the IID structure of the state.
Then, Ref.~\cite{Brandao2010} tried to show that the approximation for recovering the IID structure does not change the quantum relative entropy up to a negligibly small amount, so that the overall bound of the optimal achievable performance should be given in terms of $R_\mathrm{R}^\infty(\rho)$ with the regularization, by invoking a technique in Lemma~III.9 of Ref.~\cite{Brandao2010} where the logical gap of the proof was found~\cite{berta2023gap}.
By contrast, our proof circumvents this gap of Ref.~\cite{Brandao2010} in dealing with the approximation for recovering the IID structure.
Instead, we develop alternative techniques using a new continuity bound of the quantum relative entropy recently discovered by Refs.~\cite{10206734,10129917} to properly deal with the approximation in addressing the non-IIDness\@.
See Methods for details.

\paragraph*{Redeeming the second law of QRTs.}
Our proof of the generalized quantum Stein's lemma redeems the second law of QRTs to characterize the asymptotic convertibility of general quantum resources, as originally attempted in Refs.~\cite{Brand_o_2008,brandao2010reversible,Brandao2010,Brandao2015}.
In~\eqref{eq:conversion_rate}, the class $\mathcal{O}$ of free operations was introduced in a conventional way, i.e., in a bottom-up approach by specifying what can be done.
By contrast, Refs.~\cite{Brand_o_2008,brandao2010reversible} formulated QRTs by introducing a complementary class of operations defined in an axiomatic approach by specifying what cannot be done.
In the setting of the asymptotic conversion, a fundamental requirement for the free operations is that the free operations should not generate resource states from free states.
There can be several different ways of introducing axioms to capture this requirement asymptotically, and even with such axioms, it is not always possible to formulate QRTs with the second law under the axiomatically defined class of operations ~\cite{Lami2023,PhysRevLett.119.180506}.
More precisely, the second law may hold for special types of quantum resources, e.g., for QRTs of coherence~\cite{PhysRevA.97.050301,berta2023gap} and athermality~\cite{PhysRevLett.111.250404}.
Similarly, the formulations with the second law may be possible using other variants of a composite quantum Stein's lemma~\cite{9031743,PhysRevLett.121.190503,hayashi2002optimal} while such formulations are not general enough to be applicable to entanglement~\cite{berta2023gap}.
More general theories with the second law may be formulated by going beyond the law of quantum mechanics, e.g., by allowing post-selection~\cite{regula2023reversibility} or non-physical quasi-operations~\cite{wang2023reversible}.
However, in full generality and within the law of quantum mechanics, the only known way for the desired formulation is that of Refs.~\cite{Brand_o_2008,brandao2010reversible,Brandao2010,Brandao2015} using the generalized quantum Stein's lemma.
The axiomatic class of operations
for this formulation is called asymptotically resource non-generating operations, given in terms of the generalized robustness of resource $R_\mathrm{G}$ in~\eqref{eq:generalized_robustness}, for any $\mathcal{H}_\mathrm{in}$, $\mathcal{H}_\mathrm{out}$, $N\in\mathbb{N}$, and some choice of a sequence $\{\epsilon_N>0\}_{N\in\mathbb{N}}$ of error parameters vanishing as $\lim_{N\to\infty}\epsilon_N=0$, by
\begin{align}
\label{eq:asymptotically_resource_non_generating_operations}
    &\tilde{\mathcal{O}}\qty(\mathcal{H}_\mathrm{in}^{\otimes N}\to\mathcal{H}_\mathrm{out}^{\otimes N})\coloneqq\left\{\mathcal{E}_N\in\mathcal{C}\qty(\mathcal{H}_\mathrm{in}^{\otimes N}\to\mathcal{H}_\mathrm{out}^{\otimes N}):\right.\nonumber\\
    &\quad\left.\forall\rho\in\mathcal{F}\qty(\mathcal{H}_\mathrm{in}),R_\mathrm{G}\qty(\mathcal{E}_N\qty(\rho^{\otimes N}))\leq\epsilon_N\right\}.
\end{align}

Under the asymptotically resource non-generating operations $\tilde{\mathcal{O}}$ in~\eqref{eq:asymptotically_resource_non_generating_operations}, using the generalized quantum Stein's lemma in Theorem~\ref{thm:main_generalized_steins_lemma}, we can characterize the asymptotic convertibility of general quantum resources, as shown in Ref.~\cite{Brandao2015}, by 
\begin{equation}
\label{eq:second_law}
    r_{\tilde{O}}\qty(\rho\to\sigma)=\frac{R_\mathrm{R}^\infty(\rho)}{R_\mathrm{R}^\infty(\sigma)},
\end{equation}
granting the role of entropy in thermodynamics to the regularized relative entropy of resource $R_\mathrm{R}^\infty$ in the general QRTs\@.
Thus, we have the second law of QRTs for the general types of quantum resources satisfying Properties~\ref{p1:main}--\ref{p5:main}, formally analogous to the axiomatic formulation of the second law of thermodynamics.
As in the idealization of quasi-static adiabatic operations axiomatically introduced for the second law of thermodynamics~\cite{LIEB19991,Lieb2004,10.1063/1.883034}, it may be in general unknown how to realize all operations in the axiomatically defined class.
However, what is physically realizable should always be in this axiomatic class of operations regardless of technological advances in the future, in the same way as thermodynamics.

Lastly, in the entanglement theory, $\mathcal{F}(\mathcal{H})$ is taken as the set of separable states on two spatially separated systems $A$ and $B$, i.e., those represented as a convex combination of pure product states $\sum_{j}p(j)\ket{\psi_j}\bra{\psi_j}^A\otimes\ket{\phi_j}\bra{\phi_j}^B$ of the system $\mathcal{H}=\mathcal{H}^A\otimes\mathcal{H}^B$~\cite{Horodecki2009}.
In this case, a fundamental resource state is an ebit, i.e., a two-qubit maximally entangled state $\Phi\coloneqq\ket{\Phi}\bra{\Phi}$ with $\ket{\Phi}\coloneqq\frac{1}{\sqrt{2}}\qty(\ket{0}^A\otimes\ket{0}^B+\ket{1}^A\otimes\ket{1}^B)$.
For an ebit, the regularized relative entropy of entanglement is $R_\mathrm{R}^\infty(\Phi)=1$~\cite{Horodecki2009}.
Given any state $\rho\in\mathcal{D}(\mathcal{H})$, in the asymptotic conversion from $\rho$ to $\Phi$, the maximum number of ebits $\Phi$ obtained per $\rho$ is called the distillable entanglement~\cite{PhysRevA.53.2046}, written under the class $\tilde{O}$ of operations as $E_{\mathrm{D},\tilde{\mathcal{O}}}(\rho)\coloneqq r_{\tilde{\mathcal{O}}}\qty(\rho\to\Phi)$.
Also in the asymptotic conversion rate from $\Phi$ to $\rho$, the minimum required number of ebits $\Phi$ per $\rho$ is called the entanglement cost~\cite{PhysRevA.53.2046,Hayden_2001,Yamasaki2024}, given under $\tilde{\mathcal{O}}$ by
$E_{\mathrm{C},\tilde{\mathcal{O}}}(\rho)\coloneqq 1/{r_{\tilde{\mathcal{O}}}\qty(\Phi\to\rho)}$~\cite{Kuroiwa2020}.
Due to~\eqref{eq:second_law}, the generalized quantum Stein's lemma shows the asymptotically reversible interconvertibility between all, pure and mixed, bipartite entangled states, as originally intended in Refs.~\cite{Brand_o_2008,brandao2010reversible}, i.e.,
\begin{equation}
    E_{\mathrm{D},\tilde{\mathcal{O}}}(\rho)=E_{\mathrm{C},\tilde{\mathcal{O}}}(\rho)=R_\mathrm{R}^\infty(\rho).
\end{equation}
resolving the question raised in Ref.~\cite{krueger2005open}.

\paragraph*{Outlook.}
We have constructed a proof of the generalized quantum Stein's lemma presented in Theorem~\ref{thm:main_generalized_steins_lemma}, by circumventing the logical gap of the original analysis in Ref.~\cite{Brandao2010} pointed out by Ref.~\cite{berta2023gap}.
Our proof avoids the gap in Ref.~\cite{Brandao2010} by developing alternative techniques for addressing the non-IIDness of states appearing in the generalized quantum Stein's lemma, using a new continuity bound of the quantum relative entropy recently discovered by Refs.~\cite{10206734,10129917}.
Our proof of the generalized quantum Stein's lemma redeems the axiomatic formulation of QRTs equipped with the second law under Properties~\ref{p1:main}--\ref{p5:main}.
As originally intended in Refs.~\cite{Brand_o_2008,brandao2010reversible,Brandao2010,Brandao2015}, the formulation of QRTs with the second law revived here is an indication of the success of thermodynamics, now in the realm of quantum information theory.
Since Refs.~\cite{Brand_o_2008,brandao2010reversible,Brandao2010,Brandao2015} were out, QRTs have found much broader applicability beyond entanglement~\cite{Chitambar2018,Kuroiwa2020}; in view of this, a remaining question would be how universally these results may be extended beyond the scope of convex and finite-dimensional QRTs satisfying Properties~\ref{p1:main}--\ref{p5:main}, e.g., to non-convex QRTs~\cite{Kuroiwa2020,PhysRevA.104.L020401,PRL,PRA} and infinite-dimensional QRTs~\cite{Kuroiwa2020,PhysRevA.104.L020401,Regula2021,Lami2021,ferrari2023asymptotic,Yamasaki2024}.
Toward such an overarching goal, our work provides a solid formulation and powerful tools to study broad future possibilities for the use of quantum resources.

\begin{acknowledgments}
  H.Y.\ acknowledges Kaito Watanabe for discussion. 
  K.K.\ was supported by a Funai Overseas Scholarship and a Perimeter Residency Doctoral Award. 
  This work was supported by JST PRESTO Grant Number JPMJPR201A, JPMJPR23FC, JSPS KAKENHI Grant Number JP23K19970, and MEXT Quantum Leap Flagship Program (MEXT QLEAP) JPMXS0118069605, JPMXS0120351339\@.
\end{acknowledgments}

\section*{Author contributions}

Both authors contributed to the conception of the work, the analysis and interpretation in the work, and the preparation of the manuscript.

\section*{Competing interests}

The authors declare no competing interests.

\section*{Additional information}

Supplementary Information is available for this paper.
Correspondence and requests for materials should be addressed to Hayata Yamasaki.

\section*{Methods}
In Methods, we illustrate our proof of the generalized quantum Stein's lemma (Theorem~\ref{thm:main_generalized_steins_lemma} in the main text). 
Before going into details, we point out that, viewed with hindsight based on our proof, the valid parts of the proof strategy described in the original analysis of the generalized quantum Stein's lemma in Ref.~\cite{Brandao2010} are indeed useful for constructing our proof if we modify the statements appropriately; however, this is nontrivial, in view of many other possible proof strategies suggested in Ref.~\cite{berta2023gap}.
After all, when a logical gap is found in a mathematical proof of a theorem, the overall logic to prove the theorem loses its validity.
In a fortunate case, similar lines of logic may eventually lead to some corrected proof;
however, in the worst case, complete proof may never be obtained in the same strategy due to the hindrance arising from the insoluble gap.
In the latter case, one might need to construct a completely different logic from scratch or even doubt whether the theorem to be proven should hold true in the first place.
Thus, when Ref.~\cite{berta2023gap} pointed out the logical gap of the analysis in Ref.~\cite{Brandao2010}, it was natural to discuss multiple different strategies for possible proofs including those starting over almost from scratch.
In the description of our proof, we will clarify which part of the analysis in Ref.~\cite{Brandao2010} can be reused in our proof and which cannot.

Our strategy for the overall proof is summarized as follows. 
We will first reduce the proof of Theorem~\ref{thm:main_generalized_steins_lemma} to that of a simpler statement (Theorem~\ref{thm:main_actual_statement_to_prove}) in a slightly different way from that in Ref.~\cite{Brandao2010} so as to circumvent the logical gap of Ref.~\cite{Brandao2010}.
Our goal is to prove this simpler statement. 
This statement is further divided into two parts: the direct part and the strong converse, similar to Ref.~\cite{Brandao2010}. 
The strong converse (Theorem~\ref{thm:main_strong_converse}) is the same as a validly proven statement in Ref.~\cite{Brandao2010}, which can be reused in our proof. 
Our main contribution is to construct a proof of the direct part (Theorem~\ref{thm:main_direct_part}) while Ref.~\cite{Brandao2010} has a logical gap in proving the direct part as pointed out in Ref.~\cite{berta2023gap}.

Regarding the notations, for a Hermitian operator $A$ with spectral decomposition $A=\sum_j\lambda_j\ket{j}\bra{j}$, we write
\begin{equation}
    \qty(A)_+\coloneqq\sum_{j:\lambda_j>0}\ket{j}\bra{j}.
\end{equation}
For two functions $f,g$ and $N\in\mathbb{N}$, $f(N)=O(g(N))$ means $\limsup_{N\to\infty}|f(N)|/g(N)<\infty$, and $f(N)=o(g(N))$ means $\lim_{n\to\infty}f(N)/g(N)=0$.
See also Refs.~\cite{N4,Wilde_2017,watrous_2018} for other conventional notations in quantum information theory.

In the following, we will first describe the reduction of the proof of Theorem~\ref{thm:main_generalized_steins_lemma} to that of a simpler statement.
Then, we will divide the statement into the strong converse and direct part to describe the proof of each of these parts.

\subsection{Reduction of generalized quantum Stein's lemma to simpler statement}
\label{sec:reduction}

To prove Theorem~\ref{thm:main_generalized_steins_lemma}, we will show Theorem~\ref{thm:main_actual_statement_to_prove} below, which is a modified version of Proposition III.1 of Ref.~\cite{Brandao2010}. 
Note that in Proposition III.1 of Ref.~\cite{Brandao2010}, the condition $y > 0$ was not imposed while Theorem~\ref{thm:main_actual_statement_to_prove} for our proof has a slightly stronger assumption due to imposing $y > 0$. 
We will reduce our proof of Theorem~\ref{thm:main_generalized_steins_lemma} to proving Theorem~\ref{thm:main_actual_statement_to_prove} with this condition, which is necessary for our proof to circumvent the logical gap of Ref.~\cite{Brandao2010}. 

\begin{theorem}[Modified version of Proposition III.1 of Ref.~\cite{Brandao2010}]
    \label{thm:main_actual_statement_to_prove}
    For any family $\{\mathcal{F}\qty(\mathcal{H}^{\otimes N})\}_{N\in\mathbb{N}}$ of sets of free states satisfying Properties~\ref{p1}-\ref{p5}, any state $\rho \in \mathcal{D}(\mathcal{H})$, and any $y > 0$, it holds that
    \begin{equation}
            \label{eq:main_limit_steins_lemma_actual}
        \lim_{N\to\infty} \min_{\sigma \in \mathcal{F}(\mathcal{H}^{\otimes N})} \Tr\qty[\qty(\rho^{\otimes N} - 2^{yN} \sigma)_{+}] 
        = 
        \begin{cases}
            0, & y > R_R^{\infty}(\rho), \\ 
            1, & y < R_R^{\infty}(\rho).
        \end{cases}
    \end{equation}
\end{theorem}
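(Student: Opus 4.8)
The plan is to prove the dichotomy by treating the two regimes separately, recognizing the statement as the union of a strong-converse bound (the case $y>R_R^\infty(\rho)$, limit $0$) and a direct (achievability) bound (the case $y<R_R^\infty(\rho)$, limit $1$). For the strong-converse regime it suffices to exhibit a \emph{single} sequence of free states driving the positive part to zero, which I would construct by blocking. Since $R_R^\infty(\rho)=\lim_N R_R(\rho^{\otimes N})/N$, I fix a large $n_0$ with $R_R(\rho^{\otimes n_0})/n_0<y$, let $\sigma^\ast\in\mathcal{F}(\mathcal{H}^{\otimes n_0})$ attain the relative entropy of resource (the minimum exists by Property~1), regularize it to full rank by mixing with the full-rank free state of Property~2 using convexity (Property~1), and set $\sigma_N=(\sigma^\ast)^{\otimes k}$ for $N=kn_0$, which is free by Property~4 applied to the system $\mathcal{H}^{\otimes n_0}$. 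Writing $P_N$ for the spectral projector onto the positive part, we have $\Tr[(\rho^{\otimes N}-2^{yN}\sigma_N)_+]\le\Tr[P_N\rho^{\otimes N}]$, and since $y n_0>D(\rho^{\otimes n_0}\|\sigma^\ast)$ the quantum Stein's lemma (via the asymptotic equipartition property) forces $\Tr[P_N\rho^{\otimes N}]\to 0$; values of $N$ that are not multiples of $n_0$ are handled at sub-linear cost by taking a marginal of $(\sigma^\ast)^{\otimes(k+1)}$, which stays free by Property~3. This coincides with the validly proven strong converse of Ref.~\cite{Brandao2010} and is reused.

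The difficult regime is $y<R_R^\infty(\rho)$, where I must show $\min_{\sigma}\Tr[(\rho^{\otimes N}-2^{yN}\sigma)_+]\to 1$ uniformly over all, possibly non-IID, free states $\sigma\in\mathcal{F}(\mathcal{H}^{\otimes N})$. Using the variational identity $\Tr[(\rho^{\otimes N}-2^{yN}\sigma)_+]=\sup_{0\le E\le\mathds{1}}\Tr[E(\rho^{\otimes N}-2^{yN}\sigma)]$, I would reduce the entire regime to constructing one sequence of tests $\{E_N\}$, \emph{independent of} $\sigma$, with vanishing type-I error $\Tr[E_N\rho^{\otimes N}]\to 1$ and worst-case type-II error $\sup_{\sigma\in\mathcal{F}(\mathcal{H}^{\otimes N})}\Tr[E_N\sigma]\le 2^{-yN+o(N)}$. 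Given such a test, the bound $\Tr[(\rho^{\otimes N}-2^{yN}\sigma)_+]\ge\Tr[E_N\rho^{\otimes N}]-2^{yN}\sup_{\sigma}\Tr[E_N\sigma]\to 1$ holds uniformly in $\sigma$, so the minimum tends to $1$. Constructing this universal test, i.e.\ the achievability part of the main lemma, is therefore the crux.

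To control the worst-case type-II error I would first symmetrize. Taking $E_N$ permutation-invariant (which costs nothing for the type-I error because $\rho^{\otimes N}$ is symmetric) and using that averaging a free state over permutations keeps it free (Properties~1 and~5), the supremum over free $\sigma$ may be restricted to permutation-symmetric free states, on which the symmetry/de Finetti technique of Refs.~\cite{renner2006security,renner2007symmetry} recovers an approximately IID structure. The essential difficulty is that a naive de Finetti reduction dominating $\sigma_N$ by a universal mixture $\int\omega^{\otimes N}\,d\omega$ fails, since that mixture charges $\omega=\rho$ and ruins the type-II exponent; the freeness of $\sigma_N$ must be used genuinely, through the fact that its marginals stay free (Property~3) and its tensor powers stay free (Property~4), in a recursive estimate over $N$ that brings in $R_R^\infty(\rho)$. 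This recovery/recursion step is exactly where Lemma~III.9 of Ref.~\cite{Brandao2010} was invoked and where Ref.~\cite{berta2023gap} located the logical gap.

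The main obstacle, and the locus of our contribution, is to carry out this step without the flawed lemma, i.e.\ to bound the change of the quantum relative entropy $D(\rho^{\otimes N}\|\,\cdot\,)$ under the perturbations introduced by the IID recovery. My plan is to replace Lemma~III.9 by the recently established continuity bound for the quantum relative entropy of Refs.~\cite{10206734,10129917}, showing that the induced change is $o(N)$ after regularization so that the worst-case type-II exponent is preserved. It is precisely here that the hypothesis $y>0$, which strengthens Proposition~III.1 of Ref.~\cite{Brandao2010}, is used: it supplies the strictly positive slack needed to absorb the sub-linear corrections of the continuity estimate into the exponent, with Property~2 guaranteeing finiteness of the relative entropies involved. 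Combining the controlled approximation with the IID quantum Stein's lemma then yields a worst-case type-II exponent of at least $R_R^\infty(\rho)-\epsilon$ for every $\epsilon>0$; choosing $\epsilon$ with $y<R_R^\infty(\rho)-\epsilon$ completes the direct part and, together with the strong converse, establishes the stated transition.
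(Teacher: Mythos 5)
Your decomposition into a strong-converse half ($y>R_\mathrm{R}^\infty(\rho)$, limit $0$) and a direct half ($y<R_\mathrm{R}^\infty(\rho)$, limit $1$) matches the paper, and your treatment of the strong-converse half (blocking plus the IID strong converse, equivalently reusing Corollary~III.3 of Ref.~\cite{Brandao2010}) is fine. The genuine gap is in the direct half, which is the entire substance of the theorem. You reduce it to the existence of a universal test $E_N$ with $\Tr[E_N\rho^{\otimes N}]\to 1$ and $\max_{\sigma\in\mathcal{F}(\mathcal{H}^{\otimes N})}\Tr[E_N\sigma]\le 2^{-(R_\mathrm{R}^\infty(\rho)-\epsilon)N}$ --- a valid reduction, but one that inverts the logic of the paper: there the test is obtained \emph{afterwards} from Theorem~\ref{thm:main_actual_statement_to_prove} via the dual of the optimization defining $p_N$, and the $y<R_\mathrm{R}^\infty(\rho)$ case is proved by contrapositive, i.e.\ assuming $\liminf_N\min_\sigma\Tr[(\rho^{\otimes N}-2^{yN}\sigma)_+]\in(0,1)$, extracting from the optimal $\sigma_N$ a permutation-invariant $\rho_N$ with $\rho_N\le 2^{yN}\sigma_N/\mu_N$ and non-negligible fidelity to $\rho^{\otimes N}$, and deducing $y\ge R_\mathrm{R}^\infty(\rho)$. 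Your forward construction of the universal test is described only as ``the de Finetti recursion of Ref.~\cite{Brandao2010} with Lemma~III.9 replaced by the continuity bound,'' which restates the problem rather than solving it: the paper explicitly does \emph{not} patch Lemma~III.9 in place, because the statement that lemma was meant to establish (that the marginal of $\sigma_N$ itself nearly attains the relative-entropy bound over the exact free set) is precisely what could not be proved.

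What is missing are the three ideas that make the continuity bound usable. First, the almost-power-state approximation must be recast as a \emph{single} operator inequality $\rho^{\otimes N-o(N)}\le 2^{yN+o(N)}\tilde\sigma_{N-o(N)}$ (rather than an operator inequality plus a separate trace-distance estimate as in Lemma~III.5 of Ref.~\cite{Brandao2010}), so that the perturbation is absorbed into the \emph{second} argument as a state $\tilde\sigma_{N-o(N)}$ within trace distance $\epsilon_N$ of a genuine free state. Second, one needs the notion of asymptotically free states and the deliberately weaker bound $\min_{\tilde\sigma\in\mathcal{F}^{(\epsilon_N)}}D(\rho^{\otimes N-o(N)}\|\tilde\sigma)\le yN+o(N)$, in place of the bound over $\mathcal{F}$ that Ref.~\cite{Brandao2010} pursued. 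Third, the continuity bound of Refs.~\cite{10206734,10129917} enters only to show that the regularized minimum over $\mathcal{F}^{(\epsilon_N)}$ coincides with that over $\mathcal{F}$; since its error term scales as $\log_2^2(1/\tilde m)\sqrt{\epsilon_N}$ with $\tilde m$ necessarily exponentially small in $N$ (one must mix in $\sigma_{\mathrm{full}}^{\otimes N}$ to control the minimum eigenvalue of the second arguments), this works only because $\epsilon_N=O(2^{-yN})$ decays exponentially. That, and not ``slack in the exponent,'' is the role of the hypothesis $y>0$: with $y=0$ one only gets $\epsilon_N$ decaying polynomially, the correction $O((\log_2(1/\epsilon_N)+N)^2\sqrt{\epsilon_N})$ is no longer $o(N)$, and the argument fails. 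Without these steps your proposal is a plan that defers the known difficulty rather than a proof.
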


Here, we show that Theorem~\ref{thm:main_generalized_steins_lemma} indeed follows from Theorem~\ref{thm:main_actual_statement_to_prove}.
We basically follow the proof of Theorem~I in Ref.~\cite{Brandao2010}, but in our case, we need to keep it in mind that the condition $y > 0$ is imposed in Theorem~\ref{thm:main_actual_statement_to_prove}. 

To prove Theorem~\ref{thm:main_generalized_steins_lemma} from Theorem~\ref{thm:main_actual_statement_to_prove},
we introduce a notation $p_N(\eta,K)$, with $N \in \mathbb{N}$, a state $\eta$, and a positive number $K > 0$, to represent the solution of a convex optimization problem as
\begin{align}
\label{eq:alpha_N}
        &p_N(\eta,K) \nonumber \\ 
        &\coloneqq \max \left\{\Tr\qty[E\eta] : 0 \leq E \leq \mathds{1}, \,\max_{\sigma \in \mathcal{F}(\mathcal{H}^{\otimes N})}\Tr\qty[E\sigma] \leq \frac{1}{K}\right\}.
\end{align}
With this notation, $p_N(\rho^{\otimes N},2^{yN})$ represents the maximum success probability $1-\alpha_N$ of avoiding the type I error under the condition that the type II error should decay with $N$ at the rate $y$, i.e., $\beta_N\leq 2^{-yN}$.
By definition, the optimization problem in~\eqref{eq:alpha_N} always has some feasible solution, and we have $0\leq p_N(\eta,K) \leq 1$ for any state $\eta$ and any $K > 0$.
We employ the dual problem of the optimization problem in the definition of $p_N$ in~\eqref{eq:alpha_N}; in particular, as shown in the proof of Theorem~I in Ref.~\cite{Brandao2010}, we have 
\begin{align}
        &p_N(\rho^{\otimes N},2^{yN})
        =\nonumber\\
        &\min\Bigg\{\Tr\qty[\qty(\rho^{\otimes N} - b\sigma)_{+}] + \frac{b}{2^{yN}}:
        \sigma \in \mathcal{F}(\mathcal{H}^{\otimes N}),\,b\geq 0 \Bigg\}. 
\end{align}

The direct part of Theorem~\ref{thm:main_generalized_steins_lemma} follows from Theorem~\ref{thm:main_actual_statement_to_prove} as follows. 
For any $\epsilon > 0$ and any state $\rho$, if  $R_\mathrm{R}^\infty(\rho) - \epsilon \leq 0$,
the sequence $\{\{E_N,\mathds{1}-E_N\}\}_{N\in\mathbb{N}}$ of POVMs with $E_N=\mathds{1}$ for all $N$ satisfies the requirements of Theorem~\ref{thm:main_generalized_steins_lemma}. 
In the following, we analyze the other case, i.e., 
\begin{equation}
\label{eq:y_smaller_than_R}
    y \coloneqq R_{R}^{\infty}(\rho) - \epsilon>0;
\end{equation}
in this case, for each $N\in\mathbb{N}$, let $E_N$ be the optimal operator $E$
in the optimization problem in~\eqref{eq:alpha_N} for $p_N(\rho^{\otimes N},2^{yN})$, which achieves a type II error bounded by 
\begin{equation}
\label{eq:type_II_bound}
    -\frac{\log_2\qty(\max_{\sigma\in\mathcal{F}\qty(\mathcal{H}^{\otimes N})}\qty{\Tr\qty[E_N\sigma]})}{N}\geq R_\mathrm{R}^\infty(\rho)-\epsilon,
\end{equation}
due to the constraint in~\eqref{eq:alpha_N}, i.e.,
\begin{equation}
    \max_{\sigma \in \mathcal{F}(\mathcal{H}^{\otimes N})}\Tr\qty[E_N\sigma] \leq \frac{1}{2^{yN}}.
\end{equation}
Since $p_N(\rho^{\otimes N},2^{yN}) \leq 1$ and $\Tr\qty[\qty(\rho^{\otimes N} - b\sigma)_{+}]\geq 0$ for any $\sigma$, the optimal $b$ must satisfy $b \leq 2^{yN}$. 
Therefore, we have 
\begin{align}
        &p_N(\rho^{\otimes N},2^{yN}) 
        \geq \nonumber \\ 
        &\min\Bigg\{\Tr\qty[\qty(\rho^{\otimes N} - 2^{N\qty(R_{R}^{\infty}(\rho) - \epsilon)}\sigma)_{+}]:\sigma \in \mathcal{F}(\mathcal{H}^{\otimes N}) \Bigg\}.
\end{align}
Due to $y>0$ in~\eqref{eq:y_smaller_than_R},
using Theorem~\ref{thm:main_actual_statement_to_prove}, we obtain
\begin{equation}
    \lim_{N\to\infty} \Tr\qty[\qty(\rho^{\otimes N} - 2^{N\qty(R_{R}^{\infty}(\rho) - \epsilon)}\sigma)_{+}] = 1.
\end{equation}
Therefore, with this optimal choice of POVMs $\{\{E_N,\mathds{1}-E_N\}\}_{N\in\mathbb{N}}$, we have 
\begin{equation}
    \lim_{N\to\infty} p_N(\rho^{\otimes N},2^{yN})= 1,
\end{equation}
indicating that the type I error converges to $0$, i.e.,
\begin{equation}
    \lim_{N\to\infty}\Tr\qty[\qty(\mathds{1}-E_N)\rho^{\otimes N}]=0,
\end{equation}
which with the bound of the type II error in~\eqref{eq:type_II_bound} achieves the requirements of Theorem~\ref{thm:main_generalized_steins_lemma}.

Conversely,
for some $\epsilon>0$,
consider any sequence $\{\{E_N,\mathds{1}-E_N\}\}_{N\in\mathbb{N}}$ of POVMs
satisfying
\begin{equation}
\label{eq:constraint_converse}
    \liminf_{N\to\infty}-\frac{\log_2\qty(\max_{\sigma\in\mathcal{F}\qty(\mathcal{H}^{\otimes N})}\qty{\Tr\qty[E_N\sigma]})}{N}\geq R_\mathrm{R}^\infty(\rho)+\epsilon,
\end{equation}
as required in Theorem~\ref{thm:main_generalized_steins_lemma};
in this case, for any $\tilde{\epsilon}$ satisfying $0<\tilde{\epsilon}<\epsilon$, take
\begin{equation}
\label{eq:y_larger_than_R}
    y \coloneqq R_{R}^{\infty}(\rho) + \tilde{\epsilon}>0.
\end{equation}
Due to~\eqref{eq:constraint_converse},
there exists $N_0\in\mathbb{N}$ such that for every $N\geq N_0$,  the operator $E_N$ in this sequence $\{\{E_N,\mathds{1}-E_N\}\}_{N\in\mathbb{N}}$ of POVMs should satisfy the constraint in the optimization problem of~\eqref{eq:alpha_N} for $p_N(\rho^{\otimes N},2^{yN})$;
then, let $E$ be any of such operators $E_N$, satisfying
\begin{equation}
    \max_{\sigma \in \mathcal{F}(\mathcal{H}^{\otimes N})}\Tr\qty[E\sigma] \leq \frac{1}{2^{yN}}.
\end{equation}
By choosing $b = 2^{N\qty(R_{R}^{\infty}(\rho) + \frac{\tilde{\epsilon}}{2})}$,
we have 
\begin{align}
        &p_N(\rho^{\otimes N},2^{yN}) \nonumber \\ 
        &\leq \min\left\{\Tr\qty[\qty(\rho^{\otimes N} - 2^{N\qty(R_{R}^{\infty}(\rho) + \frac{\tilde{\epsilon}}{2})}\sigma)_{+}]+ 2^{-N\frac{\tilde{\epsilon}}{2}}:\right. \nonumber \\ 
        &\quad \left.\sigma \in \mathcal{F}(\mathcal{H}^{\otimes N}) \right\}. 
\end{align}
Due to $y>0$ in~\eqref{eq:y_larger_than_R}, using Theorem~\ref{thm:main_actual_statement_to_prove}, we obtain
\begin{equation}
    \lim_{N \to \infty}\min_{\sigma \in \mathcal{F}(\mathcal{H}^{\otimes N})} \Tr\qty[\qty(\rho^{\otimes N} - 2^{N\qty(R_{R}^{\infty}(\rho) + \frac{\tilde{\epsilon}}{2})}\sigma)_{+}] = 0, 
\end{equation}
and we also have $2^{-N\frac{\tilde{\epsilon}}{2}} \to 0$ as $N\to\infty$. 
Therefore, 
\begin{equation}
    \lim_{N\to\infty} p_N(\rho^{\otimes N},2^{yN})= 0, 
\end{equation}
indicating that the type I error converges to $1$, i.e.,
\begin{equation}
    \lim_{N\to\infty}\Tr\qty[\qty(\mathds{1}-E_N)\rho^{\otimes N}]=1.
\end{equation}

\subsection{Proof of strong converse and direct part}
\label{sec:converse_direct}

In this section, we outline the proof of Theorem~\ref{thm:main_actual_statement_to_prove}.
As in Ref.~\cite{Brandao2010}, we divide the statement of Theorem~\ref{thm:main_actual_statement_to_prove} into two parts: the strong converse and the direct part.

\subsubsection{Strong converse}
The strong converse guarantees that the limit in~\eqref{eq:main_limit_steins_lemma_actual} converges to zero when $y > R_R^{\infty}(\rho)$ while the limit has a nonzero value when $y < R_R^{\infty}(\rho)$. 
This was validly shown in Ref.~\cite{Brandao2010}, and we can reuse the statement as shown below.
In Ref.~\cite{Brandao2010}, Theorem~\ref{thm:main_strong_converse} was shown with the stronger version of Property~\ref{p4}, where $\rho \otimes \sigma$ has to be free for all free states $\rho$ and $\sigma$. 
However, the analysis in Ref.~\cite{Brandao2010} only used our version of Property~\ref{p4}, where $\rho^{\otimes N}$ is free for any free state $\rho$ and every positive integer $N$. 
Thus, the proof of the strong converse follows even with our version of Property~\ref{p4}. 
See Corollary~III.3 of Ref.~\cite{Brandao2010} for the proof.

\begin{theorem}[Strong converse of Theorem~\ref{thm:main_actual_statement_to_prove}, Collorary~III.3 of Ref.~\cite{Brandao2010}]
\label{thm:main_strong_converse}
Given any family $\{\mathcal{F}\qty(\mathcal{H}^{\otimes N})\}_{N\in\mathbb{N}}$ of sets of free states satisfying Properties~\ref{p1}--\ref{p5} and any state $\rho\in\mathcal{D}\qty(\mathcal{H})$, for any
\begin{equation}
    y>\lim_{N\to\infty}\min_{\sigma\in\mathcal{F}\qty(\mathcal{H}^{\otimes N})}\qty{\frac{D\left(\rho^{\otimes N}\middle|\middle|\sigma\right)}{N}},
\end{equation}
it holds that
\begin{equation}
    \lim_{N\to\infty}\min_{\sigma\in\mathcal{F}\qty(\mathcal{H}^{\otimes N})}\qty{\Tr\qty[\qty(\rho^{\otimes N}-2^{yN}\sigma)_+]}=0.
\end{equation}
Also for any
\begin{equation}
    y<\lim_{N\to\infty}\min_{\sigma\in\mathcal{F}\qty(\mathcal{H}^{\otimes N})}\qty{\frac{D\left(\rho^{\otimes N}\middle|\middle|\sigma\right)}{N}},
\end{equation}
it holds that
\begin{equation}
    \liminf_{N\to\infty}\min_{\sigma\in\mathcal{F}\qty(\mathcal{H}^{\otimes N})}\qty{\Tr\qty[\qty(\rho^{\otimes N}-2^{yN}\sigma)_+]}>0.
\end{equation}
\end{theorem}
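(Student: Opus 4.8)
The plan is to treat the two regimes separately, using the variational identity
\[
\Tr\qty[\qty(\rho^{\otimes N}-2^{yN}\sigma)_{+}]=\max_{0\le E\le\mathds{1}}\Tr\qty[E\qty(\rho^{\otimes N}-2^{yN}\sigma)],
\]
whose maximum is attained at the projector onto the positive eigenspace of $\rho^{\otimes N}-2^{yN}\sigma$. The quantity in the theorem is thus the Lagrangian of the underlying hypothesis test, and the overshoot regime $y>R_\mathrm{R}^\infty(\rho)$, where it must vanish (the genuine strong converse), behaves differently from the undershoot regime $y<R_\mathrm{R}^\infty(\rho)$, where it need only stay bounded away from $0$ (a weak converse). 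Throughout I use $R_\mathrm{R}^\infty(\rho)=\lim_{N\to\infty}\tfrac1N\min_{\sigma}D(\rho^{\otimes N}\|\sigma)$ and the full-rank free state $\sigma_{\mathrm{full}}$ of Property~\ref{p2}, whose tensor powers $\sigma_{\mathrm{full}}^{\otimes N}$ are free by Property~\ref{p4}.

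For $y>R_\mathrm{R}^\infty(\rho)$ I would invoke the operator inequality underlying the strong converse of hypothesis testing: for every $s\in(0,1]$ and every state $\sigma$ with $\mathrm{supp}\qty(\rho^{\otimes N})\subseteq\mathrm{supp}(\sigma)$,
\[
\Tr\qty[\qty(\rho^{\otimes N}-2^{yN}\sigma)_{+}]\le 2^{-syN}\,\Tr\qty[\qty(\rho^{\otimes N})^{1+s}\sigma^{-s}]=2^{-s(yN-D_{1+s}(\rho^{\otimes N}\|\sigma))},
\]
where $D_{1+s}$ is the Petz--Rényi relative entropy. Minimizing over free $\sigma$ (the minimizer automatically satisfies the support condition, since $D_{1+s}(\rho^{\otimes N}\|\sigma_{\mathrm{full}}^{\otimes N})<\infty$) gives
\[
\min_{\sigma}\Tr\qty[\qty(\rho^{\otimes N}-2^{yN}\sigma)_{+}]\le 2^{-sN(y-\frac1N\min_{\sigma}D_{1+s}(\rho^{\otimes N}\|\sigma))}.
\]
Writing $R_{1+s}^\infty(\rho)\coloneqq\lim_{N\to\infty}\tfrac1N\min_{\sigma}D_{1+s}(\rho^{\otimes N}\|\sigma)$ for the regularized Rényi relative entropy of resource, the decisive input is the continuity $\lim_{s\to0^{+}}R_{1+s}^\infty(\rho)=R_\mathrm{R}^\infty(\rho)$: given $y>R_\mathrm{R}^\infty(\rho)$, I fix $s>0$ small enough that $R_{1+s}^\infty(\rho)<y$, whence for all large $N$ the exponent is at most $-sN\delta$ for some $\delta>0$ and the minimum decays to $0$.

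For $y<R_\mathrm{R}^\infty(\rho)$, fix $\delta$ with $y+\delta<R_\mathrm{R}^\infty(\rho)$, so that $\min_{\sigma}D(\rho^{\otimes N}\|\sigma)>N(y+\delta)$ for all large $N$. Arguing by contradiction, suppose the minimum in the theorem vanishes along a subsequence; let $\sigma_N$ be the minimizers and let $\Delta_N\ge 0$ be the positive part of $\rho^{\otimes N}-2^{yN}\sigma_N$, so that $\Tr[\Delta_N]=\eta_N\to0$ and $\rho^{\otimes N}\le 2^{yN}\sigma_N+\Delta_N$. Smoothing to $\sigma_N'\coloneqq\tfrac12\sigma_N+\tfrac12\sigma_{\mathrm{full}}^{\otimes N}$ (free by convexity, Property~\ref{p1}) to secure full rank and finite divergence, I would bound $D(\rho^{\otimes N}\|\sigma_N')\le yN+o(N)$: after pinching onto polynomially many spectral bins of $\sigma_N'$ (so the pinching entropy correction is $O(\log N)=o(N)$), the commuting estimate shows the bulk where $\rho^{\otimes N}\lesssim 2^{yN}\sigma_N'$ contributes at most $yN+O(1)$, while the contribution of $\Delta_N$ is suppressed by $\eta_N\to0$ against the exponentially small least eigenvalue of $\sigma_{\mathrm{full}}^{\otimes N}$. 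This contradicts $\min_{\sigma}D(\rho^{\otimes N}\|\sigma)>N(y+\delta)$, so the liminf is strictly positive.

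The main obstacle is the strong converse in the first regime, and concretely the continuity $\lim_{s\to0^{+}}R_{1+s}^\infty(\rho)=R_\mathrm{R}^\infty(\rho)$ of the regularized Rényi relative entropy of resource at Rényi parameter $1$. The bound $R_{1+s}^\infty(\rho)\ge R_\mathrm{R}^\infty(\rho)$ is immediate from $D_{1+s}\downarrow D$ as $s\downarrow 0$, but the reverse inequality $\limsup_{s\to0^{+}}R_{1+s}^\infty(\rho)\le R_\mathrm{R}^\infty(\rho)$ is delicate, because the $s\to0$ limit and the regularization over $N$ need not commute and---since only the weak form of Property~\ref{p4} is available---products of distinct free states need not be free, so optimizers cannot be freely tensored across blocks. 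This is exactly the ingredient validly established in Ref.~\cite{Brandao2010} (Corollary~III.3) that I would reuse; the remaining pieces (the operator inequality, the smoothing by $\sigma_{\mathrm{full}}^{\otimes N}$, and the pinching estimate) are routine.
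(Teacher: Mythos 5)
Your proposal is correct and follows essentially the same route as the paper, which offers no independent proof of this statement but simply reuses Corollary~III.3 of Ref.~\cite{Brandao2010} after checking that its proof only needs the weaker form of Property~\ref{p4}; the R\'enyi-divergence mechanism you reconstruct for the regime $y>R_\mathrm{R}^\infty(\rho)$ (the Ogawa--Nagaoka-type bound $\Tr[(\rho^{\otimes N}-2^{yN}\sigma)_+]\leq 2^{-s(yN-D_{1+s}(\rho^{\otimes N}\|\sigma))}$ plus the continuity of the regularized R\'enyi quantity at $s=0$) is exactly the argument of that corollary, and you correctly identify the continuity step as the one nontrivial ingredient to be imported from there. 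Your self-contained weak-converse sketch for $y<R_\mathrm{R}^\infty(\rho)$ is sound and completable (the pinching/binning corrections are $o(N)$ and the $\Delta_N$ contribution is $O(N\eta_N)=o(N)$), though the paper covers that case by the same citation.
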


\subsubsection{Main contribution: Direct part}
The direct part shows that the limit in~\eqref{eq:main_limit_steins_lemma_actual} becomes $1$ for $y < R_{R}^{\infty}(\rho)$, while the strong converse in Theorem~\ref{thm:main_strong_converse} only claims that the limit should be nonzero. 
Our main contribution in this work is to construct a proof of the direct part while the original analysis of the direct part of the general quantum Stein's lemma in Ref.~\cite{Brandao2010} has a logical gap as pointed out in Ref.~\cite{berta2023gap}. 

\begin{theorem}[Direct part of Theorem~\ref{thm:main_actual_statement_to_prove}]
\label{thm:main_direct_part}
For any family $\{\mathcal{F}\qty(\mathcal{H}^{\otimes N})\}_{N\in\mathbb{N}}$ of sets of free states satisfying Properties~\ref{p1}-\ref{p5}, any state $\rho\in\mathcal{D}\qty(\mathcal{H})$, and any $y>0$, if it holds that
\begin{equation}
    y< \lim_{N\to\infty}\min_{\sigma\in\mathcal{F}\qty(\mathcal{H}^{\otimes N})}\frac{D\left(\rho^{\otimes N}\middle|\middle|\sigma\right)}{N},
\end{equation} 
then we have 
\begin{equation}
\label{eq:main_limit_unity}
\liminf_{N\to\infty}\min_{\sigma\in\mathcal{F}\qty(\mathcal{H}^{\otimes N})}\qty{\Tr\qty[\qty(\rho^{\otimes N}-2^{yN}\sigma)_+]} = 1.
\end{equation}
\end{theorem}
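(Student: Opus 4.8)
The plan is to prove the direct part by producing, for each worst-case free state, a hypothesis test whose type-I success probability tends to $1$ while its type-II error decays at rate at least $y$, with all error estimates \emph{uniform} over $\mathcal{F}(\mathcal{H}^{\otimes N})$ so that the minimum in~\eqref{eq:main_limit_unity} inherits the convergence. Since $\Tr\qty[(\rho^{\otimes N}-2^{yN}\sigma)_+]\le 1$ always holds, it then suffices to exhibit a matching lower bound tending to $1$, and equality with the trivial upper bound yields the claim.

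First I would reduce the optimization to permutation-invariant free states. Because $\rho^{\otimes N}$ is permutation-invariant, the functional $\sigma\mapsto\Tr\qty[(\rho^{\otimes N}-2^{yN}\sigma)_+]=\max_{0\le E\le\mathds{1}}\Tr\qty[E(\rho^{\otimes N}-2^{yN}\sigma)]$ is a maximum of affine functions of $\sigma$, hence convex, and it is invariant under $\sigma\mapsto U_\pi\sigma U_\pi^\dagger$ since $\rho^{\otimes N}-2^{yN}U_\pi\sigma U_\pi^\dagger=U_\pi(\rho^{\otimes N}-2^{yN}\sigma)U_\pi^\dagger$ and $(\cdot)_+$ and $\Tr$ are unitarily invariant. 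Combining convexity (Property~\ref{p1}) with the permutation closure of $\mathcal{F}(\mathcal{H}^{\otimes N})$ (Property~\ref{p5}) lets me replace any minimizer by its symmetrization $\frac{1}{N!}\sum_{\pi}U_\pi\sigma U_\pi^\dagger\in\mathcal{F}(\mathcal{H}^{\otimes N})$ without increasing the value, so the minimum is attained on permutation-invariant free states. Next I would record the relative-entropy input: fixing $y'$ with $y<y'<R_\mathrm{R}^\infty(\rho)$, the hypothesis $y<\lim_{N}\frac{1}{N}\min_\sigma D(\rho^{\otimes N}\,\|\,\sigma)$ gives, for all large $N$ and \emph{every} $\sigma\in\mathcal{F}(\mathcal{H}^{\otimes N})$, the uniform bound $D(\rho^{\otimes N}\,\|\,\sigma)\ge y'N$, because this minimum is exactly $R_\mathrm{R}(\rho^{\otimes N})$.

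The heart of the argument is to convert this single-copy relative-entropy bound for a non-IID $\sigma$ into a statement about the hypothesis-testing exponent by approximately recovering an IID structure from the permutation symmetry. Concretely, I would invoke a de Finetti-type reduction (the exponential de Finetti theorem and the associated post-selection operator inequality) to compare a permutation-invariant free $\sigma$ with mixtures of IID states $\omega^{\otimes N}$, using Property~\ref{p3} (freeness of marginals) to argue that the dominant local components $\omega$ lie near free states and Property~\ref{p4} (freeness of tensor powers) to keep the surrogates inside $\mathcal{F}$. The crucial step is then to show that passing from $\sigma$ to its recovered IID form changes $\frac{1}{N}D(\rho^{\otimes N}\,\|\,\cdot\,)$ by only $o(1)$, so the effective per-copy relative entropy against the IID surrogate remains at least $y'-o(1)>y$. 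Granting this, the achievability half of the ordinary IID quantum Stein's lemma~\cite{hiai1991proper,887855} applied to the pairs $(\rho,\omega)$, combined back through the de Finetti inequality (whose polynomial overhead contributes only $o(N)$ to the exponent), produces tests with type-II rate exceeding $y$ and vanishing type-I error; the uniformity in $\sigma$ follows since every estimate depends on $\sigma$ only through $D(\rho^{\otimes N}\,\|\,\sigma)\ge y'N$ and the fixed constants $y,y'$ and $\dim\mathcal{H}$.

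The main obstacle is exactly that penultimate step---controlling the change of the quantum relative entropy under the approximate recovery of the IID structure, uniformly over free states. This is precisely where Ref.~\cite{Brandao2010} invoked Lemma~III.9 and where Ref.~\cite{berta2023gap} located the logical gap: the naive continuity estimates for $D(\rho\,\|\,\cdot\,)$ degenerate as the second argument approaches the boundary of the full-rank states, which is unavoidable here because the free states and their IID surrogates need not be full rank. My plan is to replace that step by the recently developed boundary-robust continuity bound for the relative entropy of Refs.~\cite{10206734,10129917}, which would bound the perturbation of $D(\rho^{\otimes N}\,\|\,\cdot\,)$ under the de Finetti approximation by a quantity growing only sublinearly in $N$, thereby making the $o(N)$ claim rigorous and uniform. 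Assembling the symmetrization, the uniform relative-entropy bound, the continuity-controlled IID recovery, and IID Stein achievability would give $\liminf_{N\to\infty}\min_{\sigma}\Tr\qty[(\rho^{\otimes N}-2^{yN}\sigma)_+]\ge 1$, which with the trivial bound $\le 1$ establishes~\eqref{eq:main_limit_unity}.
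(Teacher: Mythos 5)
Your proposal correctly locates the crux---controlling how the relative entropy changes when the non-IID free state is replaced by an (approximately) IID surrogate---but the fix you propose does not close it. The continuity bound of Refs.~\cite{10206734,10129917} (Theorem~5.11 of Ref.~\cite{10129917}) controls $\left|D\left(\rho^{\otimes N}\middle|\middle|\sigma_1\right)-D\left(\rho^{\otimes N}\middle|\middle|\sigma_2\right)\right|$ by $\frac{3\log_2^2(1/\tilde m)}{1-\tilde m}\sqrt{\epsilon/2}$, where $\tilde m$ must satisfy $\tilde m\,\rho^{\otimes N}\le\sigma_i$; for $N$-copy states the admissible $\tilde m$ is exponentially small in $N$, so the prefactor scales as $\Theta(N^2)$ and the bound is only $o(N)$ when the trace-distance perturbation satisfies $\epsilon_N=o(1/N^2)$. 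The de Finetti/post-selection approximation you invoke on the $\sigma$ side produces surrogates whose trace distance to $\sigma$ is at best polynomially small (the exponential de Finetti error on the state side is of order $\sqrt{R}/N$ with $R=\Theta(N^{2/3})$), which is far too large for the continuity bound to give a sublinear correction. So the step you flag as ``making the $o(N)$ claim rigorous'' would in fact fail as written, and your additional claim that the dominant de Finetti components ``lie near free states'' is also unsupported---Property~\ref{p3} gives freeness of marginals of $\sigma$, not of the components in a de Finetti decomposition.

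The paper escapes this by arguing the contrapositive rather than constructing tests directly. Assuming the liminf of $\min_\sigma\Tr[(\rho^{\otimes N}-2^{yN}\sigma)_+]$ lies in $(0,1)$, it extracts (via Lemma~\ref{lem:operator_inequality_fidelity}) a permutation-invariant $\rho_N$ with $\rho_N\le(2^{yN}/\mu_N)\sigma_N$ and non-negligible fidelity to $\rho^{\otimes N}$, and applies the almost-power-state machinery to $\rho_N$, expressing every approximation as a \emph{single operator inequality}. When the resulting additive correction $c_N\tilde\Delta$ (with $c_N$ only polynomially small) is transferred to the $\sigma_N$ side through the inequality $\rho_N\le(2^{yN}/\mu_N)\sigma_N$, it acquires a factor $2^{-yN}$: the perturbed state $\tilde\sigma_{N-o(N)}$ is within trace distance $\epsilon_N=O(2^{-yN})=o(1/N^2)$ of a genuine free state. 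Only then is the continuity bound strong enough, and this is precisely where the hypothesis $y>0$ enters. Your direct-achievability architecture never produces this exponential suppression of the perturbation, so the continuity bound cannot do the work you assign to it; without a replacement mechanism the proposal has a genuine gap at its central step.
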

Here, we highlight our main ideas to prove Theorem~\ref{thm:main_direct_part}. 
The details of our proof are given in Supplementary Information.
In the proof, we will show the contrapositive of the statement; that is, we will show for $y>0$, if 
\begin{equation}
\label{eq:main_non_zero_limit}
\liminf_{N\to\infty}\min_{\sigma\in\mathcal{F}\qty(\mathcal{H}^{\otimes N})}\qty{\Tr\qty[\qty(\rho^{\otimes N}-2^{yN}\sigma)_+]} \in (0,1),
\end{equation}
then
\begin{equation}
    \label{eq:main_y_geq_regularized_relative_entropy}
    y\geq \lim_{N\to\infty}\min_{\sigma\in\mathcal{F}\qty(\mathcal{H}^{\otimes N})}\qty{\frac{D\left(\rho^{\otimes N}\middle|\middle|\sigma\right)}{N}}.
\end{equation} 

In the first step of our proof, in the same way as Ref.~\cite{Brandao2010}, for the sequence $\{\sigma_N \in \mathcal{F}(\mathcal{H}^{\otimes N})\}_{N \in \mathbb{N}}$ of optimal free states in the minimization of~\eqref{eq:main_non_zero_limit},
we find a state $\rho_N \in \mathcal{D}(\mathcal{H}^{\otimes N})$ that has a non-negligible fidelity to $\rho^{\otimes N}$ for each $N$ and satisfies 
\begin{equation}
    \label{eq:main_inequality_1}
    \rho_N \leq 2^{yN + o(N)} \sigma_N. 
\end{equation}
We can take $\sigma_N$ and $\rho_N$ as permutation-invariant states, but $\sigma_N$ and $\rho_N$ may be in non-IID forms, from which the challenge arises. 

In the second step, 
to address the non-IIDness, similar to Ref.~\cite{Brandao2010}, we use the idea that the symmetry under the permutation of the $N$ subsystems implies that almost all states of the $N$ subsystems are virtually identical and independent of each other~\cite{renner2006security,renner2007symmetry}, which approximately recovers the IID structure of the state. 
For this purpose, Lemma~III.5 of Ref.~\cite{Brandao2010} represented this approximation by combining an operator inequality and the trace distance, but for our proof, it is crucial to represent this approximation solely in terms of a single operator inequality.
In particular, by tracing out a sublinear number $o(N)$ of subsystems, we relate the permutation-invariant state $\rho_N$ to the IID state $\rho^{\otimes N}$ in terms of such a single operator inequality; 
more formally, for some state $\tilde{\Delta}_{N-o(N)} \in \mathcal{D}(\mathcal{H}^{\otimes N-o(N)})$ and sequence $\{\epsilon_N>0\}_{N\in\mathbb{N}}$ of asymptotically vanishing parameters $\lim_{N\to\infty}\epsilon_N=0$, our operator inequality reads
\begin{align}
    \label{eq:main_inequality_4}
        \rho^{\otimes N-o(N)}\leq 2^{o(N)} \qty(\Tr_{1,2,\ldots,o(N)}\qty[\rho_N] + c_N\tilde{\Delta}_{N-o(N)}), 
\end{align}
where $c_N$ is a coefficient of order $c_N=O\qty(\epsilon_N 2^{yN})$, and $\Tr_{1,2,\ldots}$ is the partial trace over subsystems $1,2,\ldots$ out of the $N$ subsystems (See also Supplementary Information for the definitions of $\tilde{\Delta}_{N-o(N)}$, $\epsilon_N$, and $c_N$).
Using~\eqref{eq:main_inequality_1} and~\eqref{eq:main_inequality_4}, we construct a single operator inequality 
\begin{equation}
        \label{eq:main_inequality_2}
    \rho^{\otimes N - o(N)} \leq 2^{yN + o(N)}\tilde{\sigma}_{N-o(N)}, 
\end{equation}
with a state $\tilde{\sigma}_{N-o(N)}$ in the form of
\begin{equation}
    \label{eq:asym_free_state}
    \tilde{\sigma}_{N-o(N)} \coloneqq \frac{\Tr_{1,2,\ldots,o(N)}\qty[\sigma_{N}]+\frac{\epsilon_N}{2}\tilde{\Delta}_{N-o(N)}}{1 + \frac{\epsilon_N}{2}}. 
\end{equation}
Then, we show that the operator inequality~\eqref{eq:main_inequality_2} can be converted into a bound of the quantum relative entropy 
\begin{equation}
    \label{eq:relative_entropy_bound_tilde}
    D\left(\rho^{\otimes N - o(N)}\middle|\middle| \tilde{\sigma}_{N-o(N)}\right) \leq yN + o(N).  
\end{equation}

In the third step, 
we develop a technique for properly dealing with the difference between the quantum relative entropies with respect to a free state $\sigma$ in~\eqref{eq:main_y_geq_regularized_relative_entropy} and the state $\tilde{\sigma}_{N-o(N)}$ in~\eqref{eq:relative_entropy_bound_tilde} obtained from our approximation, which does not appear in the analysis of Ref.~\cite{Brandao2010}.
To this goal, we introduce a concept of asymptotically free states.
We define a set of asymptotically free states as
\begin{align}
        \label{eq:main_asymptotically_free_states}
&\mathcal{F}^{(\epsilon_N)}\qty(\mathcal{H}^{\otimes N})\nonumber\\ 
&\coloneqq\qty{\tilde{\sigma}\in\mathcal{D}\qty(\mathcal{H}^{\otimes N}):\exists\sigma\in\mathcal{F}\qty(\mathcal{H}^{\otimes N}),\,\|\tilde{\sigma}-\sigma\|_1\leq\epsilon_N},
\end{align}
which is the set of free states up to approximation within $\epsilon_N$ in terms of the trace distance.
With this definition, we show that $\tilde{\sigma}_{N - o(N)}$ in~\eqref{eq:asym_free_state} is an asymptotically free state, i,e.,
\begin{equation}
    \tilde{\sigma}_{N - o(N)}\in\mathcal{F}^{(\epsilon_N)}\qty(\mathcal{H}^{\otimes N}).
\end{equation}  
By taking the minimum over $\tilde{\sigma} \in \mathcal{F}^{(\epsilon_N)}(\mathcal{H}^{\otimes N - o(N)})$, it follows from~\eqref{eq:relative_entropy_bound_tilde} that
\begin{equation}
    \label{eq:approximately_free_bound}
\min_{\tilde{\sigma}\in\mathcal{F}^{(\epsilon_N)}\qty(\mathcal{H}^{\otimes N - o(N)})}\qty{\frac{D\left(\rho^{\otimes N-o(N)}\middle|\middle|\tilde{\sigma}\right)}{N - o(N)}} \leq y + o(1), 
\end{equation}
leading to 
\begin{equation}
    \label{eq:main_inequality_3}
    y\geq \limsup_{N\to\infty} \min_{\tilde{\sigma}\in\mathcal{F}^{(\epsilon_N)}\qty(\mathcal{H}^{\otimes N})}\qty{\frac{D\left(\rho^{\otimes N}\middle|\middle|\tilde{\sigma}\right)}{N}}. 
\end{equation}
This inequality almost shows what we aim to prove, i.e.,~\eqref{eq:main_y_geq_regularized_relative_entropy}; however, a difference remains in that the minimization of the quantum relative entropy in~\eqref{eq:main_inequality_3} is taken with respect to the set of asymptotically free states $\mathcal{F}^{(\epsilon_N)}\qty(\mathcal{H}^{\otimes N})$, rather than the set of free states $\mathcal{F}\qty(\mathcal{H}^{\otimes N})$ required for the regularized relative entropy of resource in~\eqref{eq:main_y_geq_regularized_relative_entropy}.

In the final step, we build up a technique to address this difference by using a continuity bound of the quantum relative entropy recently discovered in Refs.~\cite{10206734,10129917}.
Using this continuity bound, we identify the upper and lower bounds of the quantity appearing on the left-hand side of~\eqref{eq:main_inequality_3}
\begin{align}
    \label{eq:main_looser_bound_lower}
        &\min_{\sigma \in \mathcal{F}\qty(\mathcal{H}^{\otimes N})}\frac{D\left(\rho^{\otimes N}\middle|\middle|\sigma\right)}{N} - O\qty(\frac{\qty(\log_2\qty(\frac{1}{\epsilon_N})+N)^2\sqrt{\epsilon_N}}{N})\nonumber \\
        &\leq \min_{\tilde{\sigma} \in \mathcal{F}^{(\epsilon_N)}\qty(\mathcal{H}^{\otimes N})}\frac{D\left(\rho^{\otimes N}\middle|\middle|\tilde{\sigma}\right)}{N} \\ 
    \label{eq:main_looser_bound_upper}
        &\leq \min_{\sigma \in \mathcal{F}\qty(\mathcal{H}^{\otimes N})}\frac{D\left(\rho^{\otimes N}\middle|\middle|\sigma\right)}{N}.
\end{align}
We further show that under our assumption of
\begin{equation}
    y>0,
\end{equation}
the error parameter $\epsilon_N$ introduced in~\eqref{eq:main_inequality_4} decays sufficiently fast as $N\to\infty$; in particular, 
\begin{equation}
    \epsilon_N =O\qty(2^{-yN})= o\qty(1/N^2).
\end{equation}
Therefore, the $O\qty(\qty(\log_2\qty(1/\epsilon_N)+N)^2\sqrt{\epsilon_N}/N)$ term on the left-hand side of~\eqref{eq:main_looser_bound_lower} converges to zero as $N\to\infty$. 
Consequently, by taking the limit of $N\to \infty$, we identify an equivalence relation
\begin{align}
        \label{eq:main_relative_entropy_equivalence}
    &\limsup_{N\to\infty}\min_{\tilde{\sigma}\in\mathcal{F}^{(\epsilon_N)}\qty(\mathcal{H}^{\otimes N})}\qty{\frac{D\left(\rho^{\otimes N}\middle|\middle|\tilde{\sigma}\right)}{N}}\nonumber \\ 
    &=\liminf_{N\to\infty}\min_{\tilde{\sigma}\in\mathcal{F}^{(\epsilon_N)}\qty(\mathcal{H}^{\otimes N})}\qty{\frac{D\left(\rho^{\otimes N}\middle|\middle|\tilde{\sigma}\right)}{N}}\nonumber\\ 
    &=\lim_{N\to\infty}\min_{\sigma\in\mathcal{F}\qty(\mathcal{H}^{\otimes N})}\left\{\frac{D\left(\rho^{\otimes N}\middle|\middle|\sigma\right)}{N}\right\}. 
\end{align}
Applying this relation to~\eqref{eq:main_inequality_3}, we obtain the desired inequality
\begin{equation}
    y \geq \lim_{N\to\infty}\min_{\sigma\in\mathcal{F}\qty(\mathcal{H}^{\otimes N})}\left\{\frac{D\left(\rho^{\otimes N}\middle|\middle|\sigma\right)}{N}\right\}, 
\end{equation}
which completes the proof of the direct part of the generalized quantum Stein's lemma.

Lastly, we clarify our technical contributions in comparison with Ref.~\cite{Brandao2010}. 
To prove~\eqref{eq:main_y_geq_regularized_relative_entropy}, Ref.~\cite{Brandao2010} aimed to show
    \begin{equation}
        \label{eq:prev_statement_1}
\min_{\sigma\in\mathcal{F}\qty(\mathcal{H}^{\otimes N - o(N)})}\qty{\frac{D\left(\rho^{\otimes N - o(N)}\middle|\middle|\sigma\right)}{N - o(N)}} \leq y + o(1), 
\end{equation}
in place of~\eqref{eq:approximately_free_bound} of our proof.
For this purpose, Ref.~\cite{Brandao2010} aimed to develop Lemma III.7 of Ref.~\cite{Brandao2010} in analogy with non-lockability of the relative entropy of entanglement~\cite{Horodecki_locking2005} and showed that Lemma III.7 of Ref.~\cite{Brandao2010} would imply the direct part of the generalized quantum Stein's lemma. 
However, to complete the proof of Lemma III.7, Ref.~\cite{Brandao2010} needed to invoke Lemma III.9 of Ref.~\cite{Brandao2010}, which includes the insoluble logical gap as pointed out by Ref.~\cite{berta2023gap} and thus impedes the proof of~\eqref{eq:prev_statement_1} in the strategy taken in Ref.~\cite{Brandao2010}.
In contrast, we successfully circumvent these obstacles and make a detour by proving an apparently looser bound~\eqref{eq:approximately_free_bound}.
Then, using the continuity bound of the quantum relative entropy in Refs.~\cite{10206734,10129917}, we prove the new relation~\eqref{eq:main_relative_entropy_equivalence} indicating that what we have shown in~\eqref{eq:approximately_free_bound} asymptotically coincides with the desired bound~\eqref{eq:main_y_geq_regularized_relative_entropy}. 

\section*{Data availability}

No data is used in this study.

\section*{Code availability}

No code is used in this study.

\clearpage
\onecolumngrid

\renewcommand{\theequation}{S\arabic{equation}}
\renewcommand{\thetheorem}{S\arabic{theorem}}
\renewcommand{\theproposition}{S\arabic{proposition}}
\renewcommand{\thelemma}{S\arabic{theorem}}
\renewcommand{\thecorollary}{S\arabic{theorem}}
\renewcommand{\thedefinition}{S\arabic{definition}}
\setcounter{equation}{0}
\setcounter{theorem}{0}

\section*{Supplementary Information}

Supplementary Information of ``Generalized Quantum Stein's Lemma: Redeeming Second Law of Resource Theories'' is organized as follows. 
In Sec.~\ref{sec:preliminaries}, we introduce the background materials needed for our main results and proofs. 
In Sec.~\ref{sec:proof of direct part}, we present the details of our proof of the direct part of the generalized quantum Stein's lemma summarized in Methods of the main text, which is the main contribution of our work. 

\section{Preliminaries}
\label{sec:preliminaries}
In this section, we introduce the background materials. 
In Sec.~\ref{subsec:norms}, we define basic concepts, including operators and norms, and introduce our notation. 
In Sec.~\ref{subsec:operator_inequalities}, we list operator inequalities used for our analysis. 
In Sec.~\ref{subsec:entropies}, we summarize definitions of entropic quantities and their properties. 
In Sec.~\ref{subsec:almost_power_states}, we review the concept of almost power states~\cite{Horodecki3_Leung_Oppenheim2008a,Horodecki3_Leung_Oppenheim2008b,renner2006security,renner2007symmetry,Brandao2010}. 
In Sec.~\ref{subsec:QRTs}, we introduce the framework of quantum resource theories (QRTs). 

\subsection{General definitions}
\label{subsec:norms}

As described in the main text, we represent a quantum system by a finite-dimensional complex Hilbert space $\mathcal{H}=\mathbb{C}^d$ for some finite $d$.
The Euclidean norm of a vector $\ket{x} \in \mathcal{H}$ is denoted by $\|\ket{x}\|$. 
The set of quantum states, i.e., positive semidefinite operators with unit trace, of the system $\mathcal{H}$ is denoted by $\mathcal{D}(\mathcal{H})$. 
A vector $\ket{\rho}$ with $\|\ket{\rho}\| = 1$ is called a pure state, which may also be called a state. 
The identity operator is denoted by $\mathds{1}$.
The trace of a linear operator $X$ is denoted by $\Tr[X]$.
A composite system is represented in terms of the tensor product; that is, for any $N\in\mathbb{N}$ with $\mathbb{N}\coloneqq\{1,2,\ldots\}$,  $\mathcal{H}^{\otimes N}$ represents a system composed of $N$ subsystems with  $\mathcal{H}$ representing each subsystem.
For quantum systems $\mathcal{H},\mathcal{H}^E$, and a state $\rho \in \mathcal{D}(\mathcal{H})$, a pure state $\ket{\rho} \in \mathcal{H}\otimes\mathcal{H}^E$ is called a purification of $\rho$ if $\rho = \Tr_{E}\qty[\ket{\rho}\bra{\rho}]$, where we write the partial trace over the auxiliary system $\mathcal{H}^E$ for the purification as $\Tr_{E}$. 
For a state $\rho\in\mathcal{D}\qty(\mathcal{H}^{\otimes N})$ and any $n\in\{1,\ldots,N\}$, the partial trace over the $n$th subsystem is denoted by $\Tr_n\qty[\rho]$.
For a system $\mathcal{H}_1\otimes \mathcal{H}_2\otimes \cdots \otimes\mathcal{H}_N$ composed of a sequence $\{\mathcal{H}_n\}_{n=1}^N$ of $N$ subsystems, we write the partial trace over the first $M$ subsystems as $\Tr_{1,\ldots,M}$.
See also Refs.~\cite{N4,watrous_2018,Wilde_2017} for other conventional notations.

Regarding the asymptotic notation, for $N\in\mathbb{N}$ and functions $f(N)$ and $g(N)$, we write
\begin{equation}
    f(N)=O(g(N))
\end{equation}
if $\limsup_{N\to\infty}\frac{|f(N)|}{g(N)}<\infty$.
Also, we write
\begin{equation}
    f(N)=o(g(N))
\end{equation}
if $\lim_{N\to\infty}\frac{f(N)}{g(N)}=0$.
We write
\begin{equation}
    f(N)=\Theta(g(N))
\end{equation}
if $\limsup_{N\to\infty}\frac{|f(N)|}{g(N)}<\infty$ and $\liminf_{N\to\infty}\frac{f(N)}{g(N)}>0$.

The trace norm of a linear operator $X$ is defined as 
\begin{equation}
    \|X\|_1 \coloneqq \Tr\qty[\sqrt{X^\dagger X}], 
\end{equation}
where $X^\dag$ is the adjoint operator of $X$.
The fidelity between two positive semidefinite operators is defined as
\begin{equation}
\label{eq:fidelity}
F(P,Q)\coloneqq\left\|\sqrt{P}\sqrt{Q}\right\|_1.
\end{equation}
For a Hermitian operator $A$ with spectral decomposition $A=\sum_n \lambda_n\ket{n}\bra{n}$, we write 
\begin{equation}
\qty(A)_+\coloneqq\sum_{n:\lambda_n>0} \lambda_n\ket{n}\bra{n}.
\end{equation}
For a Hermitian operator $A$, we let $\lambda_{\min}(A)$ denote the minimum eigenvalue of $A$.

The set of quantum operations from an input system $\mathcal{H}_\mathrm{in}$ to an output system $\mathcal{H}_\mathrm{out}$ is represented by that of completely positive and trace-preserving (CPTP) linear maps, denoted by $\mathcal{C}\qty(\mathcal{H}_\mathrm{in}\to\mathcal{H}_\mathrm{out})$.
The trace norm has the monotonicity with respect to the application of quantum operations; that is, for any linear operator $X$ and a CPTP linear map $\mathcal{E}$, it holds that $\|\mathcal{E}\qty(X)\|_1 \leq \|X\|_1$. 
Similarly, for a Hermitian operator $A$,  the trace of $\qty(A)_+$ has the following monotonicity with respect to the application of a CPTP linear map $\mathcal{E}$. 
\begin{lemma}[Mononicity of the trace of $\qty(A)_+$]\label{lem:monotonicity}
    For any CPTP linear map $\mathcal{E}$ and any Hermitian operator $A$, we have
    \begin{equation}
        \Tr\qty[\qty(\mathcal{E}\qty(A))_+]\leq\Tr\qty[\qty(A)_+].
    \end{equation}
\end{lemma}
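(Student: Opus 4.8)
The plan is to reduce the statement to a variational characterization of $\Tr\qty[\qty(A)_+]$ and then exploit the properties of the adjoint (Heisenberg-picture) map associated with $\mathcal{E}$. First, I would establish the identity
\begin{equation}
    \Tr\qty[\qty(A)_+]=\max_{0\leq P\leq\mathds{1}}\Tr\qty[PA],
\end{equation}
valid for any Hermitian operator $A$. This follows directly from the spectral decomposition $A=\sum_n\lambda_n\ket{n}\bra{n}$ used in the definition of $\qty(A)_+$: writing $\Tr\qty[PA]=\sum_n\lambda_n\bra{n}P\ket{n}$ and noting that the constraint $0\leq P\leq\mathds{1}$ forces $0\leq\bra{n}P\ket{n}\leq 1$, the objective is maximized by taking $\bra{n}P\ket{n}=1$ on the positive-eigenvalue subspace and $\bra{n}P\ket{n}=0$ elsewhere. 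The projector onto the positive part of $A$ is an admissible maximizer, and the optimal value equals $\sum_{n:\lambda_n>0}\lambda_n=\Tr\qty[\qty(A)_+]$. Existence of the maximum is guaranteed since the feasible set is compact and the objective continuous in finite dimensions.

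Next, I would introduce the adjoint $\mathcal{E}^\dagger$ of $\mathcal{E}$ with respect to the Hilbert--Schmidt inner product, characterized by $\Tr\qty[P\mathcal{E}\qty(A)]=\Tr\qty[\mathcal{E}^\dagger\qty(P)A]$ for all operators $P,A$. Because $\mathcal{E}$ is CPTP, its adjoint $\mathcal{E}^\dagger$ is completely positive and unital, i.e.\ $\mathcal{E}^\dagger\qty(\mathds{1})=\mathds{1}$; here trace preservation of $\mathcal{E}$ is exactly what yields unitality of $\mathcal{E}^\dagger$. The key step is then to take $P^\ast$ to be the projector onto the positive part of $\mathcal{E}\qty(A)$, so that $\Tr\qty[\qty(\mathcal{E}\qty(A))_+]=\Tr\qty[P^\ast\mathcal{E}\qty(A)]$, and to verify that $Q\coloneqq\mathcal{E}^\dagger\qty(P^\ast)$ is feasible for the variational problem associated with $A$. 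Indeed, positivity of $\mathcal{E}^\dagger$ gives $Q\geq 0$ from $P^\ast\geq 0$, while applying $\mathcal{E}^\dagger$ to $\mathds{1}-P^\ast\geq 0$ and invoking unitality yields $\mathds{1}-Q\geq 0$; hence $0\leq Q\leq\mathds{1}$.

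Combining the two ingredients, I would conclude
\begin{equation}
    \Tr\qty[\qty(\mathcal{E}\qty(A))_+]=\Tr\qty[P^\ast\mathcal{E}\qty(A)]=\Tr\qty[QA]\leq\max_{0\leq Q^\prime\leq\mathds{1}}\Tr\qty[Q^\prime A]=\Tr\qty[\qty(A)_+],
\end{equation}
which is precisely the asserted monotonicity. The only genuinely nontrivial ingredients are the variational identity of the first paragraph and the unitality of $\mathcal{E}^\dagger$. I do not expect a real obstacle here; the main point to state carefully is that complete positivity can be weakened to mere positivity throughout, and that it is trace preservation (equivalently, unitality of the adjoint) rather than the full CPTP structure that drives the inequality. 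As a sanity check, this also clarifies why the eigenvalue-weighted definition of $\qty(A)_+$ (the positive part) is essential: the analogous statement for the rank-counting convention would fail, since a CPTP map can increase the dimension of the positive-eigenvalue subspace.
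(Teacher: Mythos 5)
Your proof is correct, and it takes a genuinely different route from the paper's. You establish the variational characterization $\Tr\qty[\qty(A)_+]=\max_{0\leq P\leq\mathds{1}}\Tr\qty[PA]$ and then push the optimal projector for $\mathcal{E}\qty(A)$ back through the adjoint map, using that $\mathcal{E}^\dagger$ is positive and unital; this cleanly isolates that only positivity and trace preservation of $\mathcal{E}$ (equivalently, positivity and unitality of $\mathcal{E}^\dagger$) are needed. The paper instead asserts the identity $\qty(\mathcal{E}\qty(A))_+=\mathcal{E}\qty(\qty(A)_+)$ and concludes via monotonicity of the trace norm. That identity is in fact false in general --- a channel need not commute with taking the positive part (e.g.\ for the completely depolarizing channel and a traceless $A$ with nonzero positive part, $\qty(\mathcal{E}\qty(A))_+=0$ while $\mathcal{E}\qty(\qty(A)_+)\neq 0$) --- so your argument is the more robust one; the paper's trace-norm strategy can be repaired by instead writing $\Tr\qty[\qty(X)_+]=\frac{1}{2}\qty(\|X\|_1+\Tr\qty[X])$ and combining trace-norm monotonicity with trace preservation, which yields the same inequality. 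Your closing remark that the statement would fail for the rank-counting convention of $\qty(\cdot)_+$ is also apt, since the paper uses that convention in Methods but the eigenvalue-weighted one in the Supplementary Information where this lemma lives.
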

\begin{proof}
    Since $\mathcal{E}$ is a CPTP linear map, it holds that 
    \begin{equation}
        \qty(\mathcal{E}\qty(A))_{+} = \mathcal{E}\qty(\qty(A)_{+}). 
    \end{equation}
    Therefore, we have 
    \begin{align}
        \Tr\qty[\qty(\mathcal{E}\qty(A))_+]
        = \Tr\qty[\mathcal{E}\qty(\qty(A)_{+})]
        = \|\mathcal{E}\qty(\qty(A)_{+})\|_{1} \leq \|\qty(A)_{+}\|_{1} 
        = \Tr\qty[\qty(A)_{+}], 
    \end{align}
    where the inequality follows from the monotonicity of the trace norm. 
\end{proof}

\subsection{Operator inequalities}
\label{subsec:operator_inequalities}
In this section, we review several facts related to operator inequalities. 
First, we see the operator monotonicity of the partial trace. 
\begin{lemma}[Operator monotonicity of partial trace]\label{lem:operator_inequality_partial_trace}
    For any Hermitian operators $A$ and $B$ acting on a composite system $\mathcal{H}_1\otimes\mathcal{H}_2$ and satisfying
    \begin{equation}
        A\geq B,
    \end{equation}
    it holds that
    \begin{equation}
        \Tr_1\qty[A]\geq\Tr_1\qty[B].
    \end{equation}
\end{lemma}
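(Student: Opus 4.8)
The plan is to reduce the claim to the positivity of the partial trace viewed as a linear map. By definition, the operator inequality $A\geq B$ means that $A-B$ is positive semidefinite, and the partial trace is linear so that $\Tr_1\qty[A]-\Tr_1\qty[B]=\Tr_1\qty[A-B]$. Hence it suffices to show that $\Tr_1$ sends positive semidefinite operators to positive semidefinite operators; applying this to the positive semidefinite operator $A-B$ then yields $\Tr_1\qty[A]\geq\Tr_1\qty[B]$.

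First I would fix an orthonormal basis $\{\ket{i}\}_i$ of $\mathcal{H}_1$ and use the standard representation of the partial trace, namely $\Tr_1\qty[C]=\sum_i\qty(\bra{i}\otimes\mathds{1})\,C\,\qty(\ket{i}\otimes\mathds{1})$ for any operator $C$ on $\mathcal{H}_1\otimes\mathcal{H}_2$. Setting $C\coloneqq A-B\geq 0$, I would take an arbitrary vector $\ket{\psi}\in\mathcal{H}_2$, write $\ket{\Psi_i}\coloneqq\ket{i}\otimes\ket{\psi}$, and compute $\bra{\psi}\Tr_1\qty[C]\ket{\psi}=\sum_i\bra{\Psi_i}C\ket{\Psi_i}$. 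Each summand is nonnegative because $C\geq 0$, so the whole sum is nonnegative; since $\ket{\psi}$ was arbitrary, this gives $\Tr_1\qty[C]\geq 0$, which is exactly the desired conclusion.

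Alternatively, and essentially equivalently, I could simply invoke the fact that the partial trace is a completely positive linear map (indeed, after trace normalization it belongs to $\mathcal{C}\qty(\mathcal{H}_1\otimes\mathcal{H}_2\to\mathcal{H}_2)$). Complete positivity immediately implies ordinary positivity, hence the preservation of the Loewner order: $A-B\geq 0$ forces $\Tr_1\qty[A-B]\geq 0$. This packages the same content at a higher level and avoids the explicit basis manipulation.

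I do not expect any genuine obstacle here, as the lemma is elementary. The only point requiring minor care is the bookkeeping of the tensor-product notation in the basis-expansion formula for $\Tr_1$ and the verification that the argument does not depend on the choice of basis $\{\ket{i}\}_i$; both are routine. The self-contained basis argument is preferable to citing complete positivity if the paper wants to keep the preliminaries elementary and avoid invoking the structure theory of CPTP maps at this stage.
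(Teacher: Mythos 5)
Your main argument is correct and is essentially identical to the paper's proof: both expand $\bra{\psi}\Tr_1[A-B]\ket{\psi}$ as a sum of nonnegative terms $(\bra{j}\otimes\bra{\psi})(A-B)(\ket{j}\otimes\ket{\psi})$ over an orthonormal basis of $\mathcal{H}_1$. The alternative route via complete positivity of the partial trace is also valid but unnecessary; the paper, like your preferred option, keeps the elementary basis argument.
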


\begin{proof}
    For any orthonormal basis $\{\ket{j}\}_j$ of $\mathcal{H}_1$ and any pure state $\ket{\psi}$ of $\mathcal{H}_2$, it follows from $A\geq B$ that
    \begin{equation}
        \qty(\bra{j}\otimes\bra{\psi})\qty(A-B)\qty(\ket{j}\otimes\ket{\psi})\geq 0.
    \end{equation}
    Therefore, we have
    \begin{equation}
        \bra{\psi}\qty(\Tr_1\qty[A]-\Tr_1\qty[B])\ket{\psi}=\sum_j\qty(\bra{j}\otimes\bra{\psi})\qty(A-B)\qty(\ket{j}\otimes\ket{\psi})\geq 0.
    \end{equation}
\end{proof}

Next, we see that $\log_2$ also has operator monotonicity due to the L\"{o}wner-Heinz Theorem~\cite{carlen2010trace}.  
\begin{lemma}[Theorem~2.6 of Ref.~\cite{carlen2010trace}]\label{lem:operator_monotonicity_log}
    For any positive definite operators $P>0$ and $Q>0$ satisfying
    \begin{equation}
        P\geq Q,
    \end{equation}
    it holds that
    \begin{equation}
        \log_2\qty(P)\geq\log_2\qty(Q).
    \end{equation}
\end{lemma}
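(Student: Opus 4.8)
The plan is to reduce the statement to two ingredients: the operator antitonicity of inversion on positive definite operators, and the standard integral representation of the logarithm, which transfers the scalar monotonicity of $t \mapsto -t^{-1}$ to the operator level. Everything beyond the antitonicity lemma is then a routine consequence of the fact that integration preserves operator inequalities.

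First I would establish the auxiliary fact that inversion is operator antitone: if $P \geq Q > 0$, then $P^{-1} \leq Q^{-1}$. The hypothesis $P \geq Q$ is equivalent, after conjugating by $Q^{-1/2}$, to $Q^{-1/2} P Q^{-1/2} \geq \mathds{1}$. Writing $A \coloneqq Q^{-1/2} P Q^{-1/2}$, this single positive operator satisfies $A \geq \mathds{1}$, so by the functional calculus applied to each eigenvalue $\lambda \geq 1$ of $A$ (where $\lambda^{-1} \leq 1$) we obtain $A^{-1} \leq \mathds{1}$. Since $A^{-1} = Q^{1/2} P^{-1} Q^{1/2}$, conjugating $Q^{1/2} P^{-1} Q^{1/2} \leq \mathds{1}$ by $Q^{-1/2}$ yields $P^{-1} \leq Q^{-1}$.

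Next I would invoke the integral representation valid for every scalar $x > 0$,
\begin{equation}
\log_2 x = \frac{1}{\ln 2}\int_0^\infty \qty(\frac{1}{1+t} - \frac{1}{x+t})\, dt,
\end{equation}
which follows by explicit antidifferentiation, and which lifts through the functional calculus to positive definite operators as
\begin{equation}
\log_2 P = \frac{1}{\ln 2}\int_0^\infty \qty(\frac{1}{1+t}\mathds{1} - \qty(P + t\mathds{1})^{-1})\, dt.
\end{equation}
For each fixed $t \geq 0$, the hypothesis $P \geq Q$ gives $P + t\mathds{1} \geq Q + t\mathds{1} > 0$, so the antitonicity of inversion yields $\qty(P + t\mathds{1})^{-1} \leq \qty(Q + t\mathds{1})^{-1}$, equivalently
\begin{equation}
\frac{1}{1+t}\mathds{1} - \qty(P + t\mathds{1})^{-1} \geq \frac{1}{1+t}\mathds{1} - \qty(Q + t\mathds{1})^{-1}.
\end{equation}
Integrating this pointwise operator inequality over $t \in [0,\infty)$ and dividing by $\ln 2 > 0$ preserves the ordering, since the integral of a one-parameter family of positive semidefinite operators is positive semidefinite, and this gives exactly $\log_2 P \geq \log_2 Q$.

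The main obstacle is the auxiliary antitonicity of inversion: matrix inversion does not respect the operator ordering in any naive termwise sense, and the two-operator comparison $P \geq Q \Rightarrow P^{-1} \leq Q^{-1}$ cannot be read off directly from spectra. The conjugation trick by $Q^{-1/2}$ is precisely what resolves this, reducing the comparison to the single-operator statement $A \geq \mathds{1} \Rightarrow A^{-1} \leq \mathds{1}$, where the scalar monotonicity of $t \mapsto t^{-1}$ applies eigenvalue by eigenvalue. Once this lemma is in hand, the logarithm case requires no further delicacy.
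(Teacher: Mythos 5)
Your proof is correct. The paper itself gives no proof of this lemma---it is imported verbatim as Theorem~2.6 of the cited reference---and your argument (operator antitonicity of inversion via conjugation by $Q^{-1/2}$, followed by the integral representation $\log_2 x = \frac{1}{\ln 2}\int_0^\infty\qty(\frac{1}{1+t}-\frac{1}{x+t})\,dt$ lifted through the functional calculus) is precisely the standard route taken in that reference, so it supplies a sound self-contained substitute for the citation.
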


Moreover, we state the following useful lemma, which was originally proven in Ref.~\cite{4957652} and repeatedly used in Ref.~\cite{Brandao2010}. 
\begin{lemma}[\label{lem:operator_inequality_fidelity}Lemma C.5 of Ref.~\cite{Brandao2010}, Lemma~5 of Ref.~\cite{4957652}]
For any semidefinite operators $X$, $\Delta$, and state $\rho$ satisfying
\begin{equation}
    \rho\leq X+\Delta,
\end{equation}
with
\begin{equation}
    \Tr\qty[\Delta]<1,
\end{equation}
there exists a state $\tilde{\rho}$ such that
\begin{align}
    &\tilde{\rho}\leq\frac{X}{1-\Tr\qty[\Delta]},\\
    &F\qty(\tilde{\rho},\rho)\geq 1-\Tr\qty[\Delta],
\end{align}
where $F$ is the fidelity in~\eqref{eq:fidelity}.
\end{lemma}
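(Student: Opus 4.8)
The plan is to construct $\tilde\rho$ explicitly by a filtering operation built from $X$ and $P:=X+\Delta$. Since $\rho\le X+\Delta=P$ and $0\le X\le P$, the supports of both $\rho$ and $X$ lie in $\operatorname{supp}(P)$, so I read all inverses of $P$ as generalized inverses on $\operatorname{supp}(P)$ and let $\Pi$ denote the projector onto $\operatorname{supp}(P)$. Setting $A:=X^{1/2}P^{-1/2}$, I have $A^\dagger A=P^{-1/2}XP^{-1/2}=:\Lambda$ with $0\le\Lambda\le\Pi$, so $A$ is a contraction, and I take the candidate $\tilde\rho:=A\rho A^\dagger/\Tr[A\rho A^\dagger]$.

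The two inequalities split into an easy algebraic part and a harder analytic part. For the operator bound, $\rho\le P$ gives $P^{-1/2}\rho P^{-1/2}\le\Pi$, hence $A\rho A^\dagger=X^{1/2}(P^{-1/2}\rho P^{-1/2})X^{1/2}\le X^{1/2}\Pi X^{1/2}=X$. For the normalization I compute the acceptance weight $\Tr[A\rho A^\dagger]=\Tr[\rho\Lambda]=1-\Tr[P^{-1/2}\rho P^{-1/2}\Delta]\ge 1-\Tr[\Delta]$, using $P^{-1/2}\rho P^{-1/2}\le\mathds{1}$ and $\Delta\ge 0$. Dividing, $\tilde\rho\le X/\Tr[A\rho A^\dagger]\le X/(1-\Tr[\Delta])$, which is the first claim.

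The fidelity bound is the genuine obstacle, and I would approach it as a gentle-measurement (tender-operator) estimate for the non-self-adjoint filter $A$. Via Uhlmann's theorem, applied to a purification $\ket{\rho}$ of $\rho$ and its image under $A\otimes\mathds{1}$, one obtains the exact identity $F(A\rho A^\dagger,\rho)=\|\rho^{1/2}A\rho^{1/2}\|_1$. Since renormalizing scales the fidelity by $\Tr[A\rho A^\dagger]^{-1/2}$, it then suffices to prove the sharp estimate $\|\rho^{1/2}A\rho^{1/2}\|_1\ge\Tr[A\rho A^\dagger]$, because this yields $F(\tilde\rho,\rho)\ge\Tr[A\rho A^\dagger]^{1/2}\ge(1-\Tr[\Delta])^{1/2}\ge 1-\Tr[\Delta]$. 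Writing $A=\Pi-(P^{1/2}-X^{1/2})P^{-1/2}$ and using $\rho^{1/2}\Pi\rho^{1/2}=\rho$ gives the perturbative identity $\rho^{1/2}A\rho^{1/2}=\rho-R$ with $R:=\rho^{1/2}(P^{1/2}-X^{1/2})P^{-1/2}\rho^{1/2}$, which I would control using $D:=P^{1/2}-X^{1/2}\ge 0$, the Powers--St{\o}rmer inequality $\Tr[D^2]\le\|P-X\|_1=\Tr[\Delta]$, and the fact that $P^{-1/2}\rho^{1/2}$ is a contraction.

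The step I expect to be hardest is obtaining this fidelity estimate with the tight constant. A crude route through the reverse triangle inequality $\|\rho-R\|_1\ge 1-\|R\|_1$ together with a H{\"o}lder bound on $\|R\|_1$ only delivers $F(\tilde\rho,\rho)\ge 1-O(\sqrt{\Tr[\Delta]})$, and moreover naively leaks a non-dimension-free factor $\Tr[\rho P^{-1}]$; the root difficulty is that $X$ and $\Delta$ need not commute, so the polar phase $V$ in $A=V\Lambda^{1/2}$ is nontrivial. To reach the sharp $1-\Tr[\Delta]$ I would instead compare $A\rho A^\dagger=V(\Lambda^{1/2}\rho\Lambda^{1/2})V^\dagger$ with the self-adjoint L{\"u}ders state, for which the clean bound $\|\rho^{1/2}\Lambda^{1/2}\rho^{1/2}\|_1\ge\Tr[\rho\Lambda^{1/2}]\ge\Tr[\rho\Lambda]$ holds immediately from positivity and $\Lambda^{1/2}\ge\Lambda$, and then absorb the rotation $V$ by a fidelity continuity estimate whose error is again governed by the same non-commutativity measured through $\Tr[\Delta]$. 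This reduction of the whole statement to a single sharp norm inequality for $\rho^{1/2}A\rho^{1/2}$ is the heart of the proof and is exactly the content reused from Lemma~5 of Ref.~\cite{4957652}.
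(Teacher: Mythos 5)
The paper itself does not prove this lemma; it is imported verbatim by citation from Lemma~5 of Ref.~\cite{4957652} (Lemma~C.5 of Ref.~\cite{Brandao2010}), so your attempt is measured against the standard proof in that reference. Your construction $\tilde{\rho}\propto A\rho A^\dagger$ with $A=X^{1/2}P^{-1/2}$, $P=X+\Delta$, is exactly the standard one, and your treatment of the first half is complete and correct: $A\rho A^\dagger\leq X$ and $\Tr[A\rho A^\dagger]=1-\Tr[P^{-1/2}\rho P^{-1/2}\Delta]\geq 1-\Tr[\Delta]$ together give $\tilde{\rho}\leq X/(1-\Tr[\Delta])$.

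The gap is the fidelity bound, which you correctly identify as the heart of the proof but do not establish. Your intermediate target $\|\rho^{1/2}A\rho^{1/2}\|_1\geq\Tr[A\rho A^\dagger]$ is false for general contractions (take $A$ unitary with $\rho^{1/2}A\rho^{1/2}=0$), so it could only be proven using the specific structure of $A$; and the route you sketch for that---comparing with the self-adjoint L\"uders state and ``absorbing'' the polar phase $V$ by a continuity estimate---reintroduces exactly the $O(\sqrt{\Tr[\Delta]})$ loss you are worried about, since every natural bound on $V$ passes through $\|P^{1/2}-X^{1/2}\|_2\leq\sqrt{\Tr[\Delta]}$. The sharp constant comes out of a more direct computation that your decomposition $A=\Pi-DP^{-1/2}$ already points toward. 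From Uhlmann you only need the lower bound $F(A\rho A^\dagger,\rho)\geq|\Tr[A\rho]|$, and since $\Tr[A\rho A^\dagger]=\Tr[\rho A^\dagger A]\leq 1$ this gives $F(\tilde{\rho},\rho)\geq|\Tr[A\rho]|$; so it suffices to show $\mathrm{Re}\Tr[A\rho]\geq 1-\Tr[\Delta]$. Write $\rho=P^{1/2}\omega P^{1/2}$ with $0\leq\omega\leq\Pi$ and set $D\coloneqq P^{1/2}-X^{1/2}\geq 0$ (operator monotonicity of the square root). Then $\Tr[A\rho]=\Tr[P^{1/2}X^{1/2}\omega]=1-\Tr[P^{1/2}D\omega]$, and the exact anticommutator identity $P^{1/2}D+DP^{1/2}=(P-X)+D^2=\Delta+D^2$ yields
\begin{equation*}
\mathrm{Re}\Tr[P^{1/2}D\omega]=\tfrac{1}{2}\Tr[\Delta\omega]+\tfrac{1}{2}\Tr[D\omega D]\leq\tfrac{1}{2}\Tr[\Delta]+\tfrac{1}{2}\Tr[D^2]\leq\Tr[\Delta],
\end{equation*}
using $\|\omega\|_\infty\leq 1$ and the Powers--St{\o}rmer-type bound $\Tr[D^2]\leq\Tr[P-X]=\Tr[\Delta]$, which here is just $\Tr[X^{1/2}D]\geq 0$. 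Hence $|\Tr[A\rho]|\geq 1-\Tr[\Delta]$ and the lemma follows with the stated constant; the key step you were missing is to take the real part and use this algebraic identity rather than a norm estimate on the remainder $R$.
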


Finally, we show an operator inequality obtained from the trace distance of two states. 
\begin{lemma}[Operator inequality from trace distance]\label{lem:operator_inequality_distance}
    For any states $\rho$ and $\sigma$ satisfying
    \begin{equation}
    0<\|\rho-\sigma\|_1\leq\epsilon,
    \end{equation}
    it holds that
    \begin{equation}
        \rho\leq\sigma+\epsilon\Delta
    \end{equation}
    with a state
    \begin{align}
    \Delta\coloneqq \frac{\qty(\rho-\sigma)_+}{\Tr\qty[\qty(\rho-\sigma)_+]}\geq 0 
    \end{align}
\end{lemma}

\begin{proof}
    For any states $\rho$ and $\sigma$, we have 
    \begin{equation}
        \rho\leq\sigma+\qty(\rho-\sigma)_+. 
    \end{equation}
    Due to $\|\rho-\sigma\|_1 \neq 0$, we have $\qty(\rho-\sigma)_+ \neq 0$, and thus, $\Delta$ is well defined. 
    Then, 
    \begin{align}
        \qty(\rho-\sigma)_+ 
        &= \Tr\qty[\qty(\rho-\sigma)_+] \Delta \\ 
        & \leq \|\rho - \sigma\|_1 \Delta \\ 
        &\leq \epsilon \Delta. 
    \end{align}
    Hence, it holds that 
    \begin{equation}
        \rho\leq\sigma+\epsilon \Delta. 
    \end{equation}
\end{proof}

\subsection{Entropic quantities}
\label{subsec:entropies}

In this section, we summarize definitions of entropic quantities and their properties relevant to our analysis. 
The quantum entropy of a state $\rho$ is defined as 
\begin{equation}
\label{eq:quantum_entropy}
    H(\rho) \coloneqq -\Tr\qty[\rho\log_2(\rho)]. 
\end{equation}
The quantum relative entropy of a state $\rho$ with respect to a state $\sigma$ is defined as
\begin{equation}
\label{eq:relative_entropy}
D\left(\rho\middle|\middle|\sigma\right)\coloneqq\Tr\qty[\rho\qty(\log_2(\rho)-\log_2(\sigma))]
\end{equation}
if the support of $\rho$ is included in the support of $\sigma$, and $\infty$ otherwise.
The binary entropy function $h:[0,1] \to [0,1]$ is defined as
\begin{equation}
\label{eq:binary_entropy}
    h(p)\coloneqq-p\log_2(p)-(1-p)\log_2(1-p), 
\end{equation}
where we write $0\log_2(0) = 0$.  

The quantum entropy is continuous in the sense of the following lemma. 
\begin{lemma}[Lemma~1 of Ref.~\cite{Winter2016}]
\label{lem:asymptotic_continuity_quantum_entorpy}
    For any $\epsilon$ satisfying $0 \leq \epsilon \leq \frac{1}{2}$ and any states $\rho,\sigma \in \mathcal{D}(\mathcal{H})$ of a $d$-dimensional system $\mathcal{H}$ satisfying $\|\rho - \sigma\|_1 \leq \epsilon$, it holds that 
    \begin{equation}
        \qty|H(\rho)-H(\sigma)| \leq 2\epsilon\log_2(d) + h(2\epsilon), 
    \end{equation}
    where $H$ is the quantum entropy in~\eqref{eq:quantum_entropy}, and $h$ is the binary entropy function in~\eqref{eq:binary_entropy}. 
\end{lemma}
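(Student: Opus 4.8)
The plan is to prove this continuity bound by the maximal-coupling argument, playing concavity of the entropy against a mixing upper bound. If $\rho=\sigma$ the claim is trivial, so assume $\rho\neq\sigma$ and write $\Delta\coloneqq\rho-\sigma$ with positive and negative parts $\Delta_+\coloneqq\qty(\rho-\sigma)_+$ and $\Delta_-\coloneqq\qty(\sigma-\rho)_+$, so that $\Delta=\Delta_+-\Delta_-$ with $\Delta_\pm\geq0$ of mutually orthogonal support. Since $\Tr[\Delta]=0$, both parts share the trace $T\coloneqq\Tr[\Delta_+]=\Tr[\Delta_-]=\tfrac12\|\rho-\sigma\|_1\leq\epsilon/2$. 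Normalizing, set $\omega_\pm\coloneqq\Delta_\pm/T$; the identity $\rho+\Delta_-=\sigma+\Delta_+$ then produces a single auxiliary state
\[
\Omega\coloneqq\frac{1}{1+T}\qty(\rho+T\omega_-)=\frac{1}{1+T}\qty(\sigma+T\omega_+),
\]
written as two binary mixtures with common weight $p\coloneqq T/(1+T)\leq 1/2$.

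First I would squeeze $H(\Omega)$ between its concavity lower bound and its mixing upper bound for the two decompositions. Concavity gives $H(\Omega)\geq (1-p)H(\rho)+pH(\omega_-)$, while the standard estimate $H\qty(\sum_i p_i\eta_i)\leq\sum_i p_iH(\eta_i)+H(\{p_i\})$ applied to the $\sigma$-side gives $H(\Omega)\leq(1-p)H(\sigma)+pH(\omega_+)+h(p)$. Chaining the two and multiplying through by $1+T$ yields $H(\rho)-H(\sigma)\leq T\qty(H(\omega_+)-H(\omega_-))+(1+T)h\qty(\tfrac{T}{1+T})$, and the symmetric computation (interchanging $\rho\leftrightarrow\sigma$ and $\omega_+\leftrightarrow\omega_-$) gives the reverse inequality. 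To control $H(\omega_+)-H(\omega_-)$ I would use that $\omega_+$ and $\omega_-$ have orthogonal supports: since $\rho\neq\sigma$ both are nonzero, so $\mathrm{rank}(\omega_\pm)\leq d-1$, whence $\left|H(\omega_+)-H(\omega_-)\right|\leq\log_2(d-1)$. Combining, I obtain the Audenaert--Winter estimate
\[
\left|H(\rho)-H(\sigma)\right|\leq T\log_2(d-1)+(1+T)h\qty(\frac{T}{1+T}).
\]

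It then remains to relax this to the stated numerical form. I would bound $T\leq\epsilon/2$ and use monotonicity of $h$ on $[0,\tfrac12]$ (together with the hypothesis $\epsilon\leq\tfrac12$, which keeps $T/(1+T)$ on the increasing branch). The logarithmic term is immediate, $T\log_2(d-1)\leq 2\epsilon\log_2 d$, so everything reduces to showing $(1+T)h\qty(\frac{T}{1+T})\leq\bigl(2\epsilon\log_2 d-T\log_2(d-1)\bigr)+h(2\epsilon)$. Since the left side is increasing in $T$ and the right side is minimized over $d\geq 2$ at $d=2$, the binding case is $d=2$, $T=\epsilon/2$, namely the purely one-variable comparison $\qty(1+\tfrac\epsilon2)h\qty(\frac{\epsilon/2}{1+\epsilon/2})-h(2\epsilon)\leq 2\epsilon$ for $\epsilon\in(0,\tfrac12]$. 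This final bookkeeping is where I expect essentially all of the friction: the factor-of-two slack built into the target ($2\epsilon$ rather than the sharp trace-distance coefficient) must be spent correctly, the $d\mapsto d-1$ refinement is genuinely needed at the corner $d=2$ (where the logarithmic term vanishes), and near $\epsilon=\tfrac12$ the binary-entropy term $h(2\epsilon)$ degenerates so that the logarithmic term has to carry the bound. Verifying this scalar inequality---tightest at $\epsilon=\tfrac12$, where it reads $1.25\,h(0.2)\approx0.90\leq1$---completes the proof. Alternatively, one could invoke Winter's symmetrization to sharpen $(1+T)h(T/(1+T))$ to $h(T)$ and obtain the tight bound $T\log_2(d-1)+h(T)$ before relaxing, but the cruder binary term already suffices here.
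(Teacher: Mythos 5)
Your proof is essentially correct, but it is worth noting that the paper does not prove this lemma at all: it is imported verbatim as Lemma~1 of Ref.~\cite{Winter2016} (the Audenaert--Fannes/Winter continuity bound), so there is no in-paper argument to compare against. What you have reconstructed is, in substance, the Audenaert--Winter maximal-coupling proof itself: the Jordan decomposition $\rho-\sigma=\Delta_+-\Delta_-$ with common trace $T=\tfrac12\|\rho-\sigma\|_1$, the single state $\Omega$ admitting two binary-mixture decompositions, concavity played against $H\qty(\sum_i p_i\eta_i)\leq\sum_i p_iH(\eta_i)+H(\{p_i\})$, and the rank-$(d-1)$ bound from the orthogonality of $\Delta_\pm$. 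All of these steps check out, including the monotonicity in $T$ of $(1+T)h\qty(T/(1+T))$ (its derivative is $\log_2\frac{1+T}{T}>0$) and the claim that $2\epsilon\log_2 d-T\log_2(d-1)$ is minimized at $d=2$ for $T\leq\epsilon/2$ (the continuous derivative in $d$ is positive once $d>2\epsilon/(2\epsilon-T)\leq 4/3$). Your route buys a self-contained, elementary proof where the paper relies on a citation; conversely, the paper's statement is a deliberately loosened form of Winter's sharp bound $T\log_2(d-1)+h(T)$, and your pre-symmetrization bound $T\log_2(d-1)+(1+T)h(T/(1+T))$ sits between the two, which is more than enough.

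The only place where your argument is not fully closed is the final scalar inequality $\qty(1+\tfrac{\epsilon}{2})h\qty(\tfrac{\epsilon/2}{1+\epsilon/2})-h(2\epsilon)\leq 2\epsilon$ on $(0,\tfrac12]$, which you verify only at the claimed worst case $\epsilon=\tfrac12$ (where it reads $\approx 0.90\leq 1$) and assert elsewhere. The inequality is true, and your identification of where it is tight is accurate; for small $\epsilon$ the left-hand side is in fact negative since $h(\epsilon/2)-h(2\epsilon)\sim-\tfrac{3\epsilon}{2}\log_2(1/\epsilon)$. A complete write-up should either supply a monotonicity/convexity argument for this one-variable comparison or, more economically, first invoke the symmetrization that sharpens $(1+T)h(T/(1+T))$ to $h(T)$ and then check the much cleaner relaxation $\tfrac{\epsilon}{2}\log_2(d-1)+h(\epsilon/2)\leq 2\epsilon\log_2 d+h(2\epsilon)$ (whose only delicate corner is again $d=2$, $\epsilon$ near $\tfrac12$, where $h(2\epsilon)$ degenerates and the $2\epsilon\log_2 d$ term must absorb $h(\epsilon/2)$). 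This is bookkeeping, as you say, but it is the one step a referee would ask you to write out.
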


Within an appropriately chosen set of states, the quantum relative entropy is continuous with respect to the second argument in the sense of the following lemma. 
\begin{lemma}[Theorem~5.11 of Ref.~\cite{10129917}]
\label{lem:continuity_bound_relative_entropy}
For any state $\rho\in\mathcal{D}(\mathcal{H})$ and any $\tilde{m}\in(0,1)$, we write
\begin{equation}
    \mathcal{S}_{\tilde{m}}\coloneqq\qty{\sigma\in\mathcal{D}(\mathcal{H}):\tilde{m}\rho\leq\sigma}. 
\end{equation}
Let $\epsilon > 0$ be any positive number. 
Then, for any states $\sigma_1,\sigma_2\in\mathcal{S}_{\tilde{m}}$ satisfying
\begin{equation}
    \|\sigma_1-\sigma_2\|_1\leq \epsilon,
\end{equation}
it holds that
\begin{equation}
    \left|D\left(\rho\middle|\middle|\sigma_1\right)-D\left(\rho\middle|\middle|\sigma_2\right)\right|\leq\frac{3\log_2^2\qty(\frac{1}{\tilde{m}})}{1-\tilde{m}}\sqrt{\frac{\epsilon}{2}}.
\end{equation}
\end{lemma}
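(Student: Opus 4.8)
The plan is to reduce the claim to controlling a single trace and then to treat the genuinely hard ingredient, non-commutativity, with the theory of operator-Lipschitz functions. Because $\tilde{m} > 0$ and $\tilde{m}\rho \le \sigma_1,\sigma_2$, the support of $\rho$ is contained in that of each $\sigma_i$, so both relative entropies are finite and their difference collapses to the logarithmic term,
\begin{equation}
    D(\rho\|\sigma_1) - D(\rho\|\sigma_2) = \Tr\qty[\rho\qty(\log_2\sigma_2 - \log_2\sigma_1)].
\end{equation}
Thus everything reduces to bounding $\qty|\Tr\qty[\rho\qty(\log_2\sigma_2 - \log_2\sigma_1)]|$ in terms of $\epsilon = \|\sigma_1-\sigma_2\|_1$ and $\tilde{m}$. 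Two preliminary observations orient the estimate. First, the lower bound gives a uniform a priori control $D(\rho\|\sigma_i) \le \log_2(1/\tilde{m})$, obtained from $\log_2\sigma_i \ge \log_2\tilde{m} + \log_2\rho$ (restricted to $\mathrm{supp}\,\rho$) via operator monotonicity of $\log_2$ (Lemma~\ref{lem:operator_monotonicity_log}); this is the origin of the logarithmic prefactor. Second, if $\rho,\sigma_1,\sigma_2$ commuted, the divided-difference estimate $p_k\,|\log_2(s_k'/s_k)| \le |s_k - s_k'|/(\tilde{m}\ln 2)$ would already yield a bound linear in $\epsilon$ with constant $1/\tilde{m}$; hence all of the difficulty, and the eventual $\sqrt{\epsilon}$ weakening, stems purely from $[\sigma_1,\sigma_2]\neq 0$.

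First I would pass to the integral representation $\log_2 X = \frac{1}{\ln 2}\int_0^\infty\qty(\frac{1}{1+t} - \frac{1}{X+t})\,\mathrm{d}t$, which together with the resolvent identity gives
\begin{equation}
    \log_2\sigma_2 - \log_2\sigma_1 = \frac{1}{\ln 2}\int_0^\infty(\sigma_1+t)^{-1}(\sigma_2-\sigma_1)(\sigma_2+t)^{-1}\,\mathrm{d}t.
\end{equation}
Equivalently, this expresses $\log_2\sigma_2 - \log_2\sigma_1 = \Phi_{\sigma_1,\sigma_2}(\sigma_2-\sigma_1)$ as a double operator integral whose kernel is the divided difference of $\log_2$ on the joint spectra of $\sigma_1,\sigma_2$. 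The point of either form is that the constraint $\tilde{m}\rho \le \sigma_i$ confines the eigenvalues of the $\sigma_i$ that $\rho$ can see to a range on which this kernel is controlled by a quantity of order $\log_2(1/\tilde{m})/(1-\tilde{m})$, the logarithmic mean controlling how fast $\log_2$ varies across that range.

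The $\sqrt{\epsilon}$ scaling I would extract not by an operator-norm (Lipschitz) estimate, which is not dimension-free for $\log_2$, but by a Hilbert--Schmidt estimate weighted by $\rho$: bounding $\qty|\Tr\qty[\rho\,\Phi_{\sigma_1,\sigma_2}(\sigma_2-\sigma_1)]|$ through a $\rho$-weighted inner product lets the kernel bound factor out as (a square of) $\log_2(1/\tilde{m})$, while the remaining operator enters through $\sqrt{\|\sigma_2-\sigma_1\|_1} \le \sqrt{\epsilon}$, the square root being the price of trading the unavailable operator-Lipschitz bound for the available weighted-$2$-norm one; the trace distance $\tfrac12\|\cdot\|_1$ accounts for the $\sqrt{\epsilon/2}$. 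Collecting the kernel bound, the weighting, and the numerical factor then produces the stated $\frac{3\log_2^2(1/\tilde{m})}{1-\tilde{m}}\sqrt{\epsilon/2}$.

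The hard part will be exactly this last conversion: producing a \emph{dimension-free} bound on the $\rho$-weighted double operator integral for $\log_2$ with the correct $\log_2^2(1/\tilde{m})/(1-\tilde{m})$ constant and with the $\sqrt{\epsilon}$ (rather than $\epsilon$) dependence, while handling the non-full-rank case by carrying out all estimates on $\mathrm{supp}\,\rho$. The naive routes fail here: a pure operator-norm estimate of $\Phi_{\sigma_1,\sigma_2}$ loses the correct constant, and a naive Schatten-$2$ estimate of the bare operators reintroduces the dimension, so one is forced into the refined weighted-norm continuity theory of the operator logarithm. This is precisely the content packaged in Theorem~5.11 of Ref.~\cite{10129917}, which is why I would invoke that dedicated continuity bound rather than attempt an elementary derivation.
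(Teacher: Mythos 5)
The paper gives no proof of this lemma: it is imported verbatim as Theorem~5.11 of Ref.~\cite{10129917}, and your proposal ultimately does the same --- after the sketch you explicitly fall back on invoking that theorem, so read as a proof it is circular, but read as a justification it coincides with the paper's (a citation). Your preliminary reductions are correct and worth keeping: $\tilde{m}\rho\leq\sigma_i$ forces $\mathrm{supp}\,\rho\subseteq\mathrm{supp}\,\sigma_i$, the difference collapses to $\Tr\qty[\rho\qty(\log_2\sigma_2-\log_2\sigma_1)]$, and operator monotonicity of $\log_2$ gives the a priori bound $D\qty(\rho\middle|\middle|\sigma_i)\leq\log_2\qty(1/\tilde{m})$. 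Note, however, that the actual proof in Ref.~\cite{10129917} does not go through double operator integrals or operator-Lipschitz theory as you propose; it is obtained from an almost-concavity estimate for $D$ in its second argument combined with joint convexity and an optimization over an interpolation parameter, which is where the $\sqrt{\epsilon}$ scaling and the constant $\frac{3\log_2^2(1/\tilde{m})}{1-\tilde{m}}$ actually originate --- so if you intended to supply a self-contained derivation, the double-operator-integral route would still need to be carried out in full and is not known to reproduce this particular constant.
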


We show an immediate corollary of Lemma~\ref{lem:continuity_bound_relative_entropy}, which shows a condition for the continuity of the quantum relative entropy in Lemma~\ref{lem:continuity_bound_relative_entropy} in terms of the minimum eigenvalue of the second argument.
\begin{corollary}[Continuity bound of quantum relative entropy with respect to second argument with nonzero minimum eigenvalue]
    \label{cor:continuity_bound_relative_entropy_revised}
    For any $\epsilon > 0$, any state $\rho \in \mathcal{D}(\mathcal{H})$, any full-rank states $\sigma_1, \sigma_2 \in \mathcal{D}(\mathcal{H})$ satisfying
    \begin{equation}
        \|\sigma_1 - \sigma_2\|_1 \leq \epsilon,
    \end{equation} 
    and any $\tilde{m}\in(0,1)$ satisfying
    \begin{equation}
        \label{eq:assumption_m_tilde}
        0 < \tilde{m} \leq \min\{\lambda_{\min}(\sigma_1),\lambda_{\min}(\sigma_2)\}, 
    \end{equation}
    it holds that
    \begin{equation}
        \left|D\left(\rho\middle|\middle|\sigma_1\right)-D\left(\rho\middle|\middle|\sigma_2\right)\right|\leq\frac{3\log_2^2\qty(\frac{1}{\tilde{m}})}{1-\tilde{m}}\sqrt{\frac{\epsilon}{2}}.
    \end{equation}
\end{corollary}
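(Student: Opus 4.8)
The plan is to reduce the statement directly to Lemma~\ref{lem:continuity_bound_relative_entropy} by showing that the minimum-eigenvalue hypothesis~\eqref{eq:assumption_m_tilde} is enough to place both $\sigma_1$ and $\sigma_2$ in the set $\mathcal{S}_{\tilde{m}}=\{\sigma\in\mathcal{D}(\mathcal{H}):\tilde{m}\rho\leq\sigma\}$ that appears in that lemma. Once I verify $\sigma_1,\sigma_2\in\mathcal{S}_{\tilde{m}}$, the bound in the corollary follows verbatim from Lemma~\ref{lem:continuity_bound_relative_entropy} together with the assumption $\|\sigma_1-\sigma_2\|_1\leq\epsilon$, since the right-hand sides are identical.

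The key step is the operator inequality $\tilde{m}\rho\leq\sigma_i$ for $i\in\{1,2\}$. First I would use that every state $\rho\in\mathcal{D}(\mathcal{H})$ has all eigenvalues in $[0,1]$, hence $\rho\leq\mathds{1}$; multiplying by $\tilde{m}>0$ gives $\tilde{m}\rho\leq\tilde{m}\mathds{1}$. Next, the hypothesis $\tilde{m}\leq\lambda_{\min}(\sigma_i)$ means $\sigma_i\geq\lambda_{\min}(\sigma_i)\mathds{1}\geq\tilde{m}\mathds{1}$. Chaining these two operator inequalities yields $\tilde{m}\rho\leq\tilde{m}\mathds{1}\leq\sigma_i$, i.e.\ $\sigma_i\in\mathcal{S}_{\tilde{m}}$. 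The only nontrivial observation is this passage from an eigenvalue bound to an operator ordering, accomplished by comparison with the identity.

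Since $\tilde{m}\in(0,1)$ is assumed, and the full-rank assumption on $\sigma_1,\sigma_2$ guarantees $\lambda_{\min}(\sigma_i)>0$ so that such a $\tilde{m}$ exists, all hypotheses of Lemma~\ref{lem:continuity_bound_relative_entropy} are satisfied. Applying that lemma then gives the stated inequality directly. I do not expect any genuine obstacle: the corollary is a reformulation that replaces the abstract membership condition $\sigma_i\in\mathcal{S}_{\tilde{m}}$ by the more concrete and easily checkable spectral condition $\tilde{m}\leq\min\{\lambda_{\min}(\sigma_1),\lambda_{\min}(\sigma_2)\}$, and the sole content of the proof is verifying that the latter implies the former.
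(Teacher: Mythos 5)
Your proposal is correct and is essentially identical to the paper's own proof: both verify $\tilde{m}\rho\leq\tilde{m}\mathds{1}\leq\sigma_i$ from the eigenvalue hypothesis to conclude $\sigma_1,\sigma_2\in\mathcal{S}_{\tilde{m}}$, then invoke Lemma~\ref{lem:continuity_bound_relative_entropy} directly.
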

\begin{proof}
    Since $\sigma_1$ and $\sigma_2$ are full-rank states, 
    we obtain from~\eqref{eq:assumption_m_tilde}
    \begin{align}
        &\tilde{m}\rho \leq \tilde{m}\mathds{1} \leq \sigma_1, \\
        &\tilde{m}\rho \leq \tilde{m}\mathds{1} \leq \sigma_2. 
    \end{align}
    Hence, we have $\sigma_1,\sigma_2 \in \mathcal{S}_{\tilde{m}}$ for $\mathcal{S}_{\tilde{m}}$ in Lemma~\ref{lem:continuity_bound_relative_entropy} with this $\tilde{m}$, and using Lemma~\ref{lem:continuity_bound_relative_entropy}, we obtain the desired bound. 
\end{proof}

Lastly, we give an upper bound of the quantum relative entropy in terms of the minimal eigenvalue of the second argument.

\begin{lemma}[\label{lem:relative_entropy_upper_bound}Upper bound of quantum relative entropy]
For any states $\rho,\sigma\in\mathcal{D}\qty(\mathcal{H})$ satisfying $\sigma>0$, i.e., $\lambda_{\min}(\sigma)>0$, we have
\begin{equation}
    D\left(\rho\middle|\middle|\sigma\right)\leq\log_2\qty(\frac{1}{\lambda_{\min}(\sigma)}).
\end{equation}
\end{lemma}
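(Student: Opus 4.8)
The plan is to split the relative entropy into its two standard pieces and bound each separately, discarding the entropy term and controlling the cross term with operator monotonicity of the logarithm. Concretely, I would write
\begin{equation}
    D\left(\rho\middle|\middle|\sigma\right)=\Tr\qty[\rho\log_2(\rho)]-\Tr\qty[\rho\log_2(\sigma)],
\end{equation}
which is legitimate because the hypothesis $\sigma>0$ guarantees that the support of $\rho$ is contained in that of $\sigma$, so the relative entropy is finite and given by this expression.

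First I would handle the entropy term. Since $H(\rho)=-\Tr\qty[\rho\log_2(\rho)]\geq 0$ for any state $\rho$, we immediately get $\Tr\qty[\rho\log_2(\rho)]\leq 0$, so this term only helps and can be dropped from the upper bound.

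The crux is the cross term $-\Tr\qty[\rho\log_2(\sigma)]$. Here I would use the fact that $\sigma\geq\lambda_{\min}(\sigma)\mathds{1}$, where both sides are positive definite because $\lambda_{\min}(\sigma)>0$ by assumption. Applying the operator monotonicity of $\log_2$ (Lemma~\ref{lem:operator_monotonicity_log}) yields
\begin{equation}
    \log_2(\sigma)\geq\log_2\qty(\lambda_{\min}(\sigma)\mathds{1})=\log_2\qty(\lambda_{\min}(\sigma))\mathds{1},
\end{equation}
and hence $-\log_2(\sigma)\leq-\log_2\qty(\lambda_{\min}(\sigma))\mathds{1}$. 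Taking the expectation in the state $\rho$ and using $\Tr[\rho]=1$ gives $-\Tr\qty[\rho\log_2(\sigma)]\leq-\log_2\qty(\lambda_{\min}(\sigma))=\log_2\qty(1/\lambda_{\min}(\sigma))$.

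Combining the two bounds gives the claim. I do not expect any genuine obstacle: the only step requiring a named input is the operator monotonicity of the logarithm, and the identity $\log_2\qty(\lambda_{\min}(\sigma)\mathds{1})=\log_2\qty(\lambda_{\min}(\sigma))\mathds{1}$ is immediate from functional calculus on a scalar multiple of the identity. The subtle point worth stating explicitly is that positive definiteness of both $\sigma$ and $\lambda_{\min}(\sigma)\mathds{1}$, which is exactly what $\sigma>0$ ensures, is needed to invoke Lemma~\ref{lem:operator_monotonicity_log}.
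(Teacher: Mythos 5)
Your proof is correct and follows essentially the same route as the paper's: both decompose $D\left(\rho\middle|\middle|\sigma\right)$ into the entropy and cross terms, drop the entropy term via $H(\rho)\geq 0$, and bound the cross term by applying the operator monotonicity of $\log_2$ to $\sigma\geq\lambda_{\min}(\sigma)\mathds{1}$. No issues.
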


\begin{proof}
Due to
\begin{equation}
    \sigma\geq\lambda_{\min}(\sigma)\mathds{1},
\end{equation}
using Lemma~\ref{lem:operator_monotonicity_log},
we have
\begin{equation}
    \log_2(\sigma)\geq\log_2(\lambda_{\min}(\sigma)\mathds{1}).
\end{equation}
Thus, we obtain
\begin{equation}
    \Tr\qty[\rho\log_2(\sigma)]\geq\Tr\qty[\rho\log_2\qty(\lambda_{\min}(\sigma)\mathds{1})]=\Tr\qty[\rho]\log_2\qty(\lambda_{\min}(\sigma))=\log_2\qty(\lambda_{\min}(\sigma)).
\end{equation}
Therefore, by definition of the quantum relative entropy in~\eqref{eq:relative_entropy}, we have
\begin{align}
   D\left(\rho\middle|\middle|\sigma\right)&=-H\qty(\rho)-\Tr\qty[\rho\log_2\qty(\sigma)]\\
   &\leq -0-\log_2\qty(\lambda_{\min}(\sigma))\\
   &=\log_2\qty(\frac{1}{\lambda_{\min}(\sigma)}),
\end{align}
where we use the fact that the quantum entropy $H(\rho)$ in~\eqref{eq:quantum_entropy} is non-negative.
\end{proof}

\subsection{Almost power states}
\label{subsec:almost_power_states}
In this section, we review the concept of almost power states~\cite{Horodecki3_Leung_Oppenheim2008a,Horodecki3_Leung_Oppenheim2008b,renner2006security,renner2007symmetry,Brandao2010}, which are essential to prove the direct part of the generalized quantum Stein's lemma. 
We show definitions and properties of almost power states that are necessary to understand the main results of our work. 
Readers may also refer to Chapter~4 of Ref.~\cite{renner2006security} for more detailed discussions. 

Let $S_N$ denote the symmetric group of degree $N$.
For $\pi\in S_N$, let $U_\pi$ denote the unitary operator representing the permutation of the state of the $N$ subsystems according to $\pi$.
For any $N\in\mathbb{N}$ and any system $\mathcal{H}$, the symmetric subspace $\Sym\qty(\mathcal{H}^{\otimes N})$ is defined as~\cite{renner2006security}
\begin{equation}
    \Sym\qty(\mathcal{H}^{\otimes N})\coloneqq\qty{\ket{\psi_N}\in\mathcal{H}^{\otimes N}: \forall \pi\in S_N,\,U_\pi\ket{\psi_N}=\ket{\psi_N}}=\spn\qty{\ket{\psi}^{\otimes N}\in\mathcal{H}^{\otimes N}:\ket{\psi}\in\mathcal{H}}.
\end{equation}

For a pure state $\ket{\rho}$ and $R\in\{0,1,\ldots,N\}$, we define the set of states with at least $N-R$ independent and identically distributed (IID) copies of $\ket{\rho}$ as
\begin{equation}
    \mathcal{V}\qty(\mathcal{H}^{\otimes N},\ket{\rho}^{\otimes N-R})\coloneqq\qty{U_\pi\qty(\ket{\rho}^{\otimes N-R}\otimes\ket{\psi_R}): \pi\in S_N,\ket{\psi_R}\in\mathcal{H}^{\otimes R}}.
\end{equation}
The set of almost power states along $\ket{\rho}$ is defined as
\begin{equation}
   \ket{\rho}^{[\otimes,N,R]}\coloneqq\Sym\qty(\mathcal{H}^{\otimes N})\cap\spn\qty(\mathcal{V}\qty(\mathcal{H}^{\otimes N},\ket{\rho}^{\otimes N-R})).
\end{equation}

For any $N,M,R\geq 0$ satisfying $N-M-R\geq 0$, let
\begin{equation}
    \ket{\rho_{N,M,R}}\in\ket{\rho}^{[\otimes,N-M,R]}
\end{equation}
denote an almost power state along $\ket{\rho}$.
By definition, the almost power state along $\ket{\rho}$ can be written as
\begin{equation}
\label{eq:almost_power_state}
    \ket{\rho_{N,M,R}}=\sum_{r=0}^{R}\beta_r\Sym\qty(\ket{\rho}^{\otimes N-M-r}\otimes\ket{\psi_r}),
\end{equation}
where $\ket{\psi_r}\in\qty(\spn\{\ket{\rho}\}^\perp)^{\otimes r}\subset\mathcal{H}^{\otimes r}$ for each $r$ is a permutation-invariant state in the space of the $r$-fold tensor product of the orthogonal complement of $\ket{\rho}$,
\begin{equation}
    \label{eq:defn_sym}
    \Sym\qty(\ket{\rho}^{\otimes N-M-r}\otimes\ket{\psi_r})\coloneqq\frac{1}{\sqrt{\binom{N-M}{r}}}\sum_{\text{$\pi$:$\binom{N-M}{r}$ combinations}}U_\pi\qty(\ket{\rho}^{\otimes N-M-r}\otimes\ket{\psi_r}),
\end{equation}
and 
\begin{equation}
    \sum_{r=0}^{R}|\beta_r|^2=1.
\end{equation}
In~\eqref{eq:defn_sym}, the permutation $\pi$ appearing in the sum corresponds to the choice of $r$ subsystems for $\ket{\psi_r}$ from the $N-M$ subsystems. 

For the analysis of mixed states, it suffices to consider their purification due to the following lemma.
\begin{lemma}[\label{lem:mixed}Lemma III.4 in Ref.~\cite{Brandao2010}]
Consider $N$ IID copies $\rho^{\otimes N}$ of any mixed state $\rho\in\mathcal{D}(\mathcal{H})$, and any permutation-invariant mixed state $\rho_N\in\mathcal{D}(\mathcal{H}^{\otimes N})$, i.e., a state satisfying, for any permutation $\pi\in S_N$,
\begin{equation}
U_\pi \rho_N U_\pi^\dag=\rho_N.
\end{equation}
Then, there exists a purification $\ket{\rho}\in\mathcal{H}\otimes\mathcal{H}^E$ of $\rho$ and a permutation-invariant purification $\ket{\rho_N}\in\Sym\qty(\qty(\mathcal{H}\otimes\mathcal{H}^E)^{\otimes N})$ of $\rho_N$ with $\mathcal{H}^E=\mathcal{H}$ such that
\begin{equation}
    \qty|\braket{\rho_N}{\rho^{\otimes N}}|=F\qty(\rho_N,\rho^{\otimes N}).
\end{equation}
\end{lemma}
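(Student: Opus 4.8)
The plan is to realize both purifications inside the standard (vectorized) picture and then exploit that every operator in sight is permutation invariant. Identify $\mathcal{H}^E\cong\mathcal{H}$, fix an orthonormal basis, and let $\ket{\omega}=\sum_i\ket{i}\otimes\ket{i}$ be the unnormalized maximally entangled vector on $\mathcal{H}\otimes\mathcal{H}^E$; after reorganizing $\qty(\mathcal{H}\otimes\mathcal{H}^E)^{\otimes N}$ into $\mathcal{H}^{\otimes N}\otimes\qty(\mathcal{H}^E)^{\otimes N}$, write $\ket{\Omega}=\ket{\omega}^{\otimes N}$. Under this reorganization the permutation $U_\pi$ of the $N$ composite subsystems becomes $V_\pi\otimes V_\pi$, where $V_\pi$ permutes the $N$ copies on each side and, being a permutation matrix, is real. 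Taking the standard purification $\ket{\rho}=\qty(\sqrt{\rho}\otimes\mathds{1})\ket{\omega}$ gives $\ket{\rho}^{\otimes N}=\qty(\sqrt{\rho^{\otimes N}}\otimes\mathds{1})\ket{\Omega}$, while every purification of $\rho_N$ on the same space has the form $\ket{\rho_N}=\qty(\sqrt{\rho_N}W\otimes\mathds{1})\ket{\Omega}$ for a unitary $W$ on $\mathcal{H}^{\otimes N}$. Using $\bra{\Omega}\qty(A\otimes\mathds{1})\ket{\Omega}=\Tr[A]$ the overlap becomes $\braket{\rho_N}{\rho^{\otimes N}}=\Tr\qty[W^\dagger\sqrt{\rho_N}\sqrt{\rho^{\otimes N}}]$, and by Uhlmann's theorem~\cite{Wilde_2017} its modulus is maximized to $\norm{\sqrt{\rho_N}\sqrt{\rho^{\otimes N}}}_1=F\qty(\rho_N,\rho^{\otimes N})$ exactly when $W$ is the unitary in the polar decomposition of $X\coloneqq\sqrt{\rho_N}\sqrt{\rho^{\otimes N}}$. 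The whole task therefore reduces to choosing this optimal $W$ to be permutation invariant.

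First I would settle the case where $\rho$ and $\rho_N$ are full rank. Then $X$ is invertible and its polar decomposition $X=WP$, $P=\abs{X}>0$, has a unique unitary $W$. Since $\rho_N$ and $\rho^{\otimes N}$ are both permutation invariant, $X$ commutes with every $V_\pi$; hence $V_\pi X V_\pi^\dagger=\qty(V_\pi W V_\pi^\dagger)\qty(V_\pi P V_\pi^\dagger)$ is again a polar decomposition of $X$ with positive part $V_\pi P V_\pi^\dagger=P$, so uniqueness forces $V_\pi W V_\pi^\dagger=W$. With $\ket{\rho_N}=\qty(\sqrt{\rho_N}W\otimes\mathds{1})\ket{\Omega}$ I would then compute, using $\qty(A\otimes B)\ket{\Omega}=\qty(AB^T\otimes\mathds{1})\ket{\Omega}$ and $V_\pi^T=V_\pi^\dagger$,
\begin{equation}
U_\pi\ket{\rho_N}=\qty(V_\pi\sqrt{\rho_N}W V_\pi^\dagger\otimes\mathds{1})\ket{\Omega}=\qty(\sqrt{\rho_N}W\otimes\mathds{1})\ket{\Omega}=\ket{\rho_N},
\end{equation}
where I used $V_\pi\sqrt{\rho_N}V_\pi^\dagger=\sqrt{\rho_N}$ and $V_\pi W V_\pi^\dagger=W$. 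Thus $\ket{\rho_N}\in\Sym\qty(\qty(\mathcal{H}\otimes\mathcal{H}^E)^{\otimes N})$, and $\braket{\rho_N}{\rho^{\otimes N}}=\Tr[W^\dagger W P]=\Tr[P]=\norm{X}_1=F\qty(\rho_N,\rho^{\otimes N})$, as required.

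For general (possibly rank-deficient) $\rho$ and $\rho_N$ I would pass to the full-rank regularizations $\rho_\delta\coloneqq(1-\delta)\rho+\delta\mathds{1}/d$ and $\rho_{N,\delta}\coloneqq(1-\delta)\rho_N+\delta\mathds{1}/d^N$, the latter still permutation invariant. Applying the full-rank construction yields the standard purification $\ket{\rho_\delta}$ and a permutation-invariant purification $\ket{\rho_{N,\delta}}$ with $\abs{\braket{\rho_{N,\delta}}{\rho_\delta^{\otimes N}}}=F\qty(\rho_{N,\delta},\rho_\delta^{\otimes N})$. As $\delta\to0$ one has $\sqrt{\rho_\delta}\to\sqrt{\rho}$, hence $\ket{\rho_\delta}^{\otimes N}\to\ket{\rho}^{\otimes N}$, and $\rho_{N,\delta}\to\rho_N$. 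Since the unit sphere of $\qty(\mathcal{H}\otimes\mathcal{H}^E)^{\otimes N}$ is compact, I would extract a convergent subsequence $\ket{\rho_{N,\delta_k}}\to\ket{\rho_N}$; the limit lies in the closed subspace $\Sym$, so it is permutation invariant, its system marginal is $\lim_k\rho_{N,\delta_k}=\rho_N$ by continuity of the partial trace, and by joint continuity of the overlap and of the fidelity the identity $\abs{\braket{\rho_N}{\rho^{\otimes N}}}=F\qty(\rho_N,\rho^{\otimes N})$ survives in the limit.

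The main obstacle is the simultaneous optimality and symmetry of the purification. Directly symmetrizing an arbitrary Uhlmann-optimal purification, for instance by averaging $\frac{1}{N!}\sum_\pi U_\pi\ket{\rho_N}$, need not return a pure state, and in the rank-deficient case the polar partial isometry is not a full unitary, so its covariant extension over the kernels of $X^\dagger X$ and $X X^\dagger$ can be obstructed by a mismatch of the $S_N$-representations carried by those kernels. Isolating the clean full-rank case, where the polar unitary is unique and hence automatically permutation covariant, and then recovering the general statement through the continuity and compactness limit is precisely what circumvents this difficulty; the only points to verify with care are that the standard purification and the fidelity are continuous in the states, which is standard.
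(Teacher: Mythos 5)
Your argument is correct. The paper does not prove this lemma at all---it is imported verbatim as Lemma~III.4 of Ref.~\cite{Brandao2010}---and your proof is essentially the standard one behind that result: realize every purification through the maximally entangled vector, reduce Uhlmann optimality to the polar decomposition of $X=\sqrt{\rho_N}\sqrt{\rho^{\otimes N}}$, and note that permutation invariance of both states forces the polar unitary, hence the optimal purification, to be invariant under $V_\pi\otimes V_\pi$. Your extra care with the rank-deficient case is the only place you go beyond the original (which glosses over it), and it is handled correctly; note, though, that the representation-theoretic obstruction you worry about cannot actually occur---the covariant partial isometry intertwines $\mathrm{ran}(X^\dagger)$ and $\mathrm{ran}(X)$ as $S_N$-representations, so by complete reducibility their orthocomplements carry equivalent representations and a covariant unitary extension always exists---making your regularization-and-compactness detour a valid but strictly optional alternative.
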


Given this purification, the symmetry of the permutation-invariant state and the independence of the almost power state are closely connected by the following lemma, as shown in Ref.~\cite{Brandao2010}.
\begin{lemma}[\label{lem:definitti}Lemma~III.5 in Ref.~\cite{Brandao2010}]
Consider $N$ IID copies $\ket{\rho^{\otimes N}}\coloneqq\ket{\rho}^{\otimes N}$ of any pure state $\ket{\rho}$, and any permutation-invariant state $\ket{\rho_N}\in\Sym\qty(\mathcal{H}^{\otimes N})$ with a nonzero fidelity
\begin{equation}
    \qty|\braket{\rho_N}{\rho^{\otimes N}}|>0.
\end{equation}
Then, for every $M\leq N$, the state
\begin{equation}
    \ket{\rho_{N,M}}\coloneqq\frac{\qty(\bra{\rho^{\otimes M}}\otimes\mathds{1}^{\otimes N-M})\ket{\rho_N}}{\left\|\qty(\bra{\rho^{\otimes M}}\otimes\mathds{1}^{\otimes N-M})\ket{\rho_N}\right\|}\in\mathcal{H}^{\otimes N-M}
\end{equation}
satisfies
\begin{equation}
    \ket{\rho_{N,M}}\bra{\rho_{N,M}}\leq\frac{\Tr_{1,\ldots,M}\qty[\ket{\rho_N}\bra{\rho_N}]}{\qty|\braket{\rho_N}{\rho^{\otimes N}}|^2},
\end{equation}
and for every $R\leq N-M$, there exists an almost power state $\ket{\rho_{N,M,R}}\in\ket{\rho}^{[\otimes,N-M,R]}$ along $\ket{\rho}$ such that
\begin{equation}
   \left\|\ket{\rho_{N,M}}\bra{\rho_{N,M}}-\ket{\rho_{N,M,R}}\bra{\rho_{N,M,R}}\right\|_1\leq\frac{2\sqrt{2}}{\qty|\braket{\rho_N}{\rho^{\otimes N}}|}\mathrm{e}^{-\frac{MR}{2N}}.
\end{equation}
\end{lemma}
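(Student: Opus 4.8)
The plan is to establish the two assertions separately, since they rest on different ideas: the operator inequality is an elementary consequence of positivity, whereas the almost-power-state approximation is the substantive combinatorial statement. Writing $A\coloneqq\bra{\rho^{\otimes M}}\otimes\mathds{1}^{\otimes N-M}$ for the partial inner-product map, so that $\ket{\rho_{N,M}}=A\ket{\rho_N}/\|A\ket{\rho_N}\|$ and $\|A\ket{\rho_N}\|^2=\bra{\rho_N}P\ket{\rho_N}$ with $P\coloneqq\ket{\rho^{\otimes M}}\bra{\rho^{\otimes M}}\otimes\mathds{1}^{\otimes N-M}$, the first claim follows from two positivity facts. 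First, expanding $\Tr_{1,\ldots,M}\qty[\ket{\rho_N}\bra{\rho_N}]$ in an orthonormal basis of $\mathcal{H}^{\otimes M}$ whose first element is $\ket{\rho^{\otimes M}}$ exhibits $A\ket{\rho_N}\bra{\rho_N}A^\dagger$ as one positive semidefinite summand of the partial trace, so the former is dominated by the latter. Second, $P\geq\ket{\rho^{\otimes N}}\bra{\rho^{\otimes N}}$ gives $\|A\ket{\rho_N}\|^2\geq|\braket{\rho_N}{\rho^{\otimes N}}|^2>0$. Dividing the positive semidefinite operator $A\ket{\rho_N}\bra{\rho_N}A^\dagger$ by the larger positive scalar and chaining the two bounds yields the operator inequality.

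For the approximation, I would decompose $\ket{\rho_N}$ by ``excitation number'' relative to the splitting $\mathcal{H}=\spn\{\ket{\rho}\}\oplus\spn\{\ket{\rho}\}^\perp$: write $\ket{\rho_N}=\sum_{r=0}^{N}\alpha_r\ket{\xi_r}$, where $\ket{\xi_r}$ is the normalized component supported on configurations with exactly $r$ subsystems orthogonal to $\ket{\rho}$. Permutation invariance forces $\ket{\xi_0}=\ket{\rho}^{\otimes N}$, hence $|\alpha_0|=|\braket{\rho_N}{\rho^{\otimes N}}|$. The map $A$ sends the $r$-excitation sector on $N$ systems into the $r$-excitation sector on $N-M$ systems, and because $\ket{\rho_N}$ is symmetric the surviving amplitude depends only on $r$: it equals the probability that none of the $r$ symmetrically placed excitations lie among the first $M$ of the $N$ positions, so $\|A\ket{\xi_r}\|^2=\binom{N-M}{r}/\binom{N}{r}\leq(1-M/N)^r\leq e^{-Mr/N}$. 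Distinct sectors remain orthogonal, so $A\ket{\rho_N}=\sum_r\alpha_r\,A\ket{\xi_r}$ is an orthogonal decomposition.

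I would then define $\ket{\rho_{N,M,R}}$ as the normalized truncation of $\ket{\rho_{N,M}}$ to the sectors $r\leq R$; by construction it lies in $\Sym\qty(\mathcal{H}^{\otimes N-M})$ and in the span of configurations with at most $R$ excitations, i.e.\ it is an almost power state in $\ket{\rho}^{[\otimes,N-M,R]}$. The truncated weight obeys
\begin{equation}
p_{>R}=\frac{\sum_{r>R}|\alpha_r|^2\|A\ket{\xi_r}\|^2}{\sum_{r}|\alpha_r|^2\|A\ket{\xi_r}\|^2}\leq\frac{e^{-MR/N}\sum_{r>R}|\alpha_r|^2}{|\alpha_0|^2}\leq\frac{e^{-MR/N}}{|\braket{\rho_N}{\rho^{\otimes N}}|^2},
\end{equation}
where the denominator was bounded below by its $r=0$ term. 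Using $\|\ket{\psi}\bra{\psi}-\ket{\phi}\bra{\phi}\|_1=2\sqrt{1-|\braket{\psi}{\phi}|^2}$ for pure states together with $|\braket{\rho_{N,M}}{\rho_{N,M,R}}|^2=1-p_{>R}$ converts this into the stated trace-norm bound; my estimate even yields the constant $2$ in place of $2\sqrt{2}$, so the stated inequality holds with slack.

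The main obstacle will be making the excitation-sector bookkeeping in the second part fully rigorous: verifying that $A$ genuinely preserves the sector grading, that the projected weight $\|A\ket{\xi_r}\|^2$ is the uniform combinatorial factor $\binom{N-M}{r}/\binom{N}{r}$ independent of the internal state of the sector (which is precisely where permutation invariance is used), and that the renormalized truncation really has the almost-power-state form required by $\ket{\rho}^{[\otimes,N-M,R]}$. The operator inequality, by contrast, is routine once the basis expansion of the partial trace is in place.
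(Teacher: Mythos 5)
Your proof is correct and follows essentially the same route as the source this paper cites for the lemma (the paper itself imports Lemma~III.5 from Ref.~\cite{Brandao2010} without reproving it): the operator inequality via exhibiting $A\ket{\rho_N}\bra{\rho_N}A^\dagger$ as a positive summand of the partial trace together with $\|A\ket{\rho_N}\|^2\geq|\braket{\rho_N}{\rho^{\otimes N}}|^2$, and the approximation via the excitation-number decomposition with the combinatorial factor $\binom{N-M}{r}/\binom{N}{r}\leq e^{-Mr/N}$ and truncation to $r\leq R$. Your bookkeeping is sound (the grading is preserved by $A$ because any excitation among the first $M$ slots is annihilated by $\bra{\rho^{\otimes M}}$, and permutation invariance makes the surviving weight a function of $r$ alone), and using the exact pure-state identity $\|\ket{\psi}\bra{\psi}-\ket{\phi}\bra{\phi}\|_1=2\sqrt{1-|\braket{\psi}{\phi}|^2}$ indeed yields the constant $2$ rather than the $2\sqrt{2}$ obtained in the original via generic norm estimates, so the stated bound holds with room to spare.
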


\subsection{Quantum resource theories}
\label{subsec:QRTs}
In this section, we introduce the framework of quantum resource theories (QRTs)~\cite{Chitambar2018,Kuroiwa2020}. 
A QRT is specified by choosing, for each pair of input system $\mathcal{H}_\mathrm{in}$ and output system $\mathcal{H}_\mathrm{out}$, a set of free operations as a subset of quantum operations, which we write as $\mathcal{O}\qty(\mathcal{H}_\mathrm{in}\to\mathcal{H}_\mathrm{out})\subseteq\mathcal{C}\qty(\mathcal{H}_\mathrm{in}\to\mathcal{H}_\mathrm{out})$.
A free state is defined as a state that can be generated from any initial state by a free operation; i.e., for any system $\mathcal{H}$, we write the set of free states of $\mathcal{H}$ as $\mathcal{F}(\mathcal{H})\coloneqq\{\sigma\in\mathcal{D}\qty(\mathcal{H}):\forall\mathcal{H}^\prime,\forall\rho\in\mathcal{D}\qty(\mathcal{H}^\prime),\exists\mathcal{E}\in\mathcal{O}\qty(\mathcal{H}^\prime\to\mathcal{H})\text{ such that }\sigma=\mathcal{E}\qty(\rho)\}$.
We work on QRTs with their sets of free states satisfying the following properties.
\begin{enumerate}
    \item \label{p1}For each finite-dimensional system $\mathcal{H}$, $\mathcal{F}(\mathcal{H})$ is closed and convex.
    \item \label{p2}For each finite-dimensional system $\mathcal{H}$, $\mathcal{F}(\mathcal{H})$ contains a full-rank free state $\sigma_{\mathrm{full}}>0$.
    \item \label{p3}For each finite-dimensional system $\mathcal{H}$ and any $N\in\mathbb{N}$, if $\rho\in\mathcal{F}\qty(\mathcal{H}^{\otimes N+1})$, then $\Tr_n\qty[\rho]\in\mathcal{F}\qty(\mathcal{H}^{\otimes N})$ for every $n\in\{1,\ldots,N+1\}$.
    \item \label{p4}For each finite-dimensional system $\mathcal{H}$ and any $N\in\mathbb{N}$, if $\rho\in\mathcal{F}\qty(\mathcal{H})$, then $\rho^{\otimes N}\in\mathcal{F}\qty(\mathcal{H}^{\otimes N})$.
    \item \label{p5}For each finite-dimensional system $\mathcal{H}$ and any $N\in\mathbb{N}$, if $\rho\in\mathcal{F}\qty(\mathcal{H}^{\otimes N})$, then $U_\pi \rho U_\pi^\dag\in\mathcal{F}\qty(\mathcal{H}^{\otimes N})$ for any permutation $\pi\in S_N$, where $S_N$ is the symmetric group of degree $N$, and $U_\pi$ for $\pi\in S_N$ is the unitary operator representing the permutation of the state of the $N$ subsystems of $\mathcal{H}^{\otimes N}$ according to $\pi$.
\end{enumerate}

A non-free state $\rho\in\mathcal{D}(\mathcal{H})\setminus\mathcal{F}(\mathcal{H})$ is called a resource state.
To quantify the amount of resource of a state, we consider a family of real functions $R_\mathcal{H}$ of the state of every system $\mathcal{H}$.
We may omit the subscript $\mathcal{H}$ of $R_\mathcal{H}$ to write $R$ for brevity if obvious from the context.
The function $R$ is called a resource measure if $R$ has a property called monotonicity; i.e., for any free operation $\mathcal{E}$ and any state $\rho$, it should hold that $R(\rho)\geq R(\mathcal{E}(\rho))$.
For example, the relative entropy of resource is defined as~\cite{Chitambar2018,Kuroiwa2020}
\begin{equation}
    R_\mathrm{R}\qty(\rho)\coloneqq\min_{\sigma\in\mathcal{F}\qty(\mathcal{H})}\qty{D\left(\rho\middle|\middle|\sigma\right)},
\end{equation}
where $D\left(\rho\middle|\middle|\sigma\right)$ is the quantum relative entropy in~\eqref{eq:relative_entropy}.
Its variant, the regularized relative entropy of resources, is defined as~\cite{Chitambar2018,Kuroiwa2020}
\begin{align}
    R_\mathrm{R}^\infty\qty(\rho)\coloneqq\lim_{N\to\infty}\frac{R_\mathrm{R}\qty(\rho^{\otimes N})}{N}=\lim_{N\to\infty}\min_{\sigma\in\mathcal{F}\qty(\mathcal{H}^{\otimes N})}\qty{\frac{D\left(\rho^{\otimes N}\middle|\middle|\sigma\right)}{N}},
\end{align}
where the limit exists due to the fact that  $R_\mathrm{R}$ is weakly subadditive~\cite{Chitambar2018,Kuroiwa2020}.

\section{Proof of the direct part of generalized quantum Stein's lemma}
\label{sec:proof of direct part}

In this section, we prove the direct part of the generalized quantum Stein's lemma as explained in Methods of the main text. 
Our main contribution is to construct this proof of the direct part while the original analysis of the direct part of the general quantum Stein's lemma in Ref.~\cite{Brandao2010} has a logical gap as pointed out in Ref.~\cite{berta2023gap}. 
As described in Methods of the main text, our proof involves four steps, and we will explain each step in the subsequent subsections.  
The first step is described in Sec.~\ref{sec:first}, the second step in Sec.~\ref{sec:second}, the third step in Sec.~\ref{sec:third}., and the final step in Sec.~\ref{sec:final}.

In our proof, we will show the contrapositive of Theorem~4 in Methods of the main text, which is stated in the following way. 
\begin{restatable}[Direct part of generalized quantum Stein's lemma (contrapositive)]{theorem}{direct}
\label{thm:direct_part}
For any family $\{\mathcal{F}\qty(\mathcal{H}^{\otimes N})\}_{N\in\mathbb{N}}$ of sets of free states satisfying Properties~\ref{p1}-\ref{p5}, any state $\rho\in\mathcal{D}\qty(\mathcal{H})$, and any $y>0$, if it holds that
\begin{equation}
\label{eq:minimization} \liminf_{N\to\infty}\min_{\sigma\in\mathcal{F}\qty(\mathcal{H}^{\otimes N})}\qty{\Tr\qty[\qty(\rho^{\otimes N}-2^{yN}\sigma)_+]} \in (0,1),
\end{equation}
then we have
\begin{equation}
\label{eq:y_geq}
    y\geq \lim_{N\to\infty}\min_{\sigma\in\mathcal{F}\qty(\mathcal{H}^{\otimes N})}\qty{\frac{D\left(\rho^{\otimes N}\middle|\middle|\sigma\right)}{N}}.
\end{equation}
\end{restatable}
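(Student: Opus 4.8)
The plan is to establish the contrapositive~\eqref{eq:y_geq} directly, following the four-step route sketched in Methods but departing from Ref.~\cite{Brandao2010} exactly where its gap arises. \textbf{Step 1.} Let $\sigma_N\in\mathcal{F}(\mathcal{H}^{\otimes N})$ attain the minimum in~\eqref{eq:minimization}. Since $\sigma\mapsto\Tr[(\rho^{\otimes N}-2^{yN}\sigma)_+]=\max_{0\leq E\leq\mathds{1}}\Tr[E(\rho^{\otimes N}-2^{yN}\sigma)]$ is convex and $\mathcal{F}(\mathcal{H}^{\otimes N})$ is convex and permutation-invariant (Properties~\ref{p1} and~\ref{p5}), I would symmetrize and take $\sigma_N$ permutation-invariant. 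Working along the subsequence on which the liminf in~\eqref{eq:minimization} is attained, the optimal value stays bounded away from $1$; so from $\rho^{\otimes N}\leq 2^{yN}\sigma_N+(\rho^{\otimes N}-2^{yN}\sigma_N)_+$ and Lemma~\ref{lem:operator_inequality_fidelity} I obtain a state $\rho_N$, which may be taken permutation-invariant, with $F(\rho_N,\rho^{\otimes N})$ bounded below and
\begin{equation}
\rho_N\leq 2^{yN+o(N)}\sigma_N,
\end{equation}
the $o(N)$ term arising as the logarithm of the bounded factor $(1-\Tr[(\rho^{\otimes N}-2^{yN}\sigma_N)_+])^{-1}$.

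\textbf{Step 2 (the crucial deviation).} Here I would avoid Ref.~\cite{Brandao2010}'s combination of an operator inequality with a separate trace-distance estimate and instead produce a \emph{single} operator inequality. Passing to purifications via Lemma~\ref{lem:mixed} and invoking the almost-power-state structure of Lemma~\ref{lem:definitti}, I trace out a sublinear block of $o(N)$ subsystems to reach~\eqref{eq:main_inequality_4}; combined with Step~1 and the definition~\eqref{eq:asym_free_state} of $\tilde\sigma_{N-o(N)}$, this gives $\rho^{\otimes N-o(N)}\leq 2^{yN+o(N)}\tilde\sigma_{N-o(N)}$. Operator monotonicity of the logarithm (Lemma~\ref{lem:operator_monotonicity_log}) then converts this single operator inequality into the relative-entropy bound
\begin{equation}
D\left(\rho^{\otimes N-o(N)}\middle|\middle|\tilde\sigma_{N-o(N)}\right)\leq yN+o(N).
\end{equation}

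\textbf{Step 3.} Since $\Tr_{1,\ldots,o(N)}[\sigma_N]$ is free by Property~\ref{p3} and $\tilde\sigma_{N-o(N)}$ differs from it only through an $O(\epsilon_N)$-weight admixture, the triangle inequality gives $\|\tilde\sigma_{N-o(N)}-\Tr_{1,\ldots,o(N)}[\sigma_N]\|_1\leq\epsilon_N$, so $\tilde\sigma_{N-o(N)}$ is an asymptotically free state. Minimizing over $\mathcal{F}^{(\epsilon_N)}$ and reindexing $N-o(N)\to N$ then yields the bound~\eqref{eq:main_inequality_3}.

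\textbf{Step 4 (the main obstacle).} The hard part is to show this limit equals $R_\mathrm{R}^\infty(\rho)$, i.e.\ that relaxing $\mathcal{F}$ to $\mathcal{F}^{(\epsilon_N)}$ costs nothing asymptotically. I would regularize any near-optimal $\tilde\sigma\in\mathcal{F}^{(\epsilon_N)}$ and a nearby free state by mixing each with the full-rank free state $\sigma_{\mathrm{full}}^{\otimes N}$ (Property~\ref{p2}) with a tuned weight, making both full-rank, and then apply the continuity bound of Corollary~\ref{cor:continuity_bound_relative_entropy_revised} to derive~\eqref{eq:main_looser_bound_lower}. The genuine difficulty is that $\lambda_{\min}(\sigma_{\mathrm{full}}^{\otimes N})=2^{-\Theta(N)}$ forces $\tilde m=2^{-\Theta(N)}$, so the factor $\log_2^2(1/\tilde m)=\Theta(N^2)$ threatens to dominate; this is precisely where the hypothesis $y>0$ is indispensable, as it forces $\epsilon_N=O(2^{-yN})=o(1/N^2)$ and thereby drives the correction term $O((\log_2(1/\epsilon_N)+N)^2\sqrt{\epsilon_N}/N)$ to zero. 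Taking $N\to\infty$ then collapses the sandwich to the equality~\eqref{eq:main_relative_entropy_equivalence}, which, fed back into Step~3, yields~\eqref{eq:y_geq} and completes the proof.
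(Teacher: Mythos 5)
Your proposal follows essentially the same route as the paper's own proof: the same four steps, the same key lemmas (Lemma~\ref{lem:operator_inequality_fidelity}, Lemmas~\ref{lem:mixed} and~\ref{lem:definitti} for the almost-power-state reduction via a single operator inequality, and Corollary~\ref{cor:continuity_bound_relative_entropy_revised} for the final continuity argument), and the same correct identification of where $y>0$ is indispensable. The only point you gloss over is that converting the operator inequality into a relative-entropy bound cannot invoke operator monotonicity of $\log_2$ directly when the supports of $\rho^{\otimes N-o(N)}$ and $\tilde\sigma_{N-o(N)}$ differ; the paper handles this with a $\delta$-perturbation and limiting argument (Proposition~\ref{prp:conversion}), a standard but necessary technicality.
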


In the following, for each $N\in\mathbb{N}$, we set $M(N)$ and $R(N)$ as any non-negative integers satisfying
\begin{align}
    &N-M(N)\geq 2R(N),\\
    \label{eq:N_M_R1}
    &N-M(N)-R(N)\to\infty,\\
    \label{eq:N_M_R2}
    &\frac{M(N)R(N)}{N}\to\infty,\\
    \label{eq:N_M_R3}
    &\frac{M(N)}{N}\to0,\\
    \label{eq:N_M_R4}
    &\frac{R(N)}{N}\to0,
\end{align}
as $N\to\infty$.
For example, we can choose
\begin{align}
    M(N)&=\Theta\qty(N^{2/3}),\\
    R(N)&=\Theta\qty(N^{2/3}).
\end{align}
For simplicity of notation, we may omit $N$ to write $M(N)$ and $R(N)$ as $M$ and $R$, respectively, if obvious from the context.

\subsection{First step}
\label{sec:first}
In the first step of our proof, in the same way as Ref.~\cite{Brandao2010}, for each $N$, we find a permutation-invariant state $\rho_N \in \mathcal{D}(\mathcal{H}^{\otimes N})$ that has a non-negligible fidelity to $\rho^{\otimes N}$ and is related to an optimal free state $\sigma_N\in\mathcal{F}\qty(\mathcal{H}^{\otimes N})$ in the minimization on the left-hand side of~\eqref{eq:minimization} via an operator inequality. 
In particular, we show the following.

\begin{proposition}[First step]
\label{prp:first_step}
For any family $\{\mathcal{F}\qty(\mathcal{H}^{\otimes N})\}_{N\in\mathbb{N}}$ of sets of free states satisfying Properties~\ref{p1}-\ref{p5}, any state $\rho\in\mathcal{D}\qty(\mathcal{H})$, and any $y>0$, if we have, for a constant $\mu \in (0,1)$,
\begin{equation}
\label{eq:minimization_first_step} \liminf_{N\to\infty}\min_{\sigma\in\mathcal{F}\qty(\mathcal{H}^{\otimes N})}\qty{\Tr\qty[\qty(\rho^{\otimes N}-2^{yN}\sigma)_+]}=1-\mu \in (0,1),
\end{equation}
then there exist a sequence $\{\mu_N>0\}_{N\in\mathbb{N}}$ converging to
\begin{equation}
\label{eq:mu_N_first}
    \lim_{N\to\infty}\mu_N=\mu,
\end{equation}
a sequence $\{\sigma_N\in\mathcal{F}\qty(\mathcal{H}^{\otimes N})\}_{N\in\mathbb{N}}$ of permutation-invariant free states optimally achieving the minimization on the left-hand side of~\eqref{eq:minimization_first_step},
and a sequence $\{\rho_N\in\mathcal{D}\qty(\mathcal{H}^{\otimes N})\}_{N\in\mathbb{N}}$ of permutation-invariant states such that, for each $N$,
\begin{align}
    \label{eq:step1_ineq}
    &\rho_N\leq\frac{2^{yN}}{\mu_N}\sigma_N,\\
    \label{eq:step1_fidelity}
    &F\qty(\rho_N,\rho^{\otimes N})\geq\mu_N,
\end{align}
where $F$ is the fidelity in~\eqref{eq:fidelity}.
\end{proposition}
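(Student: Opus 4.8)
The plan is to derive Proposition~\ref{prp:first_step} as a direct application of the operator inequality in Lemma~\ref{lem:operator_inequality_fidelity}, with the permutation symmetry supplied afterwards by group averaging. Since $\mathcal{D}(\mathcal{H}^{\otimes N})$ is compact and $\mathcal{F}(\mathcal{H}^{\otimes N})$ is a closed subset by Property~\ref{p1}, the continuous map $\sigma \mapsto \Tr[(\rho^{\otimes N}-2^{yN}\sigma)_+]$ attains its minimum over $\mathcal{F}(\mathcal{H}^{\otimes N})$. First I would extract from the $\liminf$ assumption~\eqref{eq:minimization_first_step} a subsequence, relabeled as $N$, along which this minimum converges to $1-\mu$, and fix a minimizer $\sigma$ for each such $N$. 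Using that $A \mapsto \Tr[(A)_+] = \max_{0 \le P \le \mathds{1}}\Tr[PA]$ is convex and invariant under $A \mapsto U_\pi A U_\pi^\dag$, together with the permutation invariance $U_\pi \rho^{\otimes N} U_\pi^\dag = \rho^{\otimes N}$, I would replace $\sigma$ by its average $\frac{1}{N!}\sum_{\pi \in S_N} U_\pi \sigma U_\pi^\dag$: this average lies in $\mathcal{F}(\mathcal{H}^{\otimes N})$ by Properties~\ref{p1} and~\ref{p5}, and convexity shows it does not increase the objective, so by optimality it is again a minimizer. This yields the permutation-invariant optimal free state $\sigma_N$.

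Setting $\mu_N \coloneqq 1 - \Tr[(\rho^{\otimes N}-2^{yN}\sigma_N)_+]$, the choice of subsequence gives $\mu_N \to \mu$ as in~\eqref{eq:mu_N_first}, and $0 < \mu_N$ with the traced quantity strictly below $1$ for all large $N$. I would then start from the elementary inequality $\rho^{\otimes N} \le 2^{yN}\sigma_N + (\rho^{\otimes N}-2^{yN}\sigma_N)_+$ and apply Lemma~\ref{lem:operator_inequality_fidelity} with $X = 2^{yN}\sigma_N$ and $\Delta = (\rho^{\otimes N}-2^{yN}\sigma_N)_+$, whose trace is $1-\mu_N$. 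The lemma produces a state $\tilde{\rho}_N$ obeying $\tilde{\rho}_N \le 2^{yN}\sigma_N/\mu_N$ and $F(\tilde{\rho}_N,\rho^{\otimes N}) \ge \mu_N$, which are exactly~\eqref{eq:step1_ineq} and~\eqref{eq:step1_fidelity}. To obtain a permutation-invariant $\rho_N$, I would average once more, setting $\rho_N \coloneqq \frac{1}{N!}\sum_{\pi \in S_N} U_\pi \tilde{\rho}_N U_\pi^\dag$.

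The main point to verify, and the step I expect to require the most care, is that this second averaging preserves both conclusions at once. The operator inequality~\eqref{eq:step1_ineq} survives because conjugating it by each $U_\pi$ and using $U_\pi \sigma_N U_\pi^\dag = \sigma_N$ gives $U_\pi \tilde{\rho}_N U_\pi^\dag \le 2^{yN}\sigma_N/\mu_N$ for every $\pi$, and these average to the same bound. The fidelity bound~\eqref{eq:step1_fidelity} survives because the fidelity in~\eqref{eq:fidelity} is unitarily invariant, so $F(U_\pi \tilde{\rho}_N U_\pi^\dag, \rho^{\otimes N}) = F(U_\pi \tilde{\rho}_N U_\pi^\dag, U_\pi \rho^{\otimes N} U_\pi^\dag) = F(\tilde{\rho}_N, \rho^{\otimes N}) \ge \mu_N$, and because $F(\cdot, \rho^{\otimes N})$ is concave in its first argument, the average retains the bound $\mu_N$. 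The only conceptual subtlety is that the $\liminf$ in~\eqref{eq:minimization_first_step} yields genuine convergence $\mu_N \to \mu$ only along a subsequence; this is harmless for the direct part, since the target quantity $\lim_N \min_\sigma D(\rho^{\otimes N}\,\|\,\sigma)/N$ exists as a genuine limit, so a bound on $y$ derived along the subsequence already bounds it.
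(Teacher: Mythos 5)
Your proposal follows the paper's own proof essentially step for step: take an optimal $\sigma_N$, symmetrize it over $S_N$ (the paper invokes Lemma~\ref{lem:monotonicity} for the twirling channel, which is equivalent to your convexity argument), apply Lemma~\ref{lem:operator_inequality_fidelity} to $\rho^{\otimes N}\leq 2^{yN}\sigma_N+\qty(\rho^{\otimes N}-2^{yN}\sigma_N)_+$, and symmetrize the output using the permutation invariance of $\sigma_N$ and $\rho^{\otimes N}$ together with the concavity (equivalently, data processing) of the fidelity. Your one divergence---passing to a subsequence so that $\mu_N\coloneqq 1-\Tr\qty[\qty(\rho^{\otimes N}-2^{yN}\sigma_N)_+]$ genuinely converges to $\mu$---is in fact more careful than the paper's treatment, which keeps all $N$ by merely requiring $\Tr\qty[\qty(\rho^{\otimes N}-2^{yN}\sigma_N)_+]\geq 1-\mu_N$ with $\mu_N\to\mu$, even though Lemma~\ref{lem:operator_inequality_fidelity} delivers both \eqref{eq:step1_ineq} and \eqref{eq:step1_fidelity} only when $\mu_N\leq 1-\Tr\qty[\qty(\rho^{\otimes N}-2^{yN}\sigma_N)_+]$, a condition incompatible with $\mu_N\to\mu$ at every $N$ whenever the $\liminf$ in~\eqref{eq:minimization_first_step} is not a true limit; your closing remark that the subsequence already suffices to bound $y$ by the (existing) regularized limit is the right way to discharge this.
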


\begin{proof}
Let $\{\sigma_N\in\mathcal{F}\qty(\mathcal{H}^{\otimes N})\}_{N\in\mathbb{N}}$ be a sequence of optimal solutions in the minimization on the left-hand side of~\eqref{eq:minimization_first_step}, satisfying
\begin{equation}
\label{eq:sigma_N_bound}
    \Tr\qty[\qty(\rho^{\otimes N}-2^{yN}\sigma_N)_+]\geq 1-\mu_N
\end{equation}
for a sequence $\{\mu_N\}_{N\in\mathbb{N}}$ converging to
\begin{equation}
    \lim_{N\to\infty}\mu_N=\mu.
\end{equation}
Due to Lemma~\ref{lem:monotonicity} and Properties~\ref{p1} and~\ref{p5}, we can take $\sigma_N\in\mathcal{F}\qty(\mathcal{H}^{\otimes N})$ to be permutation-invariant.

Applying Lemma~\ref{lem:operator_inequality_fidelity} to
\begin{equation}
    \rho^{\otimes N}\leq 2^{yN}\sigma_N +\qty(\rho^{\otimes N}-2^{yN}\sigma_N)_+,
\end{equation}
we have a sequence $\{\rho_N\}_{N\in\mathbb{N}}$ of states satisfying
\begin{align}
    &\rho_N\leq\frac{2^{yN}}{\mu_N}\sigma_N,\\
    &F\qty(\rho_N,\rho^{\otimes N})\geq\mu_N,
\end{align}
where we use~\eqref{eq:sigma_N_bound}.
Again, due to Lemma~\ref{lem:monotonicity} and the permutation invariance of $\rho^{\otimes N}$ and $\sigma_N$, we can take $\rho_N$ to be permutation-invariant.
\end{proof}

\subsection{Second step}
\label{sec:second}
In the second step, using Proposition~\ref{prp:first_step} shown in the first step, we show an operator inequality between $\rho^{\otimes N}$ and $\sigma_N$, so as to derive a bound of the quantum relative entropy from the operator inequality. 
Although permutation-invariant, $\sigma_N$ and $\rho_N$ defined in the first step may be in non-IID forms, which makes analysis challenging.
To address the non-IIDness, similar to Ref.~\cite{Brandao2010}, we use the almost power states~\cite{Horodecki3_Leung_Oppenheim2008a,Horodecki3_Leung_Oppenheim2008b,renner2006security,renner2007symmetry,Brandao2010} summarized in Sec.~\ref{subsec:almost_power_states} to approximately recover the IID structure of the state. 
For the recovery of IID structure, Ref.~\cite{Brandao2010} represented this approximation by combining an operator inequality and the trace distance, but for our proof, it is crucial to represent this approximation solely in terms of a single operator inequality.
In the following, we will first show an operator inequality to relate the non-IID permutation-invariant state and the almost power state (Proposition~\ref{prp:operator_inequality_non_IID_almost_power}).
Then, we will show an operator inequality to relate the almost power state and the IID state (Proposition~\ref{prp:operator_inequality_almost_power_IID}).
Using the operator inequalities in Propisitions~\ref{prp:operator_inequality_non_IID_almost_power} and~\ref{prp:operator_inequality_almost_power_IID} in combination with that of Proposition~\ref{prp:first_step} in the first step, we derive the operator inequality between $\rho^{\otimes N}$ and $\sigma_N$ (Proposition~\ref{prp:second_step_operator_inequality}).
To convert this operator inequality to a bound of the quantum relative entropy, we also show a formula for this conversion (Proposition~\ref{prp:conversion}).
Finally, applying the conversion in Proposition~\ref{prp:conversion} to the operator inequality in Proposition~\ref{prp:second_step_operator_inequality}, we derive the desired bound of the quantum relative entropy (Proposition~\ref{prp:second_step_relative_entropy}), which achieves the goal of the second step of our proof.

First, we show an operator inequality to relate the non-IID permutation-invariant state and the almost power state, which holds by tracing out a small number of subsystems.

\begin{proposition}[Operator inequality to relate non-IID permutation-invariant state and almost power state]
\label{prp:operator_inequality_non_IID_almost_power}
For any $N\in\mathbb{N}$, any non-negative integers $M,R$ satisfying $M\leq N$ and $R\leq N-M$, any
$\mu_N>0$, any state $\rho\in\mathcal{D}\qty(\mathcal{H})$, and any permutation-invariant state $\rho_N\in\mathcal{D}\qty(\mathcal{H}^{\otimes N})$ satisfying
\begin{equation}
    F\qty(\rho_N,\rho^{\otimes N})\geq\mu_N,
\end{equation}
there exist a purification $\ket{\rho}$ of $\rho$, a permutation-invariant purification $\ket{\rho_N}$ of $\rho_N$, an almost power state along $\ket{\rho}$
\begin{equation}
    \ket{\rho_{N,M,R}}\in\ket{\rho}^{[\otimes,N-M,R]},
\end{equation}
and a state $\Delta_{N,M}$ such that
\begin{equation}
   \ket{\rho_{N,M,R}}\bra{\rho_{N,M,R}}\leq
   \frac{\Tr_{1,\ldots,M}\qty[\ket{\rho_N}\bra{\rho_N}]}{\mu_N^2}+\frac{2\sqrt{2}}{\mu_N}\mathrm{e}^{-\frac{MR}{2N}}\Delta_{N,M}.
\end{equation}
\end{proposition}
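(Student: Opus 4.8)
The plan is to combine three ingredients already in hand: the joint purification of Lemma~\ref{lem:mixed}, the de~Finetti–type approximation of Lemma~\ref{lem:definitti} relating a conditioned permutation-invariant state to an almost power state, and the conversion of a trace-distance bound into an operator inequality in Lemma~\ref{lem:operator_inequality_distance}. The proposition is really a repackaging of the two conclusions of Lemma~\ref{lem:definitti} into a single operator inequality, with the fidelity lower bound $\mu_N$ used to control both reciprocals at once.

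First I would invoke Lemma~\ref{lem:mixed} to obtain a purification $\ket{\rho}$ of $\rho$ and a permutation-invariant purification $\ket{\rho_N}$ of $\rho_N$ with $\qty|\braket{\rho_N}{\rho^{\otimes N}}|=F(\rho_N,\rho^{\otimes N})$. Since the hypothesis gives $F(\rho_N,\rho^{\otimes N})\geq\mu_N>0$, the overlap $\qty|\braket{\rho_N}{\rho^{\otimes N}}|$ is strictly positive, so the nonzero-fidelity hypothesis of Lemma~\ref{lem:definitti} is met. Applying it with the given $M$ and $R\leq N-M$ produces the conditioned state $\ket{\rho_{N,M}}\in\mathcal{H}^{\otimes N-M}$ (in the purified space) together with an almost power state $\ket{\rho_{N,M,R}}\in\ket{\rho}^{[\otimes,N-M,R]}$ along $\ket{\rho}$.

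Next I would record the two bounds from Lemma~\ref{lem:definitti} and weaken each by replacing $\qty|\braket{\rho_N}{\rho^{\otimes N}}|$ with the smaller quantity $\mu_N$. Because the numerator $\Tr_{1,\ldots,M}\qty[\ket{\rho_N}\bra{\rho_N}]$ is positive semidefinite and $\qty|\braket{\rho_N}{\rho^{\otimes N}}|^2\geq\mu_N^2$, the operator inequality becomes $\ket{\rho_{N,M}}\bra{\rho_{N,M}}\leq\Tr_{1,\ldots,M}\qty[\ket{\rho_N}\bra{\rho_N}]/\mu_N^2$, while the approximation bound becomes $\|\ket{\rho_{N,M}}\bra{\rho_{N,M}}-\ket{\rho_{N,M,R}}\bra{\rho_{N,M,R}}\|_1\leq(2\sqrt{2}/\mu_N)\,\mathrm{e}^{-MR/2N}=:\epsilon$.

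Finally I would feed this trace-distance bound into Lemma~\ref{lem:operator_inequality_distance}, taking the almost power state $\ket{\rho_{N,M,R}}\bra{\rho_{N,M,R}}$ as the first argument and $\ket{\rho_{N,M}}\bra{\rho_{N,M}}$ as the second, obtaining a state $\Delta_{N,M}$ with $\ket{\rho_{N,M,R}}\bra{\rho_{N,M,R}}\leq\ket{\rho_{N,M}}\bra{\rho_{N,M}}+\epsilon\,\Delta_{N,M}$; substituting the operator upper bound on $\ket{\rho_{N,M}}\bra{\rho_{N,M}}$ then yields exactly the claimed inequality. The only genuine subtlety is the boundary case where the trace distance vanishes, so that Lemma~\ref{lem:operator_inequality_distance} does not literally apply; there the two projectors coincide and the inequality holds with any choice of state $\Delta_{N,M}$. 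I do not expect a substantial obstacle in this lemma—it is essentially bookkeeping—and the real difficulty of the second step instead lives in the companion Proposition that relates the almost power state back to the genuine IID state $\rho^{\otimes N}$.
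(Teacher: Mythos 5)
Your proposal is correct and follows essentially the same route as the paper: Lemma~\ref{lem:mixed} to get the purifications with overlap at least $\mu_N$, Lemma~\ref{lem:definitti} for the conditioned state and the almost power state with both bounds weakened via $\qty|\braket{\rho_N}{\rho^{\otimes N}}|\geq\mu_N$, and Lemma~\ref{lem:operator_inequality_distance} applied with the almost power state as the first argument to produce $\Delta_{N,M}$ and assemble the single operator inequality. Your remark on the degenerate case of vanishing trace distance is a detail the paper glosses over, and your handling of it is fine.
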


\begin{proof}
Due to Lemma~\ref{lem:mixed}, we have a permutation-invariant purification $\ket{\rho_N}$ of $\rho_N$ and a purification $\ket{\rho}$ of $\rho$ satisfying
\begin{equation}
    \qty|\braket{\rho_N}{\rho^{\otimes N}}|\geq\mu_N>0.
\end{equation}
Thus, applying Lemma~\ref{lem:definitti} to $\ket{\rho_N}$ and $\ket{\rho}$, we obtain a state
\begin{equation}
    \ket{\rho_{N,M}}\coloneqq\frac{\qty(\bra{\rho^{\otimes M}}\otimes\mathds{1}^{\otimes N-M})\ket{\rho_N}}{\left\|\qty(\bra{\rho^{\otimes M}}\otimes\mathds{1}^{\otimes N-M})\ket{\rho_N}\right\|}.
\end{equation}
and an almost power state along $\ket{\rho}$
\begin{equation}
    \ket{\rho_{N,M,R}}\in\ket{\rho}^{[\otimes,N-M,R]}
\end{equation}
satisfying
\begin{align}
\label{eq:2}
   &\ket{\rho_{N,M}}\bra{\rho_{N,M}}\leq\frac{\Tr_{1,\ldots,M}\qty[\ket{\rho_N}\bra{\rho_N}]}{\mu_N^2},\\
   \label{eq:2_distance}
   &\left\|\ket{\rho_{N,M}}\bra{\rho_{N,M}}-\ket{\rho_{N,M,R}}\bra{\rho_{N,M,R}}\right\|_1\leq\frac{2\sqrt{2}}{\mu_N}\mathrm{e}^{-\frac{MR}{2N}}.
\end{align}
Thus, using Lemma~\ref{lem:operator_inequality_distance}, we obtain
\begin{align}
\label{eq:3}
   \ket{\rho_{N,M,R}}\bra{\rho_{N,M,R}}&\leq\ket{\rho_{N,M}}\bra{\rho_{N,M}}+\frac{2\sqrt{2}}{\mu_N}\mathrm{e}^{-\frac{MR}{2N}}\Delta_{N,M}\\
   &\leq\frac{\Tr_{1,\ldots,M}\qty[\ket{\rho_N}\bra{\rho_N}]}{\mu_N^2}+\frac{2\sqrt{2}}{\mu_N}\mathrm{e}^{-\frac{MR}{2N}}\Delta_{N,M}
\end{align}
with
\begin{equation}
    \Delta_{N,M}\coloneqq\frac{\qty(\ket{\rho_{N,M,R}}\bra{\rho_{N,M,R}}-\ket{\rho_{N,M}}\bra{\rho_{N,M}})_+}{\Tr\qty[\qty(\ket{\rho_{N,M,R}}\bra{\rho_{N,M,R}}-\ket{\rho_{N,M}}\bra{\rho_{N,M}})_+]}.
\end{equation}
    
\end{proof}

Next, to recover the IID structure from the almost power state, we show an operator inequality to relate the almost power state and the IID state, which holds by tracing out a small number of subsystems.

\begin{proposition}[Operator inequality to relate almost power state and IID state\label{prp:operator_inequality_almost_power_IID}]
    For any $N\in\mathbb{N}$, any non-negative integers $M,R$ satisfying $N-M \geq 2R$, any pure state $\ket{\rho}$, and any almost power state along $\ket{\rho}$
    \begin{equation}
        \ket{\rho_{N,M,R}}\in\ket{\rho}^{[\otimes,N-M,R]},
    \end{equation}
    there exists a state $\Delta_{N,M,R}$ such that
    \begin{equation}
    \label{eq:operator_inequality_non_lockability}
        \ket{\rho}\bra{\rho}^{\otimes N-M-R}\leq 2^{N h\qty(\frac{R}{N-M})}N^2\Tr_{1,\ldots,R}\qty[\ket{\rho_{N,M,R}}\bra{\rho_{N,M,R}}+\frac{2\sqrt{2R}}{N}\Delta_{N,M,R}],
    \end{equation}
    where $h$ is the binary entropy function in~\eqref{eq:binary_entropy}.
\end{proposition}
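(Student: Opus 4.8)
The plan is to read off the inequality from the explicit expansion~\eqref{eq:almost_power_state} of the almost power state, by isolating the single family of defect configurations that survives when the kept subsystems are projected onto $\ket{\rho}$. Write $\omega\coloneqq\ket{\rho_{N,M,R}}\bra{\rho_{N,M,R}}$, set $K\coloneqq N-M$, and let $\Pi\coloneqq\mathds{1}^{\otimes R}\otimes\ket{\rho}\bra{\rho}^{\otimes K-R}$ be the projector forcing the last $K-R$ subsystems into $\ket{\rho}$. The first step is a combinatorial computation: since each $\ket{\psi_r}$ lies in $(\spn\{\ket{\rho}\}^{\perp})^{\otimes r}$, applying $\Pi$ annihilates every term of~\eqref{eq:almost_power_state} except those whose $r$ defects all sit in the traced-out block $\{1,\ldots,R\}$, and the orthogonality of the defect subspaces makes the surviving terms orthonormal and kills the cross terms between distinct $r$. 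I expect this to give
\begin{equation}
\Tr_{1,\ldots,R}\qty[\Pi\omega\Pi]=\qty(\sum_{r=0}^{R}|\beta_r|^2\frac{\binom{R}{r}}{\binom{K}{r}})\ket{\rho}\bra{\rho}^{\otimes K-R}\eqqcolon p\,\ket{\rho}\bra{\rho}^{\otimes K-R}.
\end{equation}
Using $\binom{R}{r}/\binom{K}{r}=\prod_{i=0}^{r-1}\frac{R-i}{K-i}\geq 1/\binom{K}{R}$ for every $r\leq R$ together with $\sum_r|\beta_r|^2=1$, I obtain $p\geq 1/\binom{K}{R}\geq 2^{-Kh(R/K)}\geq 2^{-Nh(R/(N-M))}$, whence the clean rank-one bound $\ket{\rho}\bra{\rho}^{\otimes K-R}\leq 2^{Nh(R/(N-M))}\Tr_{1,\ldots,R}[\Pi\omega\Pi]$.

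Next I would transfer this bound from $\Pi\omega\Pi$ to $\omega$ itself. Writing $A\coloneqq\Tr_{1,\ldots,R}[\omega]$ and $P\coloneqq\Tr_{1,\ldots,R}[\Pi\omega\Pi]=p\,\ket{\rho}\bra{\rho}^{\otimes K-R}$, I use $P\leq A+(P-A)_+$ together with the elementary trace identity $\Tr[(P-A)_+]=\tfrac12(\|P-A\|_1+\Tr[P-A])\leq\tfrac12\qty((p+1)+(p-1))=p\leq 1$. Because $(P-A)_+$ is then a positive operator of trace at most one, I may choose the correction state $\Delta_{N,M,R}$ on $\mathcal{H}^{\otimes N-M}$ with $\Tr_{1,\ldots,R}[\Delta_{N,M,R}]\geq\frac{1}{2N\sqrt{2R}}(P-A)_+$; this is feasible since $\frac{1}{2N\sqrt{2R}}\Tr[(P-A)_+]\leq\frac{1}{2N\sqrt{2R}}\leq 1$ for $R\geq 1$, and any positive operator of trace at most one is realized as a partial trace of a normalized state (the operator-inequality-from-small-perturbation mechanism of Lemma~\ref{lem:operator_inequality_distance}, with the partial-trace monotonicity of Lemma~\ref{lem:operator_inequality_partial_trace}). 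Combining $2^{Nh(R/(N-M))}A\leq 2^{Nh(R/(N-M))}N^2A$ with $2^{Nh(R/(N-M))}(P-A)_+\leq 2^{Nh(R/(N-M))}N^2\frac{2\sqrt{2R}}{N}\Tr_{1,\ldots,R}[\Delta_{N,M,R}]$ then yields exactly~\eqref{eq:operator_inequality_non_lockability}; the degenerate case $R=0$ is immediate with no correction.

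The step I expect to be the main obstacle is precisely the passage from $\Pi\omega\Pi$ back to $\omega$. One would naively hope to prove the stronger operator inequality $\Tr_{1,\ldots,R}[\omega]\geq 2^{-Nh(R/(N-M))}\ket{\rho}\bra{\rho}^{\otimes K-R}$ outright, in the spirit of the non-lockability argument underlying Lemma~III.7 of Ref.~\cite{Brandao2010}; however, this fails, because tracing out the first $R$ subsystems leaves a substantial off-diagonal coupling between the clean direction $\ket{\rho}^{\otimes K-R}$ and the defective sectors, so a direct positivity (Schur-complement) argument exhausts all of the weight $p$ and returns nothing beyond $\Tr_{1,\ldots,R}[\omega]\geq 0$ --- this leakage is exactly of the kind responsible for the gap in Ref.~\cite{Brandao2010}. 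The device that circumvents it is to project \emph{first} and pay for the discarded off-diagonal part only afterwards through $\Delta_{N,M,R}$, exploiting the polynomial slack $N^2$ in the prefactor so that the correction comfortably fits the allotted budget $\frac{2\sqrt{2R}}{N}$ rather than having to be made exponentially small. The remaining work is bookkeeping: verifying that the surviving-term orthogonality in the first step is exact, that the binomial estimate is applied in the regime $N-M\geq 2R$, and that the constructed $\Delta_{N,M,R}$ is a genuine normalized state.
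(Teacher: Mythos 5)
Your proof is correct, but it takes a genuinely different route from the paper's. The paper proves this proposition by first truncating the expansion~\eqref{eq:almost_power_state} to the coefficients with $|\beta_r|\geq 1/N$, forming a renormalized state $\ket{\tilde{\rho}_{N,M,R}}$, invoking Lemma~III.8 of Ref.~\cite{Brandao2010} on that truncated state to obtain $\ket{\rho}\bra{\rho}^{\otimes N-M-R}\leq 2^{Nh(R/(N-M))}N^2\Tr_{1,\ldots,R}[\ket{\tilde{\rho}_{N,M,R}}\bra{\tilde{\rho}_{N,M,R}}]$ (the $N^2$ coming from $\min_r|\beta_r|^2\geq 1/N^2$ after truncation), then bounding $\|\ket{\tilde{\rho}_{N,M,R}}\bra{\tilde{\rho}_{N,M,R}}-\ket{\rho_{N,M,R}}\bra{\rho_{N,M,R}}\|_1\leq 2\sqrt{2R}/N$ and converting that trace-distance bound into the additive correction $\frac{2\sqrt{2R}}{N}\Delta_{N,M,R}$ via Lemma~\ref{lem:operator_inequality_distance}. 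You instead avoid both the truncation and the black-box use of Lemma~III.8: your computation $\Tr_{1,\ldots,R}[\Pi\omega\Pi]=p\,\ket{\rho}\bra{\rho}^{\otimes K-R}$ with $p\geq\binom{K}{R}^{-1}$ (which is essentially the combinatorial core of Lemma~III.8, and your orthogonality and binomial estimates check out, including $\binom{R}{r}/\binom{K}{r}=\binom{K-r}{R-r}/\binom{K}{R}\geq\binom{K}{R}^{-1}$) handles all coefficients uniformly, and you pay for replacing $\Pi\omega\Pi$ by $\omega$ with a single positive correction $(P-A)_+$ of trace at most $p\leq 1$, which fits inside the budget $N^2\cdot\frac{2\sqrt{2R}}{N}=2N\sqrt{2R}\geq 1$. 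This is self-contained and arguably cleaner; it also reveals that the prefactor $N^2$ and the correction weight $\frac{2\sqrt{2R}}{N}$, which in the paper are tied to the specific truncation threshold $1/N$, can be treated purely as slack. Two minor remarks: your aside attributing the "leakage'' to the gap in Ref.~\cite{Brandao2010} is imprecise (the gap is in Lemma~III.9, concerning relative-entropy differences, not in Lemma~III.8), but this is commentary and does not enter the argument; and the downstream use of the proposition (Proposition~\ref{prp:second_step_operator_inequality}) only requires $\Delta_{N,M,R}$ to be a normalized state on $\mathcal{H}^{\otimes N-M}$, which your explicit construction $\Delta_{N,M,R}=\tau\otimes\bigl(\tfrac{1}{2N\sqrt{2R}}(P-A)_++(1-\tfrac{p}{2N\sqrt{2R}})\xi\bigr)$ satisfies, so the replacement is safe.
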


\begin{proof}
As in the proof of Lemma~III.7 of Ref.~\cite{Brandao2010}, for the almost power state along $\ket{\rho}$ written as~\eqref{eq:almost_power_state}, i.e.,
\begin{equation}
    \ket{\rho_{N,M,R}}=\sum_{r=0}^{R}\beta_r\Sym\qty(\ket{\rho}^{\otimes N-M-r}\otimes\ket{\psi_r})\in\ket{\rho}^{[\otimes,N-M,R]},
\end{equation}
we consider a state $\ket{\tilde{\rho}_{N,M,R}}$ given by
\begin{equation}
    \ket{\tilde{\rho}_{N,M,R}}\coloneqq\frac{\sum_{r:|\beta_r|\geq\frac{1}{N}}\beta_r\Sym\qty(\ket{\rho}^{\otimes N-M-r}\otimes\ket{\psi_r})}{\left\|\sum_{r:|\beta_r|\geq\frac{1}{N}}\beta_r\Sym\qty(\ket{\rho}^{\otimes N-M-r}\otimes\ket{\psi_r})\right\|}.
\end{equation}
To this state, due to $N-M \geq 2R$, we apply Lemma~III.8 of Ref.~\cite{Brandao2010} to obtain
\begin{equation}
\label{eq:operator_inequality_lemma_III_8}
    \ket{\rho}\bra{\rho}^{\otimes N-M-R}\leq 2^{N h\qty(\frac{R}{N-M})}N^2\Tr_{1,\ldots,R}\qty[\ket{\tilde{\rho}_{N,M,R}}\bra{\tilde{\rho}_{N,M,R}}].
\end{equation}
To derive the desired operator inequality~\eqref{eq:operator_inequality_non_lockability} from this operator inequality, we will bound the trace distance between $\ket{\tilde{\rho}_{N,M,R}}\bra{\tilde{\rho}_{N,M,R}}$ and $\ket{\rho_{N,M,R}}\bra{\rho_{N,M,R}}$, which Ref.~\cite{Brandao2010} did not explicitly analyze.

To bound $\left\|\ket{\tilde{\rho}_{N,M,R}}\bra{\tilde{\rho}_{N,M,R}}-\ket{\rho_{N,M,R}}\bra{\rho_{N,M,R}}\right\|_1$, we evaluate
\begin{align}
    &\left\|\sum_{r:|\beta_r|\geq\frac{1}{N}}\beta_r\Sym\qty(\ket{\rho}^{\otimes N-M-r}\otimes\ket{\psi_r})-\ket{\rho_{N,M,R}}\right\|\\
    &=\left\|\sum_{r:|\beta_r|<\frac{1}{N}}\beta_r\Sym\qty(\ket{\rho}^{\otimes N-M-r}\otimes\ket{\psi_r})\right\|\\
    &=\sqrt{\sum_{r:|\beta_r|<\frac{1}{N}}\qty|\beta_r|^2}\\
    &<\sqrt{R\times\frac{1}{N^2}},
\end{align}
where the last inequality follows from the fact that $r$ runs in a subset of $\{1,\ldots,R\}$ with each term satisfying $|\beta_r|^2<1/N^2$.
Since it holds for any non-normalized vectors $\ket{x}$ and $\ket{y}$ that~\cite{Brandao2010}
\begin{align}
    &\left\|\frac{\ket{x}}{\|\ket{x}\|}-\ket{y}\right\|\leq 2\|\ket{x}-\ket{y}\|,\\
    &\left\|\ket{x}\bra{x}-\ket{y}\bra{y}\right\|_1\leq \sqrt{\braket{x}{x}+\braket{y}{y}}\|\ket{x}-\ket{y}\|,
\end{align}
we have
\begin{equation}
\label{eq:4_distance}
    \left\|\ket{\tilde{\rho}_{N,M,R}}\bra{\tilde{\rho}_{N,M,R}}-\ket{\rho_{N,M,R}}\bra{\rho_{N,M,R}}\right\|_1\leq\frac{2\sqrt{2R}}{N}.
\end{equation}

Consequently, 
using Lemma~\ref{lem:operator_inequality_distance}, we obtain from~\eqref{eq:4_distance}
\begin{equation}
\label{eq:5}
   \ket{\tilde{\rho}_{N,M,R}}\bra{\tilde{\rho}_{N,M,R}}\leq\ket{\rho_{N,M,R}}\bra{\rho_{N,M,R}}+\frac{2\sqrt{2R}}{N}\Delta_{N,M,R}
\end{equation}
with
\begin{equation}
    \Delta_{N,M,R}\coloneqq\frac{\qty(\ket{\tilde{\rho}_{N,M,R}}\bra{\tilde{\rho}_{N,M,R}}-\ket{\rho_{N,M,R}}\bra{\rho_{N,M,R}})_+}{\Tr\qty[\qty(\ket{\tilde{\rho}_{N,M,R}}\bra{\tilde{\rho}_{N,M,R}}-\ket{\rho_{N,M,R}}\bra{\rho_{N,M,R}})_+]}.
\end{equation}
Thus, using Lemma~\ref{lem:operator_inequality_partial_trace},  we obtain from~\eqref{eq:operator_inequality_lemma_III_8} and~\eqref{eq:5}
\begin{equation}
        \ket{\rho}\bra{\rho}^{\otimes N-M-R}\leq 2^{N h\qty(\frac{R}{N-M})}N^2\Tr_{1,\ldots,R}\qty[\ket{\rho_{N,M,R}}\bra{\rho_{N,M,R}}+\frac{2\sqrt{2R}}{N}\Delta_{N,M,R}].
\end{equation}
\end{proof}

As a whole, from the operator inequalities of Propositon~\ref{prp:first_step} in the first step and Propositions~\ref{prp:operator_inequality_non_IID_almost_power} and~\ref{prp:operator_inequality_almost_power_IID} in the second step, we obtain the operator inequality shown below.
Note that the state $\tilde{\Delta}_{N-M-R}$ shown in this proposition is denoted by $\tilde{\Delta}_{N-o(N)}$ in Methods of the main text for simplicity of notation.
The definitions of $\epsilon_N$ and $c_N$ appearing in Methods of the main text are also given in this proposition.

\begin{proposition}[Operator inequality summarizing first and second steps]
\label{prp:second_step_operator_inequality}
For any family $\{\mathcal{F}\qty(\mathcal{H}^{\otimes N})\}_{N\in\mathbb{N}}$ of sets of free states satisfying Properties~\ref{p1}-\ref{p5}, any state $\rho\in\mathcal{D}\qty(\mathcal{H})$, any $y>0$, any $N\in\mathbb{N}$, and any non-negative integers $M,R$ satisfying $N-M \geq 2R$, if we have, for a constant $\mu \in (0,1)$,
\begin{equation}
\label{eq:minimization_second_step_operator_inequality} \liminf_{N\to\infty}\min_{\sigma\in\mathcal{F}\qty(\mathcal{H}^{\otimes N})}\qty{\Tr\qty[\qty(\rho^{\otimes N}-2^{yN}\sigma)_+]}=1-\mu \in (0,1),
\end{equation}
then there exist a sequence $\{\mu_N>0\}_{N\in\mathbb{N}}$ converging to
\begin{equation}
\label{eq:mu_N_Second}
    \lim_{N\to\infty}\mu_N=\mu,
\end{equation}
a sequence $\{\sigma_N\in\mathcal{F}\qty(\mathcal{H}^{\otimes N})\}_{N\in\mathbb{N}}$ of permutation-invariant free states optimally achieving the minimization on the left-hand side of~\eqref{eq:minimization_second_step_operator_inequality}, and states $\tilde{\Delta}_{N,M},\tilde{\Delta}_{N,M,R}\in\mathcal{D}\qty(\mathcal{H}^{\otimes N-M-R})$ such that, for
\begin{align}
\label{eq:tilde_sigma_N}
\tilde{\sigma}_{N-M-R}&\coloneqq\frac{\Tr_{1,\ldots,R}\circ\Tr_{1,\ldots,M}\qty[\sigma_N]+\frac{\epsilon_N}{2}\tilde{\Delta}_{N-M-R}}{1+\frac{\epsilon_N}{2}},\\
\label{eq:tilde_Delta}
\tilde{\Delta}_{N-M-R}&\coloneqq\frac{\frac{2\sqrt{2}}{\mu_N\mathrm{e}^{\frac{MR}{2N}}}\tilde{\Delta}_{N,M}+\frac{2\sqrt{2R}}{N}\tilde{\Delta}_{N,M,R}}{\frac{2\sqrt{2}}{\mu_N\mathrm{e}^{\frac{MR}{2N}}}+\frac{2\sqrt{2R}}{N}},\\
\label{eq:epsilon_N}
\epsilon_N&\coloneqq\frac{2\mu_N^3}{2^{yN}}\qty(\frac{2\sqrt{2}}{\mu_N\mathrm{e}^{\frac{MR}{2N}}}+\frac{2\sqrt{2R}}{N}),
\end{align}
it holds that
\begin{align}
\label{eq:opeartor_inequality_first_second}
    \rho^{\otimes N-M-R}\leq
    \frac{2^{N \qty(y+h\qty(\frac{R}{N-M}))}N^2}{\mu_N^3}\qty(1+\frac{\epsilon_N}{2})\tilde{\sigma}_{N-M-R},
\end{align}
where $h$ is the binary entropy function in~\eqref{eq:binary_entropy}.
\end{proposition}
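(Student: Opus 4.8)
The plan is to obtain~\eqref{eq:opeartor_inequality_first_second} purely by chaining the three operator inequalities already in hand---those of Proposition~\ref{prp:first_step}, Proposition~\ref{prp:operator_inequality_non_IID_almost_power}, and Proposition~\ref{prp:operator_inequality_almost_power_IID}---and collapsing them with repeated use of the operator monotonicity of the partial trace (Lemma~\ref{lem:operator_inequality_partial_trace}). First I would invoke Proposition~\ref{prp:first_step} under the hypothesis~\eqref{eq:minimization_second_step_operator_inequality}, which directly supplies the sequences $\{\mu_N\}$ with $\lim_N\mu_N=\mu$, the permutation-invariant optimal free states $\{\sigma_N\}$, and permutation-invariant states $\{\rho_N\}$ obeying $\rho_N\leq(2^{yN}/\mu_N)\sigma_N$ and $F(\rho_N,\rho^{\otimes N})\geq\mu_N$; these are exactly the objects named in the conclusion. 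Feeding the fidelity bound into Proposition~\ref{prp:operator_inequality_non_IID_almost_power} produces purifications $\ket{\rho}$, $\ket{\rho_N}$, an almost power state $\ket{\rho_{N,M,R}}$ along $\ket{\rho}$, and a state $\Delta_{N,M}$ bounding the gap to $\Tr_{1,\ldots,M}[\ket{\rho_N}\bra{\rho_N}]/\mu_N^2$. Applying Proposition~\ref{prp:operator_inequality_almost_power_IID} to the same almost power state then upper-bounds the purified IID state $\ket{\rho}\bra{\rho}^{\otimes N-M-R}$ by $2^{Nh(R/(N-M))}N^2$ times $\Tr_{1,\ldots,R}[\ket{\rho_{N,M,R}}\bra{\rho_{N,M,R}}+(2\sqrt{2R}/N)\Delta_{N,M,R}]$.

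Next I would assemble the chain. Substituting the inequality of Proposition~\ref{prp:operator_inequality_non_IID_almost_power} into the bound of Proposition~\ref{prp:operator_inequality_almost_power_IID} (both correction terms then sit inside $\Tr_{1,\ldots,R}$) yields an upper bound on $\ket{\rho}\bra{\rho}^{\otimes N-M-R}$ in terms of $\Tr_{1,\ldots,R}\circ\Tr_{1,\ldots,M}[\ket{\rho_N}\bra{\rho_N}]/\mu_N^2$ together with two correction operators carrying prefactors $(2\sqrt{2}/\mu_N)\mathrm{e}^{-MR/2N}$ and $2\sqrt{2R}/N$. I would then trace out all purifying registers $\mathcal{H}^E$ of the surviving $N-M-R$ subsystems: since the various partial traces commute and $\Tr_E[\ket{\rho}\bra{\rho}^{\otimes N-M-R}]=\rho^{\otimes N-M-R}$, the left-hand side becomes $\rho^{\otimes N-M-R}$ and the leading term becomes $\Tr_{1,\ldots,R}\circ\Tr_{1,\ldots,M}[\rho_N]/\mu_N^2$, while the two correction operators trace down to genuine states $\tilde{\Delta}_{N,M},\tilde{\Delta}_{N,M,R}\in\mathcal{D}(\mathcal{H}^{\otimes N-M-R})$. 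Feeding in the first-step inequality $\rho_N\leq(2^{yN}/\mu_N)\sigma_N$, again monotone under the partial trace, replaces $\Tr_{1,\ldots,R}\circ\Tr_{1,\ldots,M}[\rho_N]$ by $(2^{yN}/\mu_N)\Tr_{1,\ldots,R}\circ\Tr_{1,\ldots,M}[\sigma_N]$, giving the overall prefactor $2^{N(y+h(R/(N-M)))}N^2/\mu_N^3$.

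The remaining task is algebraic bookkeeping: factoring out this prefactor and recognizing that the residual coefficients on the two $\Delta$-states combine, by definitions~\eqref{eq:tilde_Delta} and~\eqref{eq:epsilon_N}, into exactly $(\epsilon_N/2)\tilde{\Delta}_{N-M-R}$, so that the bracket equals $\Tr_{1,\ldots,R}\circ\Tr_{1,\ldots,M}[\sigma_N]+(\epsilon_N/2)\tilde{\Delta}_{N-M-R}=(1+\epsilon_N/2)\tilde{\sigma}_{N-M-R}$ by~\eqref{eq:tilde_sigma_N}; I would also check that $\tilde{\Delta}_{N-M-R}$ is a legitimate state (a convex combination of two states) and that $\tilde{\sigma}_{N-M-R}$ is normalized, since partial traces preserve unit trace. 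I expect the only real obstacle to be the subsystem bookkeeping in the purified picture---verifying that tracing out the purifying registers of the remaining $N-M-R$ subsystems commutes with the earlier traces over subsystems $1,\ldots,M$ and $1,\ldots,R$ and correctly returns $\Tr_{1,\ldots,R}\circ\Tr_{1,\ldots,M}[\rho_N]$ and $\rho^{\otimes N-M-R}$---together with matching the regrouped constants to the precise normalizations in~\eqref{eq:tilde_sigma_N}--\eqref{eq:epsilon_N}. All analytic content is already carried by the three input propositions, so here everything reduces to monotonicity of the partial trace and a careful tracking of constants.
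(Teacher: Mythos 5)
Your proposal is correct and follows essentially the same route as the paper's proof: invoke Proposition~\ref{prp:first_step} for $\mu_N$, $\sigma_N$, $\rho_N$, chain the operator inequalities of Propositions~\ref{prp:operator_inequality_non_IID_almost_power} and~\ref{prp:operator_inequality_almost_power_IID} via Lemma~\ref{lem:operator_inequality_partial_trace}, trace out the purifying registers to define $\tilde{\Delta}_{N,M}$ and $\tilde{\Delta}_{N,M,R}$, substitute $\rho_N\leq(2^{yN}/\mu_N)\sigma_N$, and regroup constants into $\epsilon_N$ and $\tilde{\sigma}_{N-M-R}$. The bookkeeping issue you flag (commuting the partial traces with $\Tr_E$) is exactly what the paper handles by defining $\tilde{\Delta}_{N,M,R}\coloneqq\Tr_E\circ\Tr_{1,\ldots,R}[\Delta_{N,M,R}]$ and likewise for $\tilde{\Delta}_{N,M}$, so no gap remains.
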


\begin{proof}
Proposition~\ref{prp:first_step} provides the sequences $\{\mu_N>0\}_{N\in\mathbb{N}}$ and $\{\sigma_N\in\mathcal{F}\qty(\mathcal{H}^{\otimes N})\}_{N\in\mathbb{N}}$.
We will show the operator inequality~\eqref{eq:opeartor_inequality_first_second} for these sequences.

We first derive an operator inequality relating $\rho^{\otimes N}$ and $\rho_N$. 
Due to Lemma~\ref{lem:operator_inequality_partial_trace}, from Propositions~\ref{prp:operator_inequality_non_IID_almost_power} and~\ref{prp:operator_inequality_almost_power_IID}, we obtain a chain of operator inequalities
\begin{align}
    \ket{\rho}\bra{\rho}^{\otimes N-M-R}
    &\stackrel{\text{Proposition~\ref{prp:operator_inequality_almost_power_IID}}}{\leq}  2^{N h\qty(\frac{R}{N-M})}N^2\Tr_{1,\ldots,R}\qty[\ket{\rho_{N,M,R}}\bra{\rho_{N,M,R}}+\frac{2\sqrt{2R}}{N}\Delta_{N,M,R}]\\
    &\stackrel{\text{Proposition~\ref{prp:operator_inequality_non_IID_almost_power}}}{\leq}  2^{N h\qty(\frac{R}{N-M})}N^2\Tr_{1,\ldots,R}\qty[\Tr_{1,\ldots,M}\qty[\frac{\ket{\rho_{N}}\bra{\rho_{N}}}{\mu_N^2}]+\qty(\frac{2\sqrt{2}}{\mu_N\mathrm{e}^{\frac{MR}{2N}}}\Delta_{N,M}+\frac{2\sqrt{2R}}{N}\Delta_{N,M,R})]\\
    &= \frac{2^{N h\qty(\frac{R}{N-M})}N^2}{\mu_N^2}\Tr_{1,\ldots,R}\qty[\Tr_{1,\ldots,M}\qty[\ket{\rho_{N}}\bra{\rho_{N}}]+\mu_N^2\qty(\frac{2\sqrt{2}}{\mu_N\mathrm{e}^{\frac{MR}{2N}}}\Delta_{N,M}+\frac{2\sqrt{2R}}{N}\Delta_{N,M,R})],
\end{align}
where $\Delta_{N,M}$ and $\Delta_{N,M,R}$ are states shown in Propositions~\ref{prp:operator_inequality_non_IID_almost_power} and~\ref{prp:operator_inequality_almost_power_IID}, respectively.
Thus, due to Lemma~\ref{lem:operator_inequality_partial_trace},
we obtain
\begin{align}
    \rho^{\otimes N-M-R}&\leq\frac{2^{N h\qty(\frac{R}{N-M})}N^2}{\mu_N^2}\qty[\Tr_{1,\ldots,R}\circ\Tr_{1,\ldots,M}\qty[\rho_{N}]+\mu_N^2\qty(\frac{2\sqrt{2}}{\mu_N}\mathrm{e}^{-\frac{MR}{2N}}\tilde{\Delta}_{N,M}+\frac{2\sqrt{2R}}{N}\tilde{\Delta}_{N,M,R})],
\end{align}
where we define
\begin{align}
\tilde{\Delta}_{N,M,R}&\coloneqq\Tr_E\circ\Tr_{1,\ldots,R}\qty[\Delta_{N,M,R}],\\
\tilde{\Delta}_{N,M}&\coloneqq\Tr_E\circ\Tr_{1,\ldots,R}\qty[\Delta_{N,M}],
\end{align}
and $\Tr_E$ is the partial trace over the auxiliary system used for the purification $\ket{\rho}^{\otimes N-M-R}$ of $\rho^{\otimes N-M-R}$.
This operator inequality can be simplified into
\begin{align}
    \label{eq:6}
    \rho^{\otimes N-M-R}&\leq\frac{2^{N h\qty(\frac{R}{N-M})}N^2}{\mu_N^2}\qty[\Tr_{1,\ldots,R}\circ\Tr_{1,\ldots,M}\qty[\rho_{N}]+c_N\tilde{\Delta}_{N-M-R}],\\
\end{align}
where we use~\eqref{eq:tilde_Delta} and~\eqref{eq:epsilon_N} and write
\begin{equation}
    c_N\coloneqq\frac{\epsilon_N 2^{yN}}{2\mu_N}=O\qty(\epsilon_N 2^{yN}).
\end{equation}
The operator inequality~\eqref{eq:6} serves as the operator inequality relating $\rho^{\otimes N}$ and $\rho_N$. 

Then, we further proceed to obtain an operator inequality relating $\rho^{\otimes N}$ and $\sigma_N$. 
Using Proposition~\ref{prp:first_step},
we obtain from~\eqref{eq:6}
\begin{align}
    \rho^{\otimes N-M-R} &\leq \frac{2^{N h\qty(\frac{R}{N-M})}N^2}{\mu_N^2}\qty[\Tr_{1,\ldots,R}\circ\Tr_{1,\ldots,M}\qty[\frac{2^{yN}}{\mu_N}\sigma_N]+c_N\tilde{\Delta}_{N-M-R}]\\
    &=\frac{2^{N \qty(y+h\qty(\frac{R}{N-M}))}N^2}{\mu_N^3}\qty[\Tr_{1,\ldots,R}\circ\Tr_{1,\ldots,M}\qty[\sigma_N]+\frac{\epsilon_N}{2}\tilde{\Delta}_{N-M-R}]\\
    &=\frac{2^{N \qty(y+h\qty(\frac{R}{N-M}))}N^2}{\mu_N^3}\qty(1+\frac{\epsilon_N}{2})\tilde{\sigma}_{N-M-R},
\end{align}
where we use~\eqref{eq:tilde_sigma_N} in the last line.
\end{proof}

The goal of the second step is to convert this operator inequality to a bound of the quantum relative entropy using the operator monotonicity of $\log_2$ in Lemma~\ref{lem:operator_monotonicity_log}. 
However, an issue here is that if the supports of operators are different, it is not possible to take $\log_2$ of the operators directly at the same time.
To address this issue, we prove the following proposition.
Note that the original analysis of the generalized quantum Stein's lemma in Ref.~\cite{Brandao2010} does not take care of this issue explicitly, and critically to our proof of the general quantum Stein's lemma, this issue also appears in the proof of the strong converse in Ref.~\cite{Brandao2010} (i.e., the proof of Proposition~II.1 in Ref.~\cite{Brandao2010}, which is used in the proof of Corollary~III.3 in Ref.~\cite{Brandao2010});
however, our result here can be used in the analysis of Ref.~\cite{Brandao2010} as well to fix this issue.

\begin{proposition}[Formula for conversion from operator inequality to quantum relative entropy]
\label{prp:conversion}
For any states $\rho,\sigma\in\mathcal{D}\qty(\mathcal{H})$ of any $d$-dimensional system $\mathcal{H}$ and any $\alpha>0$, if it holds that
\begin{equation}
\label{eq:rho_alpha_sigma}
    \rho\leq\alpha\sigma,
\end{equation}
then we have
\begin{equation}
    D\left(\rho\middle|\middle|\sigma\right) \leq \log_2\qty(\alpha).
\end{equation}
\end{proposition}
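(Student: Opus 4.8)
The plan is to reduce to the strictly positive case and then invoke the operator monotonicity of $\log_2$ (Lemma~\ref{lem:operator_monotonicity_log}). The subtlety flagged just before the statement is that $\rho$ and $\sigma$ may be rank-deficient, so $\log_2\rho$ and $\log_2(\alpha\sigma)$ cannot be compared directly through Lemma~\ref{lem:operator_monotonicity_log}, which requires $P,Q>0$. Disposing of this support mismatch is exactly the main obstacle.

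First I would record two consequences of the hypothesis $\rho\leq\alpha\sigma$. For any $\ket{v}$ with $\sigma\ket{v}=0$ we get $0\leq\bra{v}\rho\ket{v}\leq\alpha\bra{v}\sigma\ket{v}=0$, hence $\rho\ket{v}=0$; thus $\mathrm{supp}(\rho)\subseteq\mathrm{supp}(\sigma)$, so $D(\rho\|\sigma)$ in~\eqref{eq:relative_entropy} is finite. (Taking traces also forces $\alpha\geq 1$, but this is not needed below.) Since both the term $\Tr[\rho\log_2\sigma]$ and the inequality $\rho\leq\alpha\sigma$ only involve the restrictions of the operators to $\mathcal{K}\coloneqq\mathrm{supp}(\sigma)$, I may restrict the whole problem to $\mathcal{K}$ and thereby assume without loss of generality that $\sigma>0$ on $\mathcal{K}$, with $\lambda_{\min}(\sigma)>0$ and $k\coloneqq\dim\mathcal{K}$.

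The remaining obstacle is that $\rho$ need not be full rank, which I remove by a perturbation preserving an operator inequality of the required shape. For $\delta\in(0,1)$ set $\rho_\delta\coloneqq(1-\delta)\rho+\delta\tfrac{\mathds{1}}{k}$, a state on $\mathcal{K}$ that is strictly positive. Because $\tfrac{\mathds{1}}{k}\leq\tfrac{1}{k\lambda_{\min}(\sigma)}\sigma$, the hypothesis yields $\rho_\delta\leq\alpha_\delta\sigma$ with $\alpha_\delta\coloneqq(1-\delta)\alpha+\tfrac{\delta}{k\lambda_{\min}(\sigma)}$, and $\alpha_\delta\to\alpha$ as $\delta\to0$. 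Now $\rho_\delta>0$ and $\alpha_\delta\sigma>0$, so Lemma~\ref{lem:operator_monotonicity_log} applies and gives $\log_2\rho_\delta\leq\log_2(\alpha_\delta\sigma)=\log_2(\alpha_\delta)\mathds{1}+\log_2\sigma$. Taking the expectation value in the state $\rho_\delta$ (using $\rho_\delta\geq 0$ and $\Tr[\rho_\delta]=1$) gives $\Tr[\rho_\delta\log_2\rho_\delta]-\Tr[\rho_\delta\log_2\sigma]\leq\log_2(\alpha_\delta)$, that is, $D(\rho_\delta\|\sigma)\leq\log_2(\alpha_\delta)$.

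Finally I would let $\delta\to0$. Keeping $\sigma$ fixed and full rank is what makes this limit clean rather than merely lower semicontinuous: the term $\Tr[\rho_\delta\log_2\sigma]$ is linear in $\rho_\delta$ with the fixed bounded operator $\log_2\sigma$, so it converges to $\Tr[\rho\log_2\sigma]$, while $\Tr[\rho_\delta\log_2\rho_\delta]=-H(\rho_\delta)\to-H(\rho)$ by the continuity of the quantum entropy in Lemma~\ref{lem:asymptotic_continuity_quantum_entorpy} applied to $\|\rho_\delta-\rho\|_1\to0$. Hence $D(\rho\|\sigma)=\lim_{\delta\to0}D(\rho_\delta\|\sigma)\leq\lim_{\delta\to0}\log_2(\alpha_\delta)=\log_2(\alpha)$, which is the claim. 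In summary, restricting to $\mathrm{supp}(\sigma)$ removes the rank deficiency of $\sigma$, and perturbing \emph{only} $\rho$ both restores the strict positivity needed for Lemma~\ref{lem:operator_monotonicity_log} and leaves a genuinely continuous limit to evaluate.
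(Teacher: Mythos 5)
Your proof is correct and follows essentially the same route as the paper's: regularize to restore full rank so that the operator monotonicity of $\log_2$ (Lemma~\ref{lem:operator_monotonicity_log}) applies, then pass to the limit $\delta\to 0$ using the continuity of the quantum entropy (Lemma~\ref{lem:asymptotic_continuity_quantum_entorpy}) and the linearity of $\Tr[\,\cdot\,\log_2\sigma]$. The only cosmetic difference is your choice of perturbation ($\rho_\delta=(1-\delta)\rho+\delta\mathds{1}/k$ on $\mathrm{supp}(\sigma)$, versus the paper's $(\rho+\delta\sigma)/(1+\delta)$), which changes nothing essential.
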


\begin{proof}
The issue here is that if the supports of $\rho$ and $\sigma$ are different, it is not possible to take $\log_2$ of the operators $\rho$ and $\sigma$ directly at the same time.
To address this issue, we fix an arbitrarily small parameter $\delta$ satisfying $0 < \delta\leq \frac{1}{4}$ and, in place of $\rho$, consider an operator
\begin{equation}
    \frac{\rho+\delta\sigma}{1+\delta},
\end{equation}
which has the same support as $\sigma$ since the support of $\sigma$ satisfying~\eqref{eq:rho_alpha_sigma} always includes the support of $\rho$.

From~\eqref{eq:rho_alpha_sigma}, it follows that
\begin{equation}
    \frac{\rho+\delta\sigma}{1+\delta}\leq
    \frac{\alpha+\delta}{1+\delta}\sigma.
\end{equation}
Then, due to Lemma~\ref{lem:operator_monotonicity_log},
within the support of $\sigma$, it holds that
\begin{equation}
\label{eq:operator2}
    \log_2\qty(\frac{\rho+\delta\sigma}{1+\delta})\leq
    \log_2\qty(\frac{\alpha+\delta}{1+\delta}\sigma).
\end{equation}
Thus, we have
\begin{align}
   \label{eq:limit_delta}
    &\Tr\qty[\frac{\rho+\delta\sigma}{1+\delta}\log_2\qty(\frac{\rho+\delta\sigma}{1+\delta})]
    \leq\Tr\qty[\frac{\rho+\delta\sigma}{1+\delta}
    \log_2\qty(\frac{\alpha+\delta}{1+\delta}\sigma)].
\end{align}
We will take the limit of $\delta\to 0$ to obtain the desired bound of the quantum relative entropy.
On the one hand, due to
\begin{equation}
   \label{eq:limit_delta_1}
   \left\|\frac{\rho+\delta\sigma}{1+\delta}-\rho\right\|_1 \leq\left\|\frac{\delta}{1+\delta}\rho\right\|_1 + \left\|\frac{\delta}{1+\delta}\sigma\right\|_1 \leq 2\delta, 
\end{equation}
using Lemma~\ref{lem:asymptotic_continuity_quantum_entorpy},
we can evaluate the left-hand side of~\eqref{eq:limit_delta} by
\begin{align}
\left|\Tr\qty[\frac{\rho+\delta\sigma}{1+\delta}\log_2\qty(\frac{\rho+\delta\sigma}{1+\delta})]-\Tr\qty[\rho\log_2\qty(\rho)]\right|\leq4\delta\log_2(d)+h(4\delta)
\to 0\quad\text{as $\delta\to 0$},
\end{align}
where $h$ is the binary entropy function in~\eqref{eq:binary_entropy}. 
On the other hand,
we can evaluate the right-hand side of~\eqref{eq:limit_delta} by
\begin{align}
    &\Tr\qty[\frac{\rho + \delta\sigma}{1+\delta}\log_2\qty(\frac{\alpha+\delta}{1+\delta}\sigma)] \nonumber \\
    &=\frac{\Tr\qty[\rho\log_2\qty(\sigma)]+\log_2\qty(\alpha+\delta)-\log_2\qty(1+\delta)}{1+\delta}+\frac{\delta\qty(\Tr\qty[\sigma\log_2\qty(\sigma)]+\log_2\qty(\alpha+\delta)-\log_2\qty(1+\delta))}{1+\delta}\\
    \label{eq:limit_delta_2}
    &\to\Tr\qty[\rho\log_2\qty(\sigma)]+\log_2\qty(\alpha)\quad\text{as $\delta\to 0$}.
\end{align}
As a whole, since the choice of $\delta>0$ can be arbitrarily small, we obtain from~\eqref{eq:limit_delta},~\eqref{eq:limit_delta_1} and~\eqref{eq:limit_delta_2}
\begin{align}
    &\Tr\qty[\rho\log_2\qty(\rho)]\leq \Tr\qty[\rho\log_2\qty(\sigma)]+\log_2\qty(\alpha).
\end{align}
Thus, by definition of the quantum relative entropy in~\eqref{eq:relative_entropy}, we have
\begin{align}
    &D\left(\rho\middle|\middle|\sigma\right)\leq\log_2\qty(\alpha).
\end{align}
    
\end{proof}

Finally, using Proposition~\ref{prp:conversion}, we obtain the following bound of the quantum relative entropy from the operator inequality in Proposition~\ref{prp:second_step_operator_inequality}.

\begin{proposition}[Bound of quantum relative entropy summarizing first and second steps]
\label{prp:second_step_relative_entropy}
For any family $\{\mathcal{F}\qty(\mathcal{H}^{\otimes N})\}_{N\in\mathbb{N}}$ of sets of free states satisfying Properties~\ref{p1}-\ref{p5}, any state $\rho\in\mathcal{D}\qty(\mathcal{H})$, any $y>0$, any $N\in\mathbb{N}$, and any non-negative integers $M,R$ satisfying $N-M \geq 2R$,
if we have, for a constant $\mu \in (0,1)$,
\begin{equation}
\label{eq:minimization_second_step_relative_entropy}
\liminf_{N\to\infty}\min_{\sigma\in\mathcal{F}\qty(\mathcal{H}^{\otimes N})}\qty{\Tr\qty[\qty(\rho^{\otimes N}-2^{yN}\sigma)_+]}=1-\mu \in (0,1),
\end{equation}
then there exist a sequence $\{\mu_N>0\}_{N\in\mathbb{N}}$ converging to
\begin{equation}
\label{eq:mu_N}
    \lim_{N\to\infty}\mu_N=\mu
\end{equation}
and a state $\tilde{\sigma}_{N-M-R}\in\mathcal{D}\qty(\mathcal{H}^{\otimes N-M-R})$ such that
\begin{equation}
    \label{eq:operator_ineq_relative_entropy_exact}
    D\left(\rho^{\otimes N-M-R}\middle|\middle|\tilde{\sigma}_{N-M-R}\right) \leq N\qty(y+h\qty(\frac{R}{N-M}))+\log_2\qty(\frac{N^2}{\mu_N^3})+\log_2\qty(1+\frac{\epsilon_N}{2}),
\end{equation}
where $D$ is the quantum relative entropy in~\eqref{eq:relative_entropy}, $h$ is the binary entropy function in~\eqref{eq:binary_entropy}, and $\epsilon_N$ for each $N$ is given by~\eqref{eq:epsilon_N}.
\end{proposition}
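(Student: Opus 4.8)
The plan is to derive the claimed relative-entropy bound directly from the operator inequality established in Proposition~\ref{prp:second_step_operator_inequality} by feeding it into the conversion formula of Proposition~\ref{prp:conversion}. First I would apply Proposition~\ref{prp:second_step_operator_inequality} under the identical hypotheses, which supplies the sequence $\{\mu_N>0\}_{N\in\mathbb{N}}$ with $\lim_{N\to\infty}\mu_N=\mu$, the permutation-invariant free states $\sigma_N$, and the state $\tilde{\sigma}_{N-M-R}$ defined in~\eqref{eq:tilde_sigma_N}, together with the operator inequality~\eqref{eq:opeartor_inequality_first_second}.

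Then I would read off the scalar coefficient
\[
\alpha\coloneqq\frac{2^{N\qty(y+h\qty(\frac{R}{N-M}))}N^2}{\mu_N^3}\qty(1+\frac{\epsilon_N}{2}),
\]
which is strictly positive since $\mu_N>0$, $N\geq 1$, and $\epsilon_N>0$ by~\eqref{eq:epsilon_N}. With this abbreviation, the inequality~\eqref{eq:opeartor_inequality_first_second} is exactly $\rho^{\otimes N-M-R}\leq\alpha\,\tilde{\sigma}_{N-M-R}$, i.e.\ the hypothesis of Proposition~\ref{prp:conversion} applied to the state $\rho^{\otimes N-M-R}\in\mathcal{D}\qty(\mathcal{H}^{\otimes N-M-R})$, the state $\tilde{\sigma}_{N-M-R}$, and this $\alpha$. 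Invoking Proposition~\ref{prp:conversion} yields $D\left(\rho^{\otimes N-M-R}\middle|\middle|\tilde{\sigma}_{N-M-R}\right)\leq\log_2\qty(\alpha)$, and splitting the logarithm of the product gives
\[
\log_2\qty(\alpha)=N\qty(y+h\qty(\frac{R}{N-M}))+\log_2\qty(\frac{N^2}{\mu_N^3})+\log_2\qty(1+\frac{\epsilon_N}{2}),
\]
which is precisely the bound~\eqref{eq:operator_ineq_relative_entropy_exact}.

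I do not expect a genuine obstacle here, since the proposition merely chains two results already in hand; the conceptual heavy lifting---recovering an approximate IID structure from the non-IID permutation-invariant state via almost power states---lives in Propositions~\ref{prp:operator_inequality_non_IID_almost_power}--\ref{prp:second_step_operator_inequality}. The only points deserving a line of care are verifying that $\tilde{\sigma}_{N-M-R}$ in~\eqref{eq:tilde_sigma_N} is a bona fide normalized state (it is a convex combination, with weights $\tfrac{1}{1+\epsilon_N/2}$ and $\tfrac{\epsilon_N/2}{1+\epsilon_N/2}$, of the partial trace of the free state $\sigma_N$ and the state $\tilde{\Delta}_{N-M-R}$, hence positive semidefinite with unit trace) and confirming $\alpha>0$ so that Proposition~\ref{prp:conversion} is applicable. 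Notably, the support mismatch between $\rho^{\otimes N-M-R}$ and $\tilde{\sigma}_{N-M-R}$ that would otherwise obstruct taking $\log_2$ of both operators at once is exactly the subtlety that Proposition~\ref{prp:conversion} was built to absorb, so no further regularization is needed at this step.
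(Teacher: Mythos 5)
Your proposal matches the paper's own proof essentially verbatim: both invoke Proposition~\ref{prp:second_step_operator_inequality} to obtain the operator inequality $\rho^{\otimes N-M-R}\leq\alpha\,\tilde{\sigma}_{N-M-R}$ and then feed it into the conversion formula of Proposition~\ref{prp:conversion} before expanding the logarithm. The extra checks you note (normalization of $\tilde{\sigma}_{N-M-R}$ and positivity of $\alpha$) are correct and harmless, so the argument is complete.
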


\begin{proof}
Proposition~\ref{prp:second_step_operator_inequality} provides the sequence $\{\mu_N>0\}_{N\in\mathbb{N}}$ and the state $\tilde{\sigma}_{N-M-R}\in\mathcal{D}\qty(\mathcal{H}^{\otimes N-M-R})$.
We will show~\eqref{eq:operator_ineq_relative_entropy_exact} for this choice.

Proposition~\ref{prp:second_step_operator_inequality} yields
\begin{equation}
\label{eq:operators}
    \rho^{\otimes N-M-R}\leq
    \frac{2^{N \qty(y+h\qty(\frac{R}{N-M}))}N^2}{\mu_N^3}\qty(1+\frac{\epsilon_N}{2})\tilde{\sigma}_{N-M-R}. 
\end{equation}
Using Proposition~\ref{prp:conversion}, we obtain from~\eqref{eq:operators}
\begin{equation}
    D\left(\rho^{\otimes N-M-R}\middle|\middle|\tilde{\sigma}_{N-M-R}\right) \leq \log_2\qty(\frac{2^{N \qty(y+h\qty(\frac{R}{N-M}))}N^2}{\mu_N^3}\qty(1+\frac{\epsilon_N}{2})). 
\end{equation}
Therefore, due to
\begin{equation}
    \log_2\qty(\frac{2^{N \qty(y+h\qty(\frac{R}{N-M}))}N^2}{\mu_N^3}\qty(1+\frac{\epsilon_N}{2}))=N\qty(y+h\qty(\frac{R}{N-M}))+\log_2\qty(\frac{N^2}{\mu_N^3})+\log_2\qty(1+\frac{\epsilon_N}{2}),
\end{equation}
we obtain
\begin{equation}
    D\left(\rho^{\otimes N-M-R}\middle|\middle|\tilde{\sigma}_{N-M-R}\right) \leq N\qty(y+h\qty(\frac{R}{N-M}))+\log_2\qty(\frac{N^2}{\mu_N^3})+\log_2\qty(1+\frac{\epsilon_N}{2}). 
\end{equation}
\end{proof}

\subsection{Third step}
\label{sec:third}
In the third step, we develop a technique for properly dealing with the difference between the quantum relative entropies with respect to the optimal free state and the state $\tilde{\sigma}_{N-M-R}$ in Proposition~\ref{prp:second_step_relative_entropy} obtained from our approximation, which does not appear in the analysis of Ref.~\cite{Brandao2010}.
For this purpose, we introduce a concept of asymptotically free states (Definition~\ref{def:asym_free_state}), which are states close to some free state up to an asymptotically vanishing error in terms of trace distance.
We then regard $\tilde{\sigma}_{N-M-R}$ as an asymptotically free state with respect to an appropriately chosen error parameter (Proposition~\ref{prp:asymptotically_free_state}).
Finally, we analyze the regularization of the quantum relative entropy in Proposition~\ref{prp:second_step_relative_entropy} and represent it in terms of the set of asymptotically free states shown in Proposition~\ref{prp:asymptotically_free_state} (Proposition~\ref{prp:third_step}), which is the goal of the third step.

We define asymptotically free states as follows.

\begin{definition}[Asymptotically free states]
\label{def:asym_free_state}
For any family $\{\mathcal{F}\qty(\mathcal{H}^{\otimes N})\}_{N\in\mathbb{N}}$ of sets of free states and any sequence $\{\epsilon_N>0\}_{N\in\mathbb{N}}$ of asymptotically vanishing error parameters $\lim_{N\to\infty}\epsilon_N=0$, the family $\{\mathcal{F}^{(\epsilon_N)}\qty(\mathcal{H}^{\otimes N})\}_{N\in\mathbb{N}}$ of sets of asymptotically free states with respect to $\{\epsilon_N>0\}_{N\in\mathbb{N}}$ is defined as
\begin{align}
\label{eq:asym_free_state_supple}
\mathcal{F}^{(\epsilon_N)}\qty(\mathcal{H}^{\otimes N})\coloneqq\qty{\tilde{\sigma}\in\mathcal{D}\qty(\mathcal{H}^{\otimes N}):\exists\sigma\in\mathcal{F}\qty(\mathcal{H}^{\otimes N}),\,\|\tilde{\sigma}-\sigma\|_1\leq\epsilon_N}.
\end{align}
A state $\tilde{\sigma}\in\mathcal{F}^{(\epsilon_N)}\qty(\mathcal{H}^{\otimes N})$ in a set of such a family is called an asymptotically free state.
\end{definition}

With Definition~\ref{def:asym_free_state}, we identify the set of asymptotically free states including the state $\tilde{\sigma}_{N-M-R}$ in~\eqref{eq:tilde_sigma_N} and also appearing in Proposition~\ref{prp:second_step_relative_entropy} as follows.

\begin{proposition}[State in the set of asymptotically free states]
\label{prp:asymptotically_free_state}
For any family $\{\mathcal{F}\qty(\mathcal{H}^{\otimes N})\}_{N\in\mathbb{N}}$ of sets of free states satisfying Properties~\ref{p1}-\ref{p5}, any $N\in\mathbb{N}$, any non-negative integers $M,R$ satisfying $N-M-R\geq 0$,
any free state $\sigma_N\in\mathcal{F}\qty(\mathcal{H}^{\otimes N})$, any state $\tilde{\Delta}_{N-M-R}\in\mathcal{D}\qty(\mathcal{H}^{\otimes N-M-R})$, and any $\epsilon_N>0$,
let $\tilde{\sigma}_{N-M-R}$ be a state
\begin{equation}
\tilde{\sigma}_{N-M-R}\coloneqq\frac{\Tr_{1,\ldots,R}\circ\Tr_{1,\ldots,M}\qty[\sigma_N]+\frac{\epsilon_N}{2}\tilde{\Delta}_{N-M-R}}{1+\frac{\epsilon_N}{2}}.
\end{equation}
Then, $\tilde{\sigma}_{N-M-R}$ is in the set $\mathcal{F}^{(\epsilon_N)}\qty(\mathcal{H}^{\otimes N-M-R})$ of asymptotically free states, i.e.,
\begin{equation}
\tilde{\sigma}_{N-M-R}\in\mathcal{F}^{(\epsilon_N)}\qty(\mathcal{H}^{\otimes N-M-R}).
\end{equation}
\end{proposition}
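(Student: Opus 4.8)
The plan is to exhibit an explicit free state lying within trace distance $\epsilon_N$ of $\tilde{\sigma}_{N-M-R}$, so that the membership $\tilde{\sigma}_{N-M-R}\in\mathcal{F}^{(\epsilon_N)}\qty(\mathcal{H}^{\otimes N-M-R})$ will follow immediately from Definition~\ref{def:asym_free_state}. The natural candidate is the reduced state $\sigma\coloneqq\Tr_{1,\ldots,R}\circ\Tr_{1,\ldots,M}\qty[\sigma_N]$ obtained by tracing out $M+R$ of the $N$ subsystems of $\sigma_N$. First I would argue that this $\sigma$ is genuinely free: since $\sigma_N\in\mathcal{F}\qty(\mathcal{H}^{\otimes N})$, Property~\ref{p3} guarantees that tracing out any single subsystem of a free state again yields a free state, and iterating this $M+R$ times shows $\sigma\in\mathcal{F}\qty(\mathcal{H}^{\otimes N-M-R})$.

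Next I would compute the trace distance $\|\tilde{\sigma}_{N-M-R}-\sigma\|_1$ directly. Writing $\tilde{\sigma}_{N-M-R}=\qty(\sigma+\tfrac{\epsilon_N}{2}\tilde{\Delta}_{N-M-R})/\qty(1+\tfrac{\epsilon_N}{2})$ and subtracting $\sigma$, the numerator collapses to $\tfrac{\epsilon_N}{2}\qty(\tilde{\Delta}_{N-M-R}-\sigma)$, giving
\begin{equation}
\|\tilde{\sigma}_{N-M-R}-\sigma\|_1=\frac{\epsilon_N/2}{1+\epsilon_N/2}\,\|\tilde{\Delta}_{N-M-R}-\sigma\|_1.
\end{equation}

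Finally, because $\tilde{\Delta}_{N-M-R}$ and $\sigma$ are both density operators, the triangle inequality gives $\|\tilde{\Delta}_{N-M-R}-\sigma\|_1\leq 2$, whence $\|\tilde{\sigma}_{N-M-R}-\sigma\|_1\leq \tfrac{\epsilon_N}{1+\epsilon_N/2}\leq\epsilon_N$, using $1+\tfrac{\epsilon_N}{2}\geq 1$. This is exactly the condition required by Definition~\ref{def:asym_free_state}, completing the proof. The argument is essentially a one-line computation, and I do not anticipate a genuine obstacle; the only point needing care is the very first step, namely that the iterated partial trace preserves freeness, which is precisely what repeated application of Property~\ref{p3} supplies. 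The role of this proposition is purely to feed the explicit error parameter $\epsilon_N$ of~\eqref{eq:epsilon_N} into the asymptotically-free-state framework used in the third step.
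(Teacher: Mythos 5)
Your proposal is correct and follows essentially the same route as the paper: identify $\Tr_{1,\ldots,R}\circ\Tr_{1,\ldots,M}[\sigma_N]$ as the nearby free state via Property~\ref{p3}, then bound the trace distance by $\epsilon_N$ using the triangle inequality on the defining convex combination. The only cosmetic difference is that you first collapse the numerator to $\tfrac{\epsilon_N}{2}(\tilde{\Delta}_{N-M-R}-\sigma)$ before invoking $\|\tilde{\Delta}_{N-M-R}-\sigma\|_1\leq 2$, whereas the paper splits the difference into two terms directly; the resulting bound is identical.
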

\begin{proof}
Given any free state $\sigma_N\in\mathcal{F}\qty(\mathcal{H}^{\otimes N})$, after tracing out $M+R$ subsystems from $\sigma_N$, the resulting state is also free  due to Property~\ref{p3}; i.e., we have
\begin{equation}
    \Tr_{1,\ldots,R}\circ\Tr_{1,\ldots,M}\qty[\sigma_N]\in\mathcal{F}\qty(\mathcal{H}^{\otimes N-M-R}). 
\end{equation}
By definition of $\tilde{\sigma}_{N-M-R}$, we have
\begin{equation}
    \left\|\tilde{\sigma}_{N-M-R}-\Tr_{1,\ldots,R}\circ\Tr_{1,\ldots,M}\qty[\sigma_N]\right\|_1\leq
    \left\|\frac{\frac{\epsilon_N}{2}}{1 + \frac{\epsilon_N}{2}}\Tr_{1,\ldots,R}\circ\Tr_{1,\ldots,M}\qty[\sigma_N]\right\|_1+\left\|\frac{\frac{\epsilon_N}{2}}{1 + \frac{\epsilon_N}{2}}\tilde{\Delta}_{N-M-R}\right\|_1\leq\epsilon_N, 
\end{equation}
and thus, it holds that
\begin{equation}
    \tilde{\sigma}_{N-M-R}\in\mathcal{F}^{(\epsilon_N)}\qty(\mathcal{H}^{\otimes N-M-R}). 
\end{equation}
    
\end{proof}

Using the set of asymptotically free states in Proposition~\ref{prp:asymptotically_free_state},
we characterize the regularization of the bound of quantum relative entropy obtained in Proposition~\ref{prp:second_step_relative_entropy} as follows. 

\begin{proposition}[Characterization of regularization of quantum relative entropy in terms of asymptotically free states summarizing first, second, and third steps]
\label{prp:third_step}
For any family $\{\mathcal{F}\qty(\mathcal{H}^{\otimes N})\}_{N\in\mathbb{N}}$ of sets of free states satisfying Properties~\ref{p1}-\ref{p5}, any state $\rho\in\mathcal{D}\qty(\mathcal{H})$, any $y>0$, any $N\in\mathbb{N}$, and any non-negative integers $M(N),R(N)$ satisfying
\begin{align}
    &N-M(N) \geq 2R(N),\\
    \label{eq:N_M_R1_prp}
    &N-M(N)-R(N)\to\infty,\\
    \label{eq:N_M_R2_prp}
    &\frac{M(N)R(N)}{N}\to\infty,\\
    \label{eq:N_M_R3_prp}
    &\frac{M(N)}{N}\to0,\\
    \label{eq:N_M_R4_prp}
    &\frac{R(N)}{N}\to0,
\end{align}
if we have, for a constant $\mu \in (0,1)$,
\begin{equation}
\label{eq:minimization_third_step}
\liminf_{N\to\infty}\min_{\sigma\in\mathcal{F}\qty(\mathcal{H}^{\otimes N})}\qty{\Tr\qty[\qty(\rho^{\otimes N}-2^{yN}\sigma)_+]}=1-\mu \in (0,1),
\end{equation}
then there exists a sequence $\{\mu_N>0\}_{N\in\mathbb{N}}$ converging to
\begin{equation}
\label{eq:mu_N_prp}
    \lim_{N\to\infty}\mu_N=\mu
\end{equation}
such that
\begin{equation}
\label{eq:y_geq_variant}
    y\geq \limsup_{N\to\infty}\min_{\tilde{\sigma}\in\mathcal{F}^{(\epsilon_N)}\qty(\mathcal{H}^{\otimes N})}\qty{\frac{D\left(\rho^{\otimes N-M-R}\middle|\middle|\sigma\right)}{N-M-R}}, 
\end{equation}
where $\mathcal{F}^{(\epsilon_N)}\qty(\mathcal{H}^{\otimes N})$ is the set of asymptotically free states given by~\eqref{eq:asym_free_state_supple} with respect to $\{\epsilon_N > 0\}_{N\in\mathbb{N}}$ given by~\eqref{eq:epsilon_N} depending on $N$, $M$, $R$, $y$, and $\mu_N$. 
\end{proposition}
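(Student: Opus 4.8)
The plan is to stitch together the relative-entropy bound of Proposition~\ref{prp:second_step_relative_entropy} with the membership statement of Proposition~\ref{prp:asymptotically_free_state} and then pass to the limit; the substance of the argument lies entirely in the asymptotic bookkeeping, which is governed by the growth conditions~\eqref{eq:N_M_R1_prp}--\eqref{eq:N_M_R4_prp} on $M(N)$ and $R(N)$.

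Concretely, I would first apply Proposition~\ref{prp:second_step_relative_entropy} under the hypothesis~\eqref{eq:minimization_third_step}: it yields a sequence $\{\mu_N>0\}$ with $\lim_{N\to\infty}\mu_N=\mu$ together with a state $\tilde\sigma_{N-M-R}$ of~\eqref{eq:tilde_sigma_N} satisfying the bound~\eqref{eq:operator_ineq_relative_entropy_exact}. Proposition~\ref{prp:asymptotically_free_state} then certifies that this same $\tilde\sigma_{N-M-R}$ belongs to $\mathcal{F}^{(\epsilon_N)}(\mathcal{H}^{\otimes N-M-R})$ with $\epsilon_N$ as in~\eqref{eq:epsilon_N}, so it is a feasible point of the minimization appearing in~\eqref{eq:y_geq_variant}. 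Hence, for every $N$, the minimum over $\tilde\sigma\in\mathcal{F}^{(\epsilon_N)}(\mathcal{H}^{\otimes N-M-R})$ of $D(\rho^{\otimes N-M-R}\,\|\,\tilde\sigma)/(N-M-R)$ is at most $\frac{1}{N-M-R}\left(N\qty(y+h\qty(\frac{R}{N-M}))+\log_2\qty(\frac{N^2}{\mu_N^3})+\log_2\qty(1+\frac{\epsilon_N}{2})\right)$.

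It then remains to show the $\limsup$ of this per-$N$ upper bound equals $y$, which is the heart of the argument. I would check it term by term using~\eqref{eq:N_M_R3_prp}--\eqref{eq:N_M_R4_prp}: since $M/N,R/N\to0$, the prefactor $N/(N-M-R)\to1$, so the leading contribution $Ny/(N-M-R)\to y$; the binary-entropy contribution vanishes because $R/(N-M)\to0$ forces $h(R/(N-M))\to0$ while its prefactor stays bounded; and the contribution $\log_2(N^2/\mu_N^3)/(N-M-R)$, of order $(\log_2 N)/N$, tends to $0$ since $\mu_N\to\mu>0$ keeps $\log_2\mu_N$ bounded. The point requiring genuine care is the last contribution, which needs $\epsilon_N\to0$ both for the set $\mathcal{F}^{(\epsilon_N)}$ in Definition~\ref{def:asym_free_state} to be well-defined and for $\log_2(1+\epsilon_N/2)/(N-M-R)\to0$: inspecting~\eqref{eq:epsilon_N}, the factor $2^{-yN}$ (which is precisely where the hypothesis $y>0$ enters) together with $\mathrm{e}^{-MR/(2N)}\to0$ from~\eqref{eq:N_M_R2_prp} and $\sqrt{R}/N\to0$ drives both summands of $\epsilon_N$ to zero. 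Collecting these estimates and passing to $\limsup_{N\to\infty}$ yields exactly the claimed inequality~\eqref{eq:y_geq_variant}, completing the third step.
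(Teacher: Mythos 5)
Your proposal is correct and follows essentially the same route as the paper's proof: invoke Proposition~\ref{prp:second_step_relative_entropy} for the sequence $\{\mu_N\}$ and the state $\tilde{\sigma}_{N-M-R}$, certify $\tilde{\sigma}_{N-M-R}\in\mathcal{F}^{(\epsilon_N)}(\mathcal{H}^{\otimes N-M-R})$ via Proposition~\ref{prp:asymptotically_free_state} after checking $\epsilon_N\to 0$, and then let the growth conditions on $M(N)$ and $R(N)$ drive the per-$N$ upper bound to $y$ in the limit. Your term-by-term accounting of the limit matches the paper's.
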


\begin{proof}
Proposition~\ref{prp:second_step_relative_entropy} provides the sequence $\{\mu_N>0\}_{N\in\mathbb{N}}$ and the state $\tilde{\sigma}_{N-M-R}\in\mathcal{D}\qty(\mathcal{H}^{\otimes N-M-R})$.
For $\{\epsilon_N > 0\}_{N\in\mathbb{N}}$ given by~\eqref{eq:epsilon_N}, i.e.,
\begin{equation}
\epsilon_N=\frac{2\mu_N^3}{2^{yN}}\qty(\frac{2\sqrt{2}}{\mu_N\mathrm{e}^{\frac{MR}{2N}}}+\frac{2\sqrt{2R}}{N}),
\end{equation}
under our assumption of
\begin{equation}
    y>0,
\end{equation}
and under~\eqref{eq:N_M_R2_prp},~\eqref{eq:N_M_R4_prp}, and~\eqref{eq:mu_N_prp},
it holds that
\begin{equation}
    \lim_{N\to\infty}\epsilon_N=0.
\end{equation}
Thus, due to Proposition~\ref{prp:asymptotically_free_state},  $\tilde{\sigma}_{N-M-R}\in\mathcal{D}\qty(\mathcal{H}^{\otimes N-M-R})$ given by~\eqref{eq:tilde_sigma_N} as shown in Proposition~\ref{prp:second_step_relative_entropy} is an asymptotically free state with respect to $\{\epsilon_N>0\}_{N\in\mathbb{N}}$, i.e.,
\begin{equation}
    \label{eq:tilde_sigma_asym_free}
    \tilde{\sigma}_{N-M-R}\in\mathcal{F}^{(\epsilon_N)}\qty(\mathcal{H}^{\otimes N}).
\end{equation}

Therefore, due to Proposition~\ref{prp:second_step_relative_entropy}, we have
\begin{align}
&\min_{\tilde{\sigma}\in\mathcal{F}^{(\epsilon_N)}(\mathcal{H}^{\otimes N-M-R})}\qty{\frac{D\left(\rho^{\otimes N-M-R}\middle|\middle|\tilde{\sigma}\right)}{N-M-R}}\nonumber\\
&\stackrel{\eqref{eq:tilde_sigma_asym_free}}{\leq}\frac{D\left(\rho^{\otimes N-M-R}\middle|\middle|\tilde{\sigma}_{N-M-R}\right)}{N-M-R}\\
&\stackrel{\text{Proposition~\ref{prp:second_step_relative_entropy}}}{\leq}\frac{1}{N-M-R}\qty(N\qty(y+h\qty(\frac{R}{N-M}))+\log_2\qty(\frac{N^2}{\mu_N^3})+\log_2\qty(1+\frac{\epsilon_N}{2}))\\
\label{eq:limit_y}
&\to y\quad\text{as $N\to\infty$},
\end{align}
where the last line follows from the choice of $M$ and $R$ specified in~\eqref{eq:N_M_R1_prp},~\eqref{eq:N_M_R2_prp},~\eqref{eq:N_M_R3_prp}, and~\eqref{eq:N_M_R4_prp}. 
Therefore, we obtain
\begin{equation}
    y\geq\limsup_{N\to\infty}\min_{\tilde{\sigma}\in\mathcal{F}^{(\epsilon_N)}(\mathcal{H}^{\otimes N-M-R})}\qty{\frac{D\left(\rho^{\otimes N-M-R}\middle|\middle|\tilde{\sigma}\right)}{N-M-R}}.  
\end{equation}
\end{proof}

\subsection{Final step}
\label{sec:final}
In the final step, we develop a technique to address the difference between the regularized relative entropy of resource appearing in Theorem~\ref{thm:direct_part} to be proven and the corresponding quantity appearing in Proposition~\ref{prp:third_step}.
Proposition~\ref{prp:third_step} almost shows the desired bound in Theorem~\ref{thm:direct_part}; however, the minimization on the right-hand side of~\eqref{eq:y_geq_variant} in Proposition~\ref{prp:third_step} is taken over the set $\mathcal{F}^{(\epsilon_N)}(\mathcal{H}^{\otimes N})$ of asymptotically free states while the minimization in the regularized relative entropy of resource in~\eqref{eq:y_geq} of Theorem~\ref{thm:direct_part} is taken over the set $\mathcal{F}\qty(\mathcal{H}^{\otimes N})$ of free states. 
We will study this difference using the continuity bound of the quantum relative entropy in  Corollary~\ref{cor:continuity_bound_relative_entropy_revised}, so as to show an equivalence relation indicating that the difference vanishes after the regularization (Proposition~\ref{lem:limit_regularization}).
Finally, applying Proposition~\ref{lem:limit_regularization} to Proposition~\ref{prp:third_step}, we prove Theorem~\ref{thm:direct_part}.

In particular, we prove the following equivalence relation.

\begin{proposition}[Equivalence relation between regularized relative entropies of resource with respect to the sets of free states and asymptotically free states]
\label{lem:limit_regularization}
For any state $\rho\in\mathcal{D}(\mathcal{H})$ and any sequence $\{\epsilon_N>0\}_{N\in\mathbb{N}}$ satisfying
\begin{equation}
  \epsilon_N = o\qty(1/N^2)\quad\text{as $N\to\infty$},  
\end{equation}
we have an equivalence relation
\begin{equation}
\label{eq:minimization_limit_regularization}
    \limsup_{N\to\infty}\min_{\tilde{\sigma}\in\mathcal{F}^{(\epsilon_N)}\qty(\mathcal{H}^{\otimes N})}\qty{\frac{D\left(\rho^{\otimes N}\middle|\middle|\tilde{\sigma}\right)}{N}} = \liminf_{N\to\infty}\min_{\tilde{\sigma}\in\mathcal{F}^{(\epsilon_N)}\qty(\mathcal{H}^{\otimes N})}\qty{\frac{D\left(\rho^{\otimes N}\middle|\middle|\tilde{\sigma}\right)}{N}}=\lim_{N\to\infty}\min_{\sigma\in\mathcal{F}(\mathcal{H}^{\otimes N})}\qty{\frac{D\left(\rho^{\otimes N}\middle|\middle|\sigma\right)}{N}},
\end{equation}
where $\mathcal{F}\qty(\mathcal{H}^{\otimes N})$ is the set of free states satisfying Properties~\ref{p1}-\ref{p5}, and $\mathcal{F}^{(\epsilon_N)}\qty(\mathcal{H}^{\otimes N})$ is the set of asymptotically free states in Definition~\ref{def:asym_free_state}.
\end{proposition}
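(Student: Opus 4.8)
The plan is to prove the asserted equalities by squeezing the asymptotically-free regularization between the free regularization and a quantity that differs from it by a vanishing correction. Write $A_N\coloneqq\min_{\sigma\in\mathcal{F}(\mathcal{H}^{\otimes N})}\frac{D(\rho^{\otimes N}\|\sigma)}{N}$ and $B_N\coloneqq\min_{\tilde\sigma\in\mathcal{F}^{(\epsilon_N)}(\mathcal{H}^{\otimes N})}\frac{D(\rho^{\otimes N}\|\tilde\sigma)}{N}$. First I would record the trivial inequality $B_N\leq A_N$: every free state $\sigma$ is asymptotically free (take the witness $\sigma$ itself in Definition~\ref{def:asym_free_state}, so $\|\sigma-\sigma\|_1=0\leq\epsilon_N$), hence $\mathcal{F}(\mathcal{H}^{\otimes N})\subseteq\mathcal{F}^{(\epsilon_N)}(\mathcal{H}^{\otimes N})$ and minimizing over the larger set cannot increase the value. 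Since $A_N\leq D(\rho\|\sigma_{\mathrm{full}})<\infty$ by Properties~\ref{p2} and~\ref{p4} with Lemma~\ref{lem:relative_entropy_upper_bound}, both quantities are finite and their minimizers attain finite relative entropy. The substantive task is then the reverse bound $A_N\leq B_N+\eta_N$ with $\eta_N\to0$.

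For the reverse bound, let $\tilde\sigma_N$ optimize $B_N$ and let $\sigma_N\in\mathcal{F}(\mathcal{H}^{\otimes N})$ satisfy $\|\tilde\sigma_N-\sigma_N\|_1\leq\epsilon_N$ as guaranteed by Definition~\ref{def:asym_free_state}. Neither state need be full rank, so they cannot be fed directly into the continuity bound of Corollary~\ref{cor:continuity_bound_relative_entropy_revised}; to remedy this I would smooth both toward the full-rank free state $\sigma_{\mathrm{full}}^{\otimes N}$ (free by Properties~\ref{p2} and~\ref{p4}), setting for a parameter $\delta_N\in(0,1)$
\begin{align}
\sigma_N^{\delta}&\coloneqq(1-\delta_N)\sigma_N+\delta_N\sigma_{\mathrm{full}}^{\otimes N},\\
\tilde\sigma_N^{\delta}&\coloneqq(1-\delta_N)\tilde\sigma_N+\delta_N\sigma_{\mathrm{full}}^{\otimes N}.
\end{align}
By convexity of $\mathcal{F}$ (Property~\ref{p1}), $\sigma_N^{\delta}$ stays free; both smoothed states are full rank with $\lambda_{\min}\geq\delta_N\,\lambda_{\min}(\sigma_{\mathrm{full}})^N\eqqcolon\tilde m_N$; and $\|\sigma_N^{\delta}-\tilde\sigma_N^{\delta}\|_1=(1-\delta_N)\|\sigma_N-\tilde\sigma_N\|_1\leq\epsilon_N$. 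Applying Corollary~\ref{cor:continuity_bound_relative_entropy_revised} with this $\tilde m_N$, then using that $\sigma_N^{\delta}$ is free (so $D(\rho^{\otimes N}\|\sigma_N^{\delta})\geq NA_N$), convexity of the relative entropy in its second argument (so $D(\rho^{\otimes N}\|\tilde\sigma_N^{\delta})\leq D(\rho^{\otimes N}\|\tilde\sigma_N)+\delta_N D(\rho^{\otimes N}\|\sigma_{\mathrm{full}}^{\otimes N})$), and additivity $D(\rho^{\otimes N}\|\sigma_{\mathrm{full}}^{\otimes N})=N\,D(\rho\|\sigma_{\mathrm{full}})$, I would arrive at
\begin{equation}
A_N\leq B_N+\delta_N\,D(\rho\|\sigma_{\mathrm{full}})+\frac{3\log_2^2(1/\tilde m_N)}{(1-\tilde m_N)\,N}\sqrt{\epsilon_N/2}.
\end{equation}

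The crux, and the main obstacle, is balancing the two error terms. The smoothing floor carries the tensor-power factor $\lambda_{\min}(\sigma_{\mathrm{full}})^N$, so $\log_2(1/\tilde m_N)=\log_2(1/\delta_N)+N\log_2(1/\lambda_{\min}(\sigma_{\mathrm{full}}))$ grows linearly in $N$, making the continuity contribution scale like $(\log_2(1/\delta_N)+N)^2\sqrt{\epsilon_N}/N$; this is exactly the $N^2$ that the hypothesis $\epsilon_N=o(1/N^2)$ is designed to defeat. Choosing $\delta_N=\epsilon_N$ sends the convexity term $\epsilon_N D(\rho\|\sigma_{\mathrm{full}})\to0$ and reduces the continuity term to $O\big((\log_2(1/\epsilon_N)+N)^2\sqrt{\epsilon_N}/N\big)$, reproducing the two-sided bound anticipated in~\eqref{eq:main_looser_bound_lower}. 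Expanding the square, the dominant piece is $N\sqrt{\epsilon_N}=\sqrt{N^2\epsilon_N}\to0$ by hypothesis, while the logarithmic and cross terms vanish because $\log_2(1/\epsilon_N)$ grows more slowly than any power of $1/\epsilon_N$; hence $\eta_N\to0$.

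Finally I would combine the two estimates into $A_N-\eta_N\leq B_N\leq A_N$. Since $\lim_{N\to\infty}A_N$ exists and equals $\lim_{N\to\infty}\min_{\sigma\in\mathcal{F}(\mathcal{H}^{\otimes N})}\frac{D(\rho^{\otimes N}\|\sigma)}{N}$ (by weak subadditivity, as noted in the preliminaries) and $\eta_N\to0$, the squeeze forces $\limsup_{N\to\infty}B_N=\liminf_{N\to\infty}B_N=\lim_{N\to\infty}A_N$, which is precisely the equivalence relation~\eqref{eq:minimization_limit_regularization}.
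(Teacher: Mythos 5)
Your proof is correct, and it reaches the conclusion by a genuinely different and somewhat leaner route than the paper. The paper does not smooth the two states symmetrically: it first proves that the optimal asymptotically free state $\tilde{\sigma}_N$ cannot itself be free (via a perturbation $\tilde{\sigma}=\bigl(\sigma_N+\tfrac{\epsilon_N}{2}\rho^{\otimes N}\bigr)/\bigl(1+\tfrac{\epsilon_N}{2}\bigr)$ that strictly lowers the relative entropy), precisely so that the positive-part density $\Delta=(\tilde{\sigma}_N-\sigma)_+/\Tr[(\tilde{\sigma}_N-\sigma)_+]$ from Lemma~\ref{lem:operator_inequality_distance} is well defined; it then builds Winter-style intermediate states $\omega^\prime$ and $\sigma^\prime$ (mixtures of $\sigma$, $\Delta$, and $\sigma_{\mathrm{full}}^{\otimes N}$ with weight $\epsilon_N$) that admit two decompositions, one exposing $\sigma$ and one exposing $\tilde{\sigma}_N$, and applies Corollary~\ref{cor:continuity_bound_relative_entropy_revised} to that pair together with Lemma~\ref{lem:relative_entropy_upper_bound} to control $D\left(\rho^{\otimes N}\middle|\middle|\tfrac12\Delta+\tfrac12\sigma_{\mathrm{full}}^{\otimes N}\right)$. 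Your symmetric smoothing $\sigma_N^{\delta},\tilde\sigma_N^{\delta}$ toward $\sigma_{\mathrm{full}}^{\otimes N}$ bypasses all of this: the trace distance only shrinks under the common mixing, both states become full rank with the same eigenvalue floor $\delta_N\lambda_{\min}(\sigma_{\mathrm{full}})^N$, the strict-inequality detour becomes unnecessary because Corollary~\ref{cor:continuity_bound_relative_entropy_revised} only needs $\|\sigma_1-\sigma_2\|_1\leq\epsilon$ rather than a strictly positive distance, and the residual errors ($\delta_N D\left(\rho\middle|\middle|\sigma_{\mathrm{full}}\right)$ from convexity plus the continuity term) land on exactly the same $O\bigl(\bigl(\log_2(1/\epsilon_N)+N\bigr)^2\sqrt{\epsilon_N}/N\bigr)$ scaling that the hypothesis $\epsilon_N=o(1/N^2)$ kills. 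The two arguments buy essentially the same quantitative bound; yours trades the paper's two-sided estimate on the un-smoothed optimizers (Eqs.~\eqref{eq:main_looser_bound_lower}--\eqref{eq:main_looser_bound_upper}) for a shorter squeeze directly between the two regularized minima, at the cost of introducing the auxiliary parameter $\delta_N$ whose choice $\delta_N=\epsilon_N$ you do justify. The only points to make explicit in a polished write-up are that $\delta_N\in(0,1)$ and $\tilde m_N\in(0,1)$ for $N$ large enough, and that the minima in both optimizations are attained (compactness of $\mathcal{F}$ and of $\mathcal{F}^{(\epsilon_N)}$), neither of which is a real obstacle.
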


\begin{proof}
    Let $\{\tilde{\sigma}_N\in\mathcal{F}^{(\epsilon_N)}\qty(\mathcal{H}^{\otimes N})\}_{N\in\mathbb{N}}$ and  $\{\sigma_N\in\mathcal{F}\qty(\mathcal{H}^{\otimes N})\}_{N\in\mathbb{N}}$ denote optimal sequences in the minimization of~\eqref{eq:minimization_limit_regularization}.
    Due to $\mathcal{F}^{(\epsilon_N)}\qty(\mathcal{H}^{\otimes N})\supseteq\mathcal{F}\qty(\mathcal{H}^{\otimes N})$, we have
    \begin{equation}
    \label{eq:upper_sigma_N}
        D\left(\rho^{\otimes N}\middle|\middle|\tilde{\sigma}_N\right)\leq D\left(\rho^{\otimes N}\middle|\middle|\sigma_N\right).
    \end{equation}
    In the case of $\rho\in\mathcal{F}\qty(\mathcal{H})$, it always holds that $D\left(\rho^{\otimes N}\middle|\middle|\tilde{\sigma}_N\right)=D\left(\rho^{\otimes N}\middle|\middle|\sigma_N\right)=0$ due to Property~\ref{p4}; thus, the equivalence relation~\eqref{eq:minimization_limit_regularization} holds trivially.
    In the following, we will consider the other case, i.e.,
    \begin{equation}
    \label{eq:rho_non_free}
    \rho\not\in\mathcal{F}\qty(\mathcal{H});
    \end{equation}
    in this case, due to Property~\ref{p3}, it also holds that
    $\rho^{\otimes N}\not\in\mathcal{F}\qty(\mathcal{H}^{\otimes N})$.

    For each $N$, due to $\tilde{\sigma}_N\in\mathcal{F}^{(\epsilon_N)}\qty(\mathcal{H}^{\otimes N})$, there exists a free state $\sigma\in\mathcal{F}\qty(\mathcal{H}^{\otimes N})$ such that 
    \begin{equation}
    \label{eq:distance_upper_bound}
        \|\tilde{\sigma}_N-\sigma\|_1\leq \epsilon_N.
    \end{equation}
    Under the condition~\eqref{eq:rho_non_free},
    we will show
    \begin{equation}
    \label{eq:distance_lower_bound}
        \tilde{\sigma}_N\neq \sigma.
    \end{equation}
    To this goal, we will show that there exists an asymptotically free state
    \begin{equation}
    \label{eq:tilde_sigma_asymptotically_free}
        \tilde{\sigma} \in \mathcal{F}^{(\epsilon_N)}\qty(\mathcal{H}^{\otimes N})
    \end{equation}
    such that 
    \begin{equation}
        \label{eq:relative_entropy_strict_ineq}
        D\left(\rho^{\otimes N}\middle|\middle|\tilde{\sigma}\right) < D\left(\rho^{\otimes N}\middle|\middle|\sigma_N\right), 
    \end{equation} 
    which implies
    \begin{equation}
        D\left(\rho^{\otimes N}\middle|\middle|\tilde{\sigma}_N\right) \leq D\left(\rho^{\otimes N}\middle|\middle|\tilde{\sigma}\right) < D\left(\rho^{\otimes N}\middle|\middle|\sigma_N\right) \leq D\left(\rho^{\otimes N}\middle|\middle|\sigma\right), 
    \end{equation}
    and thus~\eqref{eq:tilde_sigma_asymptotically_free}.
    We here construct $\tilde{\sigma}$ as
    \begin{equation}
    \label{eq:tilde_sigma_construction}
        \tilde{\sigma} \coloneqq \frac{\sigma_N + \frac{\epsilon_N}{2}\rho^{\otimes N}}{1 + \frac{\epsilon_N}{2}}. 
    \end{equation}
    For this $\tilde{\sigma}$, we have~\eqref{eq:tilde_sigma_asymptotically_free} since it holds that
    \begin{equation}
        \left\|\tilde{\sigma} - \sigma_N \right\|_1 \leq \left\|\frac{\frac{\epsilon_N}{2}}{1 + \frac{\epsilon_N}{2}} \sigma_N\right\|_1 + \left\|\frac{\frac{\epsilon_N}{2}}{1 + \frac{\epsilon_N}{2}} \rho^{\otimes N} \right\|_1 \leq \epsilon_N,
    \end{equation}
    and $\sigma_N\in\mathcal{F}\qty(\mathcal{H}^{\otimes N})$.
    It remains to show that $\tilde{\sigma}$ in~\eqref{eq:tilde_sigma_construction} satisfies~\eqref{eq:relative_entropy_strict_ineq}. 
    To show this, using the joint convexity of the quantum relative entropy~\cite{N4}, we obtain
    \begin{align}
        \label{eq:relative_entropy_tilde_sigma_1}
        D\left(\rho^{\otimes N}\middle|\middle|\tilde{\sigma}\right)
        = D\left(\rho^{\otimes N}\middle|\middle|\frac{\sigma_N + \frac{\epsilon_N}{2}\rho^{\otimes N}}{1 + \frac{\epsilon_N}{2}}\right) \leq 
        \frac{1}{1 + \frac{\epsilon_N}{2}} D\left(\rho^{\otimes N}\middle|\middle|\sigma_N\right) +  \frac{\frac{\epsilon_N}{2}}{1 + \frac{\epsilon_N}{2}} D\left(\rho^{\otimes N}\middle|\middle|\rho^{\otimes N}\right).  
    \end{align}
    On the one hand, 
    the first term of the right-hand side of~\eqref{eq:relative_entropy_tilde_sigma_1} can be bounded by
     \begin{equation}
        \label{eq:relative_entropy_tilde_sigma_2}
        \frac{1}{1 + \frac{\epsilon_N}{2}} D\left(\rho^{\otimes N}\middle|\middle|\sigma_N\right) < D\left(\rho^{\otimes N}\middle|\middle|\sigma_N\right),
    \end{equation}
    where we use $\epsilon_N > 0$ and $D\left(\rho^{\otimes N}\middle|\middle|\sigma_N\right) > 0$ due to~\eqref{eq:rho_non_free}; 
    on the other hand, the second term of the right-hand side of~\eqref{eq:relative_entropy_tilde_sigma_1} is
    \begin{equation}
        \label{eq:relative_entropy_tilde_sigma_3}
        \frac{\frac{\epsilon_N}{2}}{1 + \frac{\epsilon_N}{2}}D\left(\rho^{\otimes N}\middle|\middle|\rho^{\otimes N}\right) = 0
    \end{equation}
    by definition of the quantum relative entropy in~\eqref{eq:relative_entropy}. 
    Combining~\eqref{eq:relative_entropy_tilde_sigma_1},~\eqref{eq:relative_entropy_tilde_sigma_2}, and~\eqref{eq:relative_entropy_tilde_sigma_3}, we obtain~\eqref{eq:relative_entropy_strict_ineq}, which yields~\eqref{eq:distance_lower_bound}.
    As a whole, under the condition~\eqref{eq:rho_non_free}, it follows from~\eqref{eq:distance_upper_bound} and~\eqref{eq:distance_lower_bound} that
    \begin{equation}
    \label{eq:distance_bound}
        0<\|\tilde{\sigma}_N-\sigma\|_1\leq \epsilon_N.
    \end{equation}

    Due to~\eqref{eq:distance_bound}, using Lemma~\ref{lem:operator_inequality_distance},
    we obtain an operator inequality
    \begin{equation}
    \label{eq:inequality_tilde_sigma_sigma_epsilon_N_Delta}
        \tilde{\sigma}_N \leq \sigma + \epsilon_N \Delta
    \end{equation}
    with a state
    \begin{align}
        \Delta &\coloneqq \frac{\qty(\tilde{\sigma}_N - \sigma)_+}{\Tr\qty[\qty(\tilde{\sigma}_N - \sigma)_+]}.
    \end{align}
    Then, applying a similar argument to the proof of Lemma~2 in Ref.~\cite{Winter2016} to this operator inequality,
    we obtain a state 
    \begin{equation}
        \omega\coloneqq\frac{1}{1+\epsilon_N}\sigma+\frac{\epsilon_N}{1+\epsilon_N}\Delta=\frac{1}{1+\epsilon_N}\tilde{\sigma}_N+\frac{\epsilon_N}{1+\epsilon_N}\tilde{\Delta} 
    \end{equation}
    with a state $\tilde{\Delta}$ defined as 
    \begin{align}
        \tilde{\Delta} &\coloneqq \Delta + \frac{1}{\epsilon_N}\qty(\sigma - \tilde{\sigma}_N). 
    \end{align}
    Note that $\tilde{\Delta}$ is a state since
    \begin{align}
        \Tr\qty[\tilde{\Delta}]&=\Tr\qty[\Delta]+\frac{1}{\epsilon_N}\Tr\qty[\sigma-\tilde{\sigma}_N]=1,\\
        \tilde{\Delta}&=\frac{\epsilon_N\Delta+\sigma - \tilde{\sigma}_N}{\epsilon_N}\geq 0,
    \end{align}
    where the last operator inequality follows from~\eqref{eq:inequality_tilde_sigma_sigma_epsilon_N_Delta}. 
    We further introduce
    \begin{align}
        \omega^\prime&\coloneqq\frac{1}{1+2\epsilon_N}\sigma+\frac{\epsilon_N}{1+2\epsilon_N}\Delta+\frac{\epsilon_N}{1+2\epsilon_N}\sigma_{\mathrm{full}}^{\otimes N}=\frac{1}{1+2\epsilon_N}\tilde{\sigma}_N+\frac{\epsilon_N}{1+2\epsilon_N}\tilde{\Delta}+\frac{\epsilon_N}{1+2\epsilon_N}\sigma_{\mathrm{full}}^{\otimes N},\\
        \sigma^\prime&\coloneqq\frac{1}{1+2\epsilon_N}\sigma+\frac{2\epsilon_N}{1+2\epsilon_N}\sigma_{\mathrm{full}}^{\otimes N}\in\mathcal{F}\qty(\mathcal{H}^{\otimes N}), 
    \end{align}
    where $\sigma_{\mathrm{full}}\in\mathcal{F}\qty(\mathcal{H})$ is a full-rank free state due to Property~\ref{p2} with the nonzero minimum eigenvalue
    \begin{equation}
    \label{eq:lambda_min_condition}
        \lambda_{\min}(\sigma_{\mathrm{full}})>0,
    \end{equation}
    $\sigma_{\mathrm{full}}^{\otimes N}\in\mathcal{F}\qty(\mathcal{H}^{\otimes N})$ due to Property~\ref{p4}, and $\sigma^\prime$ is a free state due to Property~\ref{p1}.
    Due to~\eqref{eq:lambda_min_condition}, we fix a positive constant $\lambda_{\min}\in(0,1)$ satisfying
    \begin{equation}
    \lambda_{\min}\leq\lambda_{\min}(\sigma_{\mathrm{full}}).
    \end{equation}
    By definition, we have
    \begin{equation}
    \label{eq:distance_omega_sigma}
        \|\omega^\prime-\sigma^\prime\|_1\leq\left\|\frac{\epsilon_N}{1+2\epsilon_N}\Delta\right\|_1+\left\|\frac{\epsilon_N}{1+2\epsilon_N}\sigma_{\mathrm{full}}^{\otimes N}\right\|_1 \leq2\epsilon_N.
    \end{equation}  
    
    We bound $D\left(\rho^{\otimes N}\middle|\middle|\omega^{\prime}\right)$ from below and above.
    On the one hand, due to
    \begin{align}
        \lambda_{\min}(\omega^\prime)&\geq \frac{\epsilon_N}{(1+2\epsilon_N)}\lambda_{\min}^N,\\
        \lambda_{\min}(\sigma^\prime)&\geq \frac{2\epsilon_N}{(1+2\epsilon_N)}\lambda_{\min}^N\geq \frac{\epsilon_N}{(1+2\epsilon_N)}\lambda_{\min}^N,
    \end{align}
    by taking $\tilde{m} = \tfrac{\epsilon_N}{(1+2\epsilon_N)}\lambda_{\min}^N\in(0,1)$ in Corollary~\ref{cor:continuity_bound_relative_entropy_revised}, we obtain from~\eqref{eq:distance_omega_sigma}
    \begin{align}
        \label{eq:relative_entropy_continuity}
        \left|D\left(\rho^{\otimes N}\middle|\middle|\omega^\prime\right)-D\left(\rho^{\otimes N}\middle|\middle|\sigma^\prime\right)\right|\leq\frac{3\log_2^2\qty(\frac{(1+2\epsilon_N)}{\epsilon_N\lambda_{\min}^N})}{1-\frac{\epsilon_N\lambda_{\min}^N}{2(1+2\epsilon_N)}}\sqrt{\epsilon_N}=\frac{3\qty(\log_2\qty(\frac{1+2\epsilon_N}{\epsilon_N})+N\log_2\qty(\frac{1}{\lambda_{\min}}))^2}{1-\frac{\epsilon_N\lambda_{\min}^N}{2(1+2\epsilon_N)}}\sqrt{\epsilon_N}.
    \end{align}
    Thus, it holds that
    \begin{align}
        D\left(\rho^{\otimes N}\middle|\middle|\omega^\prime\right)&
        \stackrel{\eqref{eq:relative_entropy_continuity}}{\geq} D\left(\rho^{\otimes N}\middle|\middle|\sigma^\prime\right)-\frac{3\qty(\log_2\qty(\frac{1+2\epsilon_N}{\epsilon_N})+N\log_2\qty(\frac{1}{\lambda_{\min}}))^2}{1-\frac{\epsilon_N\lambda_{\min}^N}{2(1+2\epsilon_N)}}\sqrt{\epsilon_N}\\
        &\geq D\left(\rho^{\otimes N}\middle|\middle|\sigma_N\right)-\frac{3\qty(\log_2\qty(\frac{1+2\epsilon_N}{\epsilon_N})+N\log_2\qty(\frac{1}{\lambda_{\min}}))^2}{1-\frac{\epsilon_N\lambda_{\min}^N}{2(1+2\epsilon_N)}}\sqrt{\epsilon_N}\\
    \label{eq:lower_sigma_prime}
        &\geq \frac{1}{1+2\epsilon_N}D\left(\rho^{\otimes N}\middle|\middle|\sigma_N\right)-\frac{3\qty(\log_2\qty(\frac{1+2\epsilon_N}{\epsilon_N})+N\log_2\qty(\frac{1}{\lambda_{\min}}))^2}{1-\frac{\epsilon_N\lambda_{\min}^N}{2(1+2\epsilon_N)}}\sqrt{\epsilon_N}, 
    \end{align}
    where the second inequality holds due to $\sigma^{\prime} \in \mathcal{F}(\mathcal{H}^{\otimes N})$ and the fact that $\sigma_N$ is an optimal free state for the relative entropy of resource of $\rho^{\otimes N}$, 
    and the third inequality follows from $1 + 2\epsilon_N \geq 1$ and the non-negativity of the quantum relative entropy. 
    
    On the other hand, due to the joint convexity of the quantum relative entropy~\cite{N4}, we have
    \begin{align}
        D\left(\rho^{\otimes N}\middle|\middle|\omega^{\prime}\right)\leq\frac{1}{1+2\epsilon_N}D\left(\rho^{\otimes N}\middle|\middle|\tilde{\sigma}_N\right)+\frac{2\epsilon_N}{1+2\epsilon_N}D\left(\rho^{\otimes N}\middle|\middle|\frac{1}{2}\Delta+\frac{1}{2}\sigma_{\mathrm{full}}^{\otimes N}\right).
    \end{align}
    Using Lemma~\ref{lem:relative_entropy_upper_bound} with
    \begin{align}
        \lambda_{\min}\qty(\frac{1}{2}\Delta+\frac{1}{2}\sigma_{\mathrm{full}}^{\otimes N})&\geq\frac{\lambda_{\min}^N}{2},
    \end{align}
    we have
    \begin{equation}
        D\left(\rho^{\otimes N}\middle|\middle|\frac{1}{2}\Delta+\frac{1}{2}\sigma_{\mathrm{full}}^{\otimes N}\right)\leq N\log_2\qty(\frac{1}{\lambda_{\min}})+1.
    \end{equation}
    Therefore, we obtain
    \begin{equation}
    \label{eq:upper_sigma_prime}
        D\left(\rho^{\otimes N}\middle|\middle|\omega^{\prime}\right)\leq\frac{1}{1+2\epsilon_N}D\left(\rho^{\otimes N}\middle|\middle|\tilde{\sigma}_N\right)+\frac{2\epsilon_N\qty(N\log_2\qty(\frac{1}{\lambda_{\min}})+1)}{1+2\epsilon_N}.
    \end{equation}
    As a whole, from~\eqref{eq:lower_sigma_prime} and~\eqref{eq:upper_sigma_prime}, it follows that
    \begin{align}
        D\left(\rho^{\otimes N}\middle|\middle|\tilde{\sigma}_N\right)&\geq D\left(\rho^{\otimes N}\middle|\middle|\sigma_N\right)-\frac{3\qty(\log_2\qty(\frac{1+2\epsilon_N}{\epsilon_N})+N\log_2\qty(\frac{1}{\lambda_{\min}}))^2}{1-\frac{\epsilon_N\lambda_{\min}^N}{2(1+2\epsilon_N)}}\sqrt{\epsilon_N}\qty(1+2\epsilon_N)-2\epsilon_N\qty(
        N\log_2\qty(\frac{1}{\lambda_{\min}})+1
        )\\
    \label{eq:lower_sigma_N}
        &=D\left(\rho^{\otimes N}\middle|\middle|\sigma_N\right)-O\qty(\qty(\log_2\qty(\frac{1}{\epsilon_N})+N)^2\sqrt{\epsilon_N})\quad\text{as $N\to\infty$}.
    \end{align}

    Consequently, from~\eqref{eq:upper_sigma_N} and~\eqref{eq:lower_sigma_N}, we obtain
    \begin{align}
    \label{eq:sandwitch1}
        &\frac{D\left(\rho^{\otimes N}\middle|\middle|\sigma_N\right)
        }{N}-O\qty(\frac{\qty(\log_2\qty(\frac{1}{\epsilon_N})+N)^2\sqrt{\epsilon_N}}{N})
        \leq \frac{D\left(\rho^{\otimes N}\middle|\middle|\tilde{\sigma}_N\right)}{N}
        \leq \frac{D\left(\rho^{\otimes N}\middle|\middle|\sigma_N\right)}{N}.
    \end{align}
    Under the assumption of
    \begin{equation}
        \epsilon_N=o(1/N^2),
    \end{equation}
    it holds that
    \begin{equation}
        \lim_{N\to\infty}\frac{\qty(\log_2\qty(\frac{1}{\epsilon_N})+N)^2\sqrt{\epsilon_N}}{N}=0,
    \end{equation}
    where we use $\lim_{\epsilon\to +0}\epsilon^{c_1}\log^{c_2}\qty(1/\epsilon)=0$ for any $c_1,c_2>0$.
    Therefore, by taking the limit of $N\to\infty$ in~\eqref{eq:sandwitch1},
    we see that the limit
    \begin{equation}
        \lim_{N\to\infty}\min_{\tilde{\sigma}\in\mathcal{F}^{(\epsilon_N)}\qty(\mathcal{H}^{\otimes N})}\qty{\frac{D\left(\rho^{\otimes N}\middle|\middle|\tilde{\sigma}\right)}{N}}=\limsup_{N\to\infty}\min_{\tilde{\sigma}\in\mathcal{F}^{(\epsilon_N)}\qty(\mathcal{H}^{\otimes N})}\qty{\frac{D\left(\rho^{\otimes N}\middle|\middle|\tilde{\sigma}\right)}{N}}=\liminf_{N\to\infty}\min_{\tilde{\sigma}\in\mathcal{F}^{(\epsilon_N)}\qty(\mathcal{H}^{\otimes N})}\qty{\frac{D\left(\rho^{\otimes N}\middle|\middle|\tilde{\sigma}\right)}{N}}
    \end{equation}
    exists and coincides with
    \begin{equation}
        \lim_{N\to\infty}\min_{\tilde{\sigma}\in\mathcal{F}^{(\epsilon_N)}\qty(\mathcal{H}^{\otimes N})}\qty{\frac{D\left(\rho^{\otimes N}\middle|\middle|\tilde{\sigma}\right)}{N}}=\lim_{N\to\infty}\frac{D\left(\rho^{\otimes N}\middle|\middle|\tilde{\sigma}_N\right)}{N}=\lim_{N\to\infty}\frac{D\left(\rho^{\otimes N}\middle|\middle|\sigma_N\right)}{N}=\lim_{N\to\infty}\min_{\sigma\in\mathcal{F}\qty(\mathcal{H}^{\otimes N})}\qty{\frac{D\left(\rho^{\otimes N}\middle|\middle|\sigma\right)}{N}},
    \end{equation}
    which yields the conclusion.
\end{proof}

With the characterization of the quantum relative entropy shown in Proposition~\ref{prp:third_step} and the equivalence relation in Proposition~\ref{lem:limit_regularization}, 
we finally prove Theorem~\ref{thm:direct_part}. 
We recall the statement of the theorem below. 

\direct*

\begin{proof}
    For each $N\in\mathbb{N}$, we set $M(N)$ and $R(N)$ as any non-negative integers satisfying
\begin{align}
    &N-M(N)\geq2R(N),\\
    \label{eq:overall_NMR_1}
    &N-M(N)-R(N)\to\infty,\\
    \label{eq:overall_NMR_2}
    &\frac{M(N)R(N)}{N}\to\infty,\\
    \label{eq:overall_NMR_3}
    &\frac{M(N)}{N}\to0,\\
    \label{eq:overall_NMR_4}
    &\frac{R(N)}{N}\to0,
\end{align}
as $N\to\infty$.

Due to~\eqref{eq:minimization}, there exists a constant $\mu \in (0,1)$ such that 
\begin{equation}\liminf_{N\to\infty}\min_{\sigma\in\mathcal{F}\qty(\mathcal{H}^{\otimes N})}\qty{\Tr\qty[\qty(\rho^{\otimes N}-2^{yN}\sigma)_+]}=1-\mu \in (0,1). 
\end{equation}
Then, Proposition~\ref{prp:third_step} provides a sequence $\{\mu_N>0\}_{N\in\mathbb{N}}$ converging to
\begin{equation}
\label{eq:overall_mu_N}
    \lim_{N\to\infty}\mu_N=\mu,
\end{equation}
and it holds that
\begin{equation}
    \label{eq:overall_y}
    y\geq \limsup_{N\to\infty}\min_{\tilde{\sigma}\in\mathcal{F}^{(\epsilon_N)}\qty(\mathcal{H}^{\otimes N - M - R})}\qty{\frac{D\left(\rho^{\otimes N-M-R}\middle|\middle|\sigma\right)}{N - M - R}}, 
\end{equation}
where $\mathcal{F}^{(\epsilon_N)}\qty(\mathcal{H}^{\otimes N})$ is the set of asymptotically free states in Definition~\ref{def:asym_free_state} with $\{\epsilon_N > 0\}_{N\in\mathbb{N}}$ given by
\begin{equation}
\epsilon_N=\frac{2\mu_N^3}{2^{yN}}\qty(\frac{2\sqrt{2}}{\mu_N\mathrm{e}^{\frac{MR}{2N}}}+\frac{2\sqrt{2R}}{N}).
\end{equation}
Therefore, under our assumption of
\begin{equation}
    y > 0, 
\end{equation}
it follows from~\eqref{eq:overall_NMR_1},~\eqref{eq:overall_NMR_2},~\eqref{eq:overall_NMR_3},~\eqref{eq:overall_NMR_4}, and~\eqref{eq:overall_mu_N} that
\begin{equation}
    \epsilon_N=O\qty(1/2^{yN})=o(1/N^2) = o(1/(N-M-R)^2)\quad\text{as $N\to\infty$};
\end{equation}
thus, using Proposition~\ref{lem:limit_regularization}, we obtain from~\eqref{eq:overall_y}
\begin{align}
    \lim_{N\to\infty}\min_{\sigma\in\mathcal{F}(\mathcal{H}^{\otimes N})}\qty{\frac{D\left(\rho^{\otimes N}\middle|\middle|\sigma\right)}{N}}&=\lim_{N\to\infty}\min_{\sigma\in\mathcal{F}(\mathcal{H}^{\otimes N-M-R})}\qty{\frac{D\left(\rho^{\otimes N-M-R}\middle|\middle|\sigma\right)}{N-M-R}}\\
    \label{eq:relative_entropy_equivalence}
    &\stackrel{\text{Proposition~\ref{lem:limit_regularization}}}{=}\limsup_{N\to\infty}\min_{\tilde{\sigma}\in\mathcal{F}^{(\epsilon_N)}(\mathcal{H}^{\otimes N-M-R})}\qty{\frac{D\left(\rho^{\otimes N-M-R}\middle|\middle|\tilde{\sigma}\right)}{N-M-R}}\\
    &\stackrel{\eqref{eq:overall_y}}{\leq} y,
\end{align}
which yields the conclusion.
\end{proof}

\bibliographystyle{apsrmp4-2}
\bibliography{citation}
\end{document}